\numberwithin{equation}{section}
\newtheorem{theorem}{Theorem}[section]
\newtheorem{lemma}[theorem]{Lemma}
\newtheorem{proposition}[theorem]{Proposition}
\newtheorem{conjecture}[theorem]{Conjecture}
\theoremstyle{definition}
\newtheorem{example}[theorem]{Example}
\newtheorem{remark}[theorem]{Remark}
\begin{document}

\parskip 4pt
\baselineskip 16pt


\title[Volterra map and related recurrences]
{Volterra map and related recurrences}

\author[Andrei K. Svinin]{Andrei K. Svinin}
\address{Matrosov Institute for System Dynamics and Control Theory of 
Siberian Branch of Russian Academy of Sciences,
P.O. Box 292, 664033 Irkutsk, Russia}
\email{svinin@icc.ru}

\date{\today}

\keywords{Volterra map, continued fraction, Somos-5, Gale-Robinson recurrence, Laurent property}


\begin{abstract}
In this paper we analyze recent work \cite{Hone1} by Hone,  Roberts and  Vanhaecke, where the so-called  Volterra map was introduced via the Lax equation that looks  similar to the Lax representation for the Mumford's system \cite{Vanhaecke}.   This map turns out to be birational and a corresponding dynamical system on an affine space $M_g$ of dimension $3g+1$ was associated with it. This mapping  is related to some discrete equation of  the order $2g+1$ associated with the Stieltjes continued fraction expansion of a certain function on a hyperelliptic (elliptic) curve of genus $g\geq 1$.   The authors of the paper provides examples of this equation for the simplest cases $g=1$ and $g=2$, but for higher values of $g$, corresponding equation turns out to be too cumbersome to write them out.

We present an approach in which the mentioned $(2g+1)$-order equation can be written out for all values of $g\geq 1$ in a compact form. This equation is not new and can be found, for example, in \cite{Svinin3}. An essential point in our framework is the use of special class of discrete polynomials which as shown to be closely related to the Stieltjes continued fraction. On the one hand, this allows us to generalize some of the results of the work \cite{Hone1}.  On the other hand, many things in this approach can be presented in a more compact and unified form. Ultimately, we believe that this allows us to give a new perspective on this topic.
\end{abstract}

\maketitle

\section{Introduction}

This work was inspired by a recent work \cite{Hone1}, where the authors introduced the so-called  Volterra map. More precisely, it  is understood as an infinite family of mappings parametrized by an integer $g\geq 1$. It is most conveniently defined using the Lax equation,   which is a certain matrix relation $L M=M \tilde{L}$,  containing a triple of polynomials $\left(P(x), Q(x), R(x)\right)$ in the formal variable $x$. It should be noted that a Lax equation with some parameter  is common in the theory of integrable equations since it gives some conserved quantities for the corresponding continuous or discrete map.  The authors of the paper showed that the Volterra map is birational and therefore can be iterated in both directions. As a result of an infinite number of iterations, it  arises a  discrete dynamical system  on the affine space $M_g$ of dimension $3g+1$. It should be noted that definition of the Volterra map using the traceless matrix $L(x)$ is very reminiscent of Mumford's design \cite{Mumford}, \cite{Vanhaecke}. 

The original motivation for the work \cite{Hone1}, as the authors note, was to study a recurrence
\begin{eqnarray}
&&u_nu_{n+1}u_{n+2}+u_{n+2}u_{n+3}u_{n+4}+2u_{n+2}^2\left(u_{n+1}+u_{n+3}\right) \nonumber \\
&&\;\;\;\;\;+u_{n+2}\left(u_{n+1}^2+u_{n+1}u_{n+3}+u_{n+3}^2\right)+u_{n+2}^3 - H_1 u_{n+2}\left(u_{n+1}+u_{n+2}+u_{n+3}\right) \nonumber \\
&&\;\;\;\;\;+H_2 u_{n+2}-\mathcal{G}_{2, 0}=0
\label{675230976}
\end{eqnarray} 
appeared as a result of some classification of the integrable maps in   \cite{Gubbiotti}. This is one of the equations associated with the four-dimensional birational map $\mathbb{C}^4 \mapsto \mathbb{C}^4$ with two independent invariants
\[
\mathcal{G}_{2, 1}= -u_{1}u_2\left(u_0u_1+u_2u_3+(u_1+u_2)^2-u_0u_3-H_1 (u_1+u_2)+H_2\right)+\mathcal{G}_{2, 0} \left(u_1+u_2\right)
\]  
and
\begin{eqnarray*}
\mathcal{G}_{2, 2}&=& -u_{1}u_2\left((u_0+u_1+u_2+u_3)(u_0u_1+u_2u_3+(u_1+u_2)^2) \right. \\
&& -H_1(u_0+u_1+u_2)(u_1+u_2+u_3) \\
&&\left. +H_2 (u_0+u_1+u_2+u_3)\right)+\mathcal{G}_{2, 0}\left(u_0u_1+u_2u_3+(u_1+u_2)^2\right).
\end{eqnarray*}
For brevity of writing, we sometimes use the notation $u_0, u_1,\ldots$ instead of $u_n, u_{n+1},\ldots$
\begin{remark}
The authors of the work \cite{Hone1} have noted that  recurrence (\ref{675230976}) is generated by the mapping denoted by $\left(\mathrm{P.iv}\right)$  in  \cite{Gubbiotti}. 
It represents one of the examples of four-dimensional birational mappings with prescribed properties.   
It is convenient for us to use our own notations dictated by the logic of the work from the very beginning. 
For comparison with  \cite{Hone1}, we note that  $H_1=-\nu$,  $H_2=b$  and $\mathcal{G}_{2, 0}=-a$.
Compared to the paper \cite{Hone1} we have $\mathcal{G}_{2, 1}=-K_1$ and $\mathcal{G}_{2, 2}=-K_2$. Note that the invariants $K_1$ and $K_2$ are given by  formulas (1.3) and (1.4) in  \cite{Hone1}, respectively. 
\end{remark}

Looking ahead a little, we  notice right away that the equation (\ref{675230976}) can be rewritten in a compact form
\[
S^3_3(n)-H_1 S^2_2(n+1) + H_2 S^1_1(n+2) -\mathcal{G}_{2, 0}=0
\]
or better yet, like 
\begin{equation}
\mathcal{G}_{2, 0}=u_{n+2}\left(S^2_4(n)-H_1 S^1_3(n+1) + H_2\right),
\label{983210987}
\end{equation}
where $S^k_s(n)$, for different values of $k, s\geq 1$, denote some  homogeneous polynomial in $(u_n, u_{n+1},\ldots)$.  These polynomials  will be explicitly defined in the Section \ref{873289000}. 
It looks highly likely that the fact that equation (\ref{675230976}) can be written as (\ref{983210987}) is not accidental and that there is something hidden behind this. In fact  the recurrence (\ref{675230976}) is essentially  known equation closely related to the differential-difference equation
\begin{equation}
\frac{d u_n}{d t}=u_n\left(u_{n+1}-u_{n-1}\right)
\label{983999837}
\end{equation}
known as the Volterra lattice  \cite{Kac}, \cite{Manakov}. It is also well known that  (\ref{983999837}) represents one of the  discretizations of the Korteweg-de Vries equation.

A characteristic property of the equation (\ref{983999837}) is the presence of a hierarchy  of pair-wise commuting generalized symmetries  that can be written, together with  the original equation (\ref{983999837}), in explicit form \cite{Svinin1}
\begin{equation}
\frac{d u_n}{d t_s}=u_n\left(S^s_s(n-s+2)-S^s_s(n-s)\right),\;\; s\geq 1,
\label{977657}
\end{equation}
as one sees, again using mentioned discrete polynomials $S^k_s(n)$. For clarity, let us write down the first few members of a sequence $(S^s_s(n))$ that define a hierarchy of the flows (\ref{977657}). They are 
\[
S^1_1(n)=u_n,\;\; S^2_2(n)=u_{n+1}\left(u_n+u_{n+1}\right)+u_{n+2}u_{n+1},
\]
\begin{eqnarray*}
S^3_3(n)&=&u_{n+2}\left(u_{n+1}(u_n+u_{n+1}+u_{n+2})+u_{n+2}(u_{n+1}+u_{n+2})+u_{n+3}u_{n+2}\right) \\
&&+u_{n+3}\left(u_{n+2}(u_{n+1}+u_{n+2})+u_{n+3}u_{n+2}\right)\\
&&+u_{n+4}u_{n+3}u_{n+2},\ldots
\end{eqnarray*}
More precisely, one can say that the relation (\ref{675230976}) is a constraint compatible  with any  flow determined by the evolutionary equation (\ref{977657}).

It turns out, as shown in  \cite{Hone1}, that the equation (\ref{675230976}) is birationally equivalent to some system of discrete equations of the dynamical system generated by the Volterra map $\mathcal{V}_g$, for the case $g=2$. 
The main goal of our work is to show such an  equation which is, on the one hand, birationally equivalent to the equations of the dynamical system associated with the Volterra map for any $g\geq 1$, and on the other hand is compatible with the flows of the integrable hierarchy of the Volterra lattice (\ref{977657}). 

The paper is organized as follows.  In the next section, we define, in the same way as in  \cite{Hone1}, the  Volterra map and corresponding dynamical system. We also discuss in this section how the Stieltjes continued fraction is related to the Volterra map. Next, in  Section \ref{873289}, we do some preliminary calculations in order to show the usefulness of  a certain class of homogeneous discrete polynomials $S^k_s(n)$. Namely, taking the simplest examples $g=1$ and  $g=2$, we look at how the coordinates of the phase space $M_g$ are expressed through the terms of a sequence $(u_n)$ satisfying some corresponding  recurrence. In  Section \ref{873289000}, we  define these polynomials and write out all the identities that which are needed in the future. In fact, the purpose of this section is more serious. Namely, here we propose some approach that allows us to represent in an explicit compact form the discrete equation of order $2g+1$ associated with the Volterra map $\mathcal{V}_g$. In section \ref{6752109654} we immediately write out this equation and analyze its invariants.  It turns out that this equation, or more precisely class of equations, is not new and  has been appeared in \cite{Svinin3} as a constraint consistent with the Volterra lattice (\ref{983999837}).  Section \ref{99832098} is devoted to homogeneous equations connected by some substitution with the mentioned equation of the order $2g+1$. For example, as shown in work \cite{Hone1}, in the case  $g=1$, this is the well-known Somos-5 recurrence. The most characteristic property of this equation, related to the integerness of some sequences, is the Laurent one. We show a homogeneous equation which is equivalent, by construction, to Somos-5, which also has the Laurent property, but which has its own peculiarities. As an example, we also write out homogeneous equations for cases $g=2$ and $g=3$. At the end we write out some examples of integer sequences generated by these homogeneous recurrences.

\section{Volterra map}

\subsection{Definition}

Let us consider a triple of polynomials
\begin{equation}
P(x)=1+\sum_{j=1}^g p_j x^j,\;\; Q(x)=2+\sum_{j=1}^g q_j x^j,\;\; R(x)=\sum_{j=1}^{g+1} r_j x^j
\label{87324}
\end{equation}
with coefficients belonging to some  field, say $\mathbb{C}$, and corresponding traceless Lax matrix
\[
L(x)=
\left(
\begin{array}{cc}
P(x) & R(x) \\
Q(x) & -P(x) 
\end{array}
\right).
\]
Let $M_g\cong\mathbb{C}^{3g+1}$ to be an affine space of dimension $3g+1$ whose points are determined by the coordinates $\xi=\left(p_{1},\ldots, p_{g}, q_{1},\ldots, q_{g},  r_{1},\ldots, r_{g+1}\right)$.  The first thing we would like to do is to define, as in  \cite{Hone1}, some mapping $\mathcal{V}_g : M_g\mapsto M_g$, which  turns out to be  birational. As in the paper \cite{Hone1}, we  call it the Volterra map. Perhaps the most convenient way to do this is the Lax equation representation. Let us define a linear combination
\begin{equation}
u=p_{1}-\frac{q_{1}}{2}+\frac{r_{1}}{2}
\label{897650003}
\end{equation}
and  write down the  matrix relation 
\begin{equation}
L(x)M(x)=M(x)\tilde{L}(x)
\label{8999876503}
\end{equation}
with 
\[
M(x)=
\left(
\begin{array}{cc}
1 & u x \\
1 & 0 
\end{array}
\right).
\]
As one can  see, (\ref{8999876503}) implies three relations
\begin{equation}
\tilde{P}(x)=Q(x)-P(x),\;\; \tilde{Q}(x)=\frac{2 P(x)-Q(x)+R(x)}{u x},\;\;  \tilde{R}(x)=u x Q(x).
\label{89769803}
\end{equation}
As in \cite{Hone1}, we call the mapping $(P, Q, R)\mapsto (\tilde{P}, \tilde{Q}, \tilde{R})$  the Volterra map and denote it by $\mathcal{V}_g$. Of course, we  immediately pay attention to the fact that if it turns out that $u$ given by (\ref{897650003}) is zero, then  the Volterra map is not defined.  Let us assume that for given triple $(P, Q, R)$ one has  $u\neq 0$. Then it is easy to verify that in this case three relations in (\ref{89769803}) are correctly defined and therefore indeed define some mapping.

Evidently, an inverse mapping to (\ref{89769803}) is represented by the formulas
\begin{equation}
P(x)=\frac{\tilde{R}(x)}{u x}-\tilde{P}(x),\;\; Q(x)=\frac{\tilde{R}(x)}{u x},\;\; R(x)=u x \tilde{Q}(x)-\frac{\tilde{R}(x)}{u x}+2 \tilde{P}(x).
\label{890003}
\end{equation}
Since $u=\tilde{r}_1/2$, we see that (\ref{890003}) is well defined and we therefore conclude that the Volterra mapping   is indeed birational. 
Let us emphasize once again that the Volterra map is defined only if $u\neq 0$. In this case, the triple $(P, Q, R)$ is said to be regular. Well, let this triple be regular, but we cannot be sure that triple $(\tilde{P}, \tilde{Q}, \tilde{R})$ is also regular. To our knowledge, there is no known condition for given triple $(P, Q, R)$ that would guarantee the regularity of all triples obtained by repeated  iterations of the Volterra map.

\subsection{Invariants}

Given a triple of polynomials $(P, Q, R)$, let us define a polynomial
\begin{equation}
f(x)=-\det L(x) = P^2(x)+Q(x)R(x)=1+\sum_{j=1}^{2g+1}c_j x^j.
\label{8976753203}
\end{equation}
We  use the same notation $c_k$ for the invariant and its numerical value in the hope that this will not lead to misunderstanding. 

From the Lax equation (\ref{8999876503}), it immediately follows that the finite set of coefficients $\left(c_1,\ldots, c_{2g+1}\right)$ represents  invariants for the Volterra map.   Taking into account (\ref{8976753203}), we may  write
\begin{equation}
c_k=\sum_{j=0}^k p_{j}p_{k-j}+\sum_{j=0}^{k-1} q_{j}r_{k-j}
\label{8973}
\end{equation}
with understanding that $p_{0}=1,\; q_{0}=2$ and  $p_{j}=q_{j}=r_{j+1}=0$ for all $j\geq g+1$.

\subsection{Stieltjes continued fraction}

Let us consider algebraic function $y=y(x)$ defined by the relation
\begin{equation}
y^2=f(x),
\label{89000000983}
\end{equation}
where the polynomial $f(x)$ is given by (\ref{8976753203}). We choose the sign of the solution of (\ref{89000000983}) so that the  series expansion for the function $y(x)$ has the form 
$y=1+\sum_{j\geq 1} H_j x^j$, where the coefficients $H_j$ are defined to be some polynomials in coordinates  of the point of the phase-space $M_g$. It should be  noted that, due to  (\ref{89000000983}), the  $H_k$'s  are also can be considered as polynomials in  the coefficients $c_k$. 
For example, 
\begin{equation}
H_1=\frac{c_1}{2},\;\; H_2=\frac{c_2}{2}-\frac{c_1^2}{8},\;\; H_3=\frac{c_3}{2}-\frac{c_1 c_2}{4}+\frac{c_1^3}{16},\ldots
\label{00985463}
\end{equation}

The proof of the following lemma can be found in \cite{Hone1}.
\begin{lemma}
The Volterra map $\mathcal{V}_g$ given by the Lax equation (\ref{8999876503}) implies the relation 
\begin{equation}
\left(y+P\right)\left(y+\tilde{P}\right)=Q \left(y+\tilde{P}\right)+u x Q \tilde{Q},
\label{890976543}
\end{equation}
where $u$ is given by (\ref{897650003}).
\end{lemma}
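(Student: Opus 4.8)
The plan is to start from the three scalar relations in (\ref{89769803}) coming from the Lax equation and eliminate everything in favour of $y$, $P$, $\tilde P$, $Q$ and $\tilde Q$. First I would record the defining identity $y^2=f(x)=P^2+QR$ and the analogous one $y^2=\tilde P^2+\tilde Q\tilde R$ for the image triple; the fact that the $c_k$ are invariants (established just above) guarantees that the \emph{same} $y$ satisfies both, which is what makes the statement meaningful. Then I would substitute $\tilde R=uxQ$ from (\ref{89769803}) into $y^2=\tilde P^2+\tilde Q\tilde R$ to get $y^2-\tilde P^2=uxQ\tilde Q$, i.e. $(y-\tilde P)(y+\tilde P)=uxQ\tilde Q$. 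This already isolates the genuinely new term $uxQ\tilde Q$ on the right-hand side of (\ref{890976543}).

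Next I would bring in the first relation $\tilde P=Q-P$, equivalently $P=Q-\tilde P$, and use $y^2-P^2=QR$ together with $y^2-\tilde P^2=uxQ\tilde Q$ to express both $R$ and the products in terms of $Q$, $P$, $\tilde P$. The cleanest route is probably to write the claimed identity as
\[
(y+P)(y+\tilde P)-Q(y+\tilde P)=uxQ\tilde Q,
\]
factor the left side as $(y+\tilde P)\bigl((y+P)-Q\bigr)=(y+\tilde P)(y+P-Q)=(y+\tilde P)(y-\tilde P)=y^2-\tilde P^2$, using $P-Q=-\tilde P$ exactly once; and then invoke $y^2-\tilde P^2=uxQ\tilde Q$ from the previous paragraph. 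So really the proof collapses to two substitutions: $\tilde P=Q-P$ to collapse the left-hand side, and $\tilde R=uxQ$ inside $y^2=\tilde P^2+\tilde Q\tilde R$ to identify the right-hand side.

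I expect there will be essentially no obstacle here — the content is entirely in the invariance of $y$ (hence of $f$) under $\mathcal V_g$, which we are allowed to assume, plus the first and third formulas of (\ref{89769803}). The only point requiring a line of justification is why $y^2=\tilde P^2+\tilde Q\tilde R$ holds with the \emph{same} branch of $y$: this is because $\tilde f(x)=\tilde P^2+\tilde Q\tilde R$ has the same coefficients $c_k$ as $f(x)$, so $\tilde f=f$ as formal series, and the branch of $y$ was fixed by the normalisation $y=1+\sum_{j\ge1}H_jx^j$, which depends only on the $c_k$. One should also note in passing that the identity is purely formal in $x$ (an identity of Laurent series, since $\tilde Q$ has a pole at $x=0$ coming from the $1/(ux)$ in (\ref{89769803})), so no convergence issue arises; division by $ux$ is legitimate precisely because the triple is assumed regular, i.e. $u\neq0$.
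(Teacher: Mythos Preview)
Your argument is correct and clean: factor the left-hand side of the claimed identity minus $Q(y+\tilde P)$ as $(y+\tilde P)(y+P-Q)$, use $\tilde P=Q-P$ to turn this into $y^2-\tilde P^2$, and then identify that with $\tilde Q\tilde R=uxQ\tilde Q$ via the invariance of $f=-\det L$. The paper does not supply its own proof of this lemma but simply cites \cite{Hone1}, so there is nothing to compare against; your two-substitution route is exactly the natural one and would be the expected argument.

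One small correction to your closing remark: $\tilde Q$ does \emph{not} have a pole at $x=0$. The numerator $2P-Q+R$ vanishes at $x=0$ (since $2\cdot1-2+0=0$), and its linear coefficient is $2p_1-q_1+r_1=2u$ by (\ref{897650003}), so after dividing by $ux$ one gets $\tilde Q(0)=2$, as required for $\tilde Q$ to be a polynomial of the form (\ref{87324}). This is precisely the point of the definition of $u$, and the paper notes just after (\ref{89769803}) that the three relations are ``correctly defined''. So the identity (\ref{890976543}) is an identity of polynomials (or of formal power series once $y$ is expanded), not of genuine Laurent series; your parenthetical about poles should be dropped, though it does not affect the proof itself.
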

Let us rewrite  (\ref{890976543}), although in a more cumbersome, but on the other hand more meaningful form
\begin{equation}
\displaystyle
\frac{y+P}{Q}=1+\frac{u x}{\displaystyle\frac{y+\tilde{P}}{\tilde{Q}}}.
\label{8909999000965863}
\end{equation}
It is  clear that we can infinitely continue the fraction on the right side of the relation (\ref{8909999000965863}) to get
\begin{equation}
\frac{y+P}{Q}=1+\frac{u_0 x}{1+\displaystyle{\frac{u_1 x}{1+\displaystyle{\frac{u_2 x}{1+\cdots}}}}},
\label{89863}
\end{equation}
where $u_0=u,\; u_1=\tilde{u}, u_2=\tilde{\tilde{u}},\ldots$ As a result we get the Stieltjes continued fraction  representation \cite{Stieltjes} for the  series expansion of the rational function $\left(y+P\right)/Q$ defined on the curve $y^2=f(x)$ that follows from the characteristic equation $\det\left(L(x)-y\right)=0$. For example, we have 
\[
u_1=\frac{r_1+q_1}{2}-\frac{2p_2+r_2-q_2}{2p_1+r_1-q_1}.
\]
If $g=1$, we must put  $p_2=q_2=0$. Of course expressions for $u_2, u_3$ and so on are very cumbersome and there is no point in writing them out. It is only important to note that, for all $n\geq 0$, there exists unambiguous expression $u_n=R_n(\xi)$, with $R_n$ being some rational function of coordinates of a point in phase space $M_g$.

\subsection{Dynamical system}

Since the Volterra map is birational we are able to iterate it in both directions. By doing so, we get a bi-infinite sequence of points of the phase space $M_g$ with coordinates $\xi_n=\left(p_{n, 1},\ldots, p_{n, g}, q_{n, 1},\ldots, q_{n, g},  r_{n, 1},\ldots, r_{n, g+1}\right)$  uniquely determined by the Lax equation
\begin{equation}
L_n(x)M_n(x)=M_n(x)L_{n+1}(x)
\label{897653}
\end{equation}
with
\[
L_n(x)=
\left(
\begin{array}{cc}
P_n(x) & R_n(x) \\
Q_n(x) & -P_n(x) 
\end{array}
\right)\;\; \mbox{and}\;\;
M_n(x)= 
\left(
\begin{array}{cc}
1 & u_n x \\
1 & 0 
\end{array}
\right),
\]
where  
\begin{equation}
u_{n}=p_{n, 1}-\frac{q_{n, 1}}{2}+\frac{r_{n, 1}}{2}, 
\label{8899}
\end{equation} 
for all $n\in\mathbb{Z}$. Describing (\ref{897653}) in more detail, we get the relations
\[
P_{n+1}(x)=Q_{n}(x)-P_{n}(x),\;\; Q_{n+1}(x)=\frac{2 P_{n}(x)-Q_{n}(x)+R_{n}(x)}{u_{n} x},\;\; 
\]
\begin{equation} 
R_{n+1}(x)=u_{n} x Q_{n}(x).
\label{8877545}
\end{equation} 
Let us also rewrite, for convenience, the relations (\ref{8877545})  in terms of coordinates. We have
\begin{equation}
p_{n+1, k}=q_{n, k}-p_{n, k},\;\; q_{n+1, k}=\frac{2 p_{n, k+1}-q_{n, k+1}+r_{n, k+1}}{u_n},\;\; r_{n+1, k}= u_{n}q_{n, k-1}.
\label{880099875}
\end{equation}

In conclusion of this section let us  present the Theorem 3.8 from the work \cite{Hone1}. Let
\[
\frac{y+P(x)}{Q(x)}=1-\mathcal{H}(x),\;\; \mbox{where}\;\; \mathcal{H}(x)=\sum_{j\geq 1}(-1)^j h_j x^j.
\]
Substituting $y=-P+\left(1-\mathcal{H}\right) Q$ into the relation $y^2=P^2+QR$, we derive the generating relation
\[
2P-Q+R+2\left(Q-P\right)\mathcal{H}-Q \mathcal{H}^2=0.
\]
An analysis of this relationship leads to the  theorem that can be found in  \cite{Hone1}.  However, we formulate it a bit in our own way
\begin{theorem}
\label{897643098}
The generating relation (\ref{8973209}) implies an infinite number of equations
\[
h_n=\sum_{j=1}^g (-1)^j \left(p_j-q_j\right) h_{n-j}+\frac{1}{2}\sum_{k=0}^g (-1)^k q_k \sum_{j=1}^{n-k-1}h_jh_{n-j-k},\;\; n\geq g+2,
\]
where the set of the initial values $\left(h_1, h_2,\ldots, h_{g+1}\right)$  is related by a birational mapping $\left(h_1, h_2,\ldots, h_{g+1}\right)\leftrightarrow \left(r_1, r_2,\ldots, r_{g+1}\right)$ depending on the coefficients $p_k$ and $q_k$.
\end{theorem}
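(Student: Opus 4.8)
The plan is to prove the theorem purely by extracting the coefficient of each power of $x$ from the generating relation
\[
2P-Q+R+2(Q-P)\mathcal{H}-Q\mathcal{H}^2=0,
\]
read as an identity in $\mathbb{C}[[x]]$, and then to sort the resulting system into the finitely many equations that fix the initial data $(h_1,\dots,h_{g+1})$ and the infinitely many that propagate it.

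First I would record the Cauchy-product expansions with the usual conventions $p_0=1$, $q_0=2$, $p_j=q_j=r_{j+1}=0$ for $j\geq g+1$, and $h_m=0$ for $m\leq 0$. Since $Q-P=1+\sum_{j\geq 1}(q_j-p_j)x^j$, the coefficient of $x^n$ in $2(Q-P)\mathcal{H}$ is $2\sum_{j\geq 0}(q_j-p_j)(-1)^{n-j}h_{n-j}$, where the $j=0$ term uses $q_0-p_0=1$. Likewise $\mathcal{H}^2=\sum_{m\geq 2}(-1)^m\big(\sum_{i=1}^{m-1}h_ih_{m-i}\big)x^m$, so the coefficient of $x^n$ in $Q\mathcal{H}^2$ is $\sum_{k\geq 0}q_k(-1)^{n-k}\sum_{i=1}^{n-k-1}h_ih_{n-k-i}$. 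The polynomial part $2P-Q+R$ contributes $2p_n-q_n+r_n$ at $x^n$, and in particular nothing at $x^0$ because $2\cdot 1-2+0=0$, so the whole system genuinely starts at $n=1$. Collecting coefficients gives, for every $n\geq 1$,
\[
(2p_n-q_n+r_n)+2\sum_{j\geq 0}(q_j-p_j)(-1)^{n-j}h_{n-j}-\sum_{k\geq 0}q_k(-1)^{n-k}\sum_{i=1}^{n-k-1}h_ih_{n-k-i}=0.
\]

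For $n\geq g+2$ the term $2p_n-q_n+r_n$ vanishes, the sums over $j$ and $k$ truncate at $g$, the $j=0$ summand equals $2(-1)^nh_n$, and dividing by $2(-1)^n$ (using $(-1)^{-j}=(-1)^j$ and $-(q_j-p_j)=p_j-q_j$) turns the identity into exactly the recurrence of the statement. For $1\leq n\leq g+1$ the same identity is instead solved for $r_n$: isolating the top term yields $r_n=-2(-1)^nh_n+\Phi_n(h_1,\dots,h_{n-1})$, with $\Phi_n$ a polynomial whose coefficients depend polynomially on the $p_k,q_k$. Since the coefficient of $h_n$ is the unit $-2(-1)^n$, reading these $g+1$ equations by induction on $n$ exhibits $(h_1,\dots,h_{g+1})\mapsto(r_1,\dots,r_{g+1})$ as a triangular polynomial map with a polynomial inverse, in particular as a birational map with the $p_k,q_k$ as parameters, which is the remaining assertion.

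There is no serious obstacle beyond the bookkeeping: the two points that require care are that the constant term of the generating relation really vanishes (so the index starts at $n=1$) and that $g+1$ is precisely the degree of $2P-Q+R$ (so the inhomogeneous term disappears exactly for $n\geq g+2$). I would double-check the two boundary cases $n=g+1$ — the last initial-value relation, in which $r_{g+1}$ appears although $p_{g+1}=q_{g+1}=0$ — and $n=g+2$ — the first instance of the recurrence — by hand to be sure that no off-by-one in the shift/sign conventions has crept in.
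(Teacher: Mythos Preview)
Your proof is correct and follows exactly the approach the paper indicates: the paper derives the generating relation $2P-Q+R+2(Q-P)\mathcal{H}-Q\mathcal{H}^2=0$ and then simply states that ``an analysis of this relationship leads to the theorem,'' deferring the details to \cite{Hone1}; your coefficient-extraction argument is precisely that analysis, carried out carefully and with the birational correspondence made explicit via the triangular structure in $h_n$.
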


\section{Preliminary calculations}
\label{873289}

Theoretically, the following  is embedded in the formulas (\ref{890976543}) and (\ref{89863}). Let us define  an initial point $\xi$ of the phase space $M_g$ by specifying a triple $(P, Q, R)$. The formula (\ref{89863}) provides us all iterates  $u_n$ for $n\geq 1$. We would like to discuss in detail the nontrivial fact that the system of equations (\ref{880099875})  is birationally related to  some polynomial recurrence  
\begin{equation}
P(u_n,\ldots, u_{n+2g+1}; c_1,\ldots, c_g)=0,
\label{98430}
\end{equation}
where $u_n$ is defined by (\ref{8899}). 

Before we tackle the general case, let us first look at the simplest examples. In the case $g=1$ the system of equations (\ref{880099875}) become
\begin{equation}
p_{n+1, 1}=q_{n, 1}-p_{n, 1},\;\; q_{n+1, 1}=\frac{r_{n, 2}}{u_n},\;\; r_{n+1, 1}= 2 u_{n},\;\; r_{n+1, 2}= u_{n}q_{n, 1}.
\label{88000099099875}
\end{equation}
From the second and fourth equations in (\ref{88000099099875}),  it follows the relation $u_{n+1} q_{n+1, 1}=u_{n+2} q_{n+3, 1}$ which, in turn, using the first relation, can be rewritten as
\begin{equation}
u_{n+1} \left(p_{n+1, 1}+p_{n+2, 1}\right)=u_{n+2} \left(p_{n+3, 1}+p_{n+4, 1}\right).
\label{899985}
\end{equation}
Since $c_1=2p_{n, 1}+2r_{n, 1}$ and $r_{n, 1}=2 u_{n-1}$ then $p_{n, 1}=-2 u_{n-1}+c_1/2$. Substituting the latter into (\ref{899985}),  as a result, we get the recursion  
\begin{equation}
u_{n+1}\left(u_n+u_{n+1}-\frac{c_1}{2}\right)=u_{n+2}\left(u_{n+2}+u_{n+3}-\frac{c_1}{2}\right),
\label{87786328}
\end{equation} 
which is a special case of the relation (\ref{98430}).

Also in the work \cite{Hone1}, for the case $g=2$, a fifth-order equation was written out, which looks  more cumbersome than  (\ref{87786328}). It was also shown that for any  $g\geq 1$ there exists an equation of order $2g+1$ which is satisfied by the sequence $(u_n)$. As $g$ grows, this equation  looks even more cumbersome if we try to write it explicitly. One of the results of our work which we  formulate further in the form of a theorem is that this equation can be  briefly written in terms of some discrete polynomials the definition of which we will give in the next section. 

We will consider relations (\ref{880099875})  together with constraints
\[
c_k=\sum_{j=0}^k p_{n, j}p_{n, k-j}+\sum_{j=0}^{k-1} q_{n, j}r_{n, k-j}
\]
assuming that $g$ is large enough so as not to worry about having to zero something out.  To illustrate the general approach, we are going to do some preliminary calculations, the results of which  lead us to think that some special discrete polynomials constantly appear.  
More precisely we would like to show how $(p_{n, k}, q_{n, k}, r_{n, k})$ is expressed through the variables  $u_n$ and $c_k$.

Since $c_1=2p_{n, 1}+2r_{n, 1}$ and $r_{n, 1}=2 u_{n-1}$ then 
\[
p_{n, 1}=-2 u_{n-1}+\frac{c_1}{2}=-2 S^1_1(n-1)+H_1
\]
and
\[
q_{n, 1}=p_{n, 1}+p_{n+1, 1}=-2 \left(u_{n-1}+u_{n}\right)+c_1=-2 \left(S^1_2(n-1)-H_1\right),
\]
where  $H_1=c_1/2$. We use the notations like  $S^1_1(n)=u_n$ and $S^1_2(n)=u_n+u_{n+1}$ which seems redundant for now, however, later we will show what meaning these designations have.
Let us now consider the relation
\begin{equation}
c_2=2p_{n, 2}+p_{n, 1}^2+2r_{n, 2}+q_{n, 1}r_{n, 1},
\label{876098}
\end{equation}
where
\[
r_{n, 2}=u_{n-1}q_{n-1, 1}=-2 u_{n-1}\left(S^1_2(n-2)-H_1\right).
\]
Solving (\ref{876098}) in favor of $p_{n, 2}$, we get
\begin{equation}
p_{n, 2}=2 S_2^2(n-2)-2 H_1 S_1^1(n-1)+H_2,
\label{786500006}
\end{equation}
where we have denoted
\begin{eqnarray}
S^2_2(n)&=&u_{n+1}\left(u_n+u_{n+1}\right)+u_{n+2}u_{n+1} \nonumber \\
&=&u_{n+1}S^1_2(n)+u_{n+2}S^1_1(n+1). \label{90843298}
\end{eqnarray}
In addition we calculate
\begin{eqnarray}
q_{n, 2}&=&p_{n, 2}+p_{n+1, 2} \nonumber \\
&=&2\left(S^2_2(n-2)+S^2_2(n-1)\right) -2H_1 \left(S^1_1(n-1)+S^1_1(n)\right)+2H_2 \nonumber \\
&=&2\left(S^2_3(n-2) -H_1 S^1_2(n-1)+H_2\right),
\label{89732165}
\end{eqnarray}
where  $H_2=c_2/2-c_1^2/8$ and
\begin{equation}
S^2_3(n)=u_{n+1}S^1_3(n)+u_{n+2}S^1_2(n+1)+u_{n+3}S^1_1(n+2).
\label{789986}
\end{equation}
Note also that easily verifiable identities 
\begin{equation}
S^1_1(n)+S^1_1(n+1)=S^1_2(n)\;\; \mbox{and}\;\; S^2_2(n)+S^2_2(n+1)=S^2_3(n)
\label{7865887646}
\end{equation}
are used to get (\ref{89732165}). Of course, we can continue to make more and more cumbersome calculations, but it is better to make some assumptions and then to prove them.

\section{Homogeneous discrete polynomials}
\label{873289000}

After the examples given in the previous section, one feels that it is time to introduce an infinite class of homogeneous discrete polynomials $S^k_s(n)$ which, as can be seen, persistently appear in the context of the problem under discussion. These discrete polynomials    were introduced in \cite{Svinin1} and then  exploited in  \cite{Svinin2}, \cite{Svinin3}, \cite{Svinin4}. The following material can be found partially in these papers, and is provided below mainly for the sake of completeness. That is, in this section we would like to recall some of the results of the mentioned works, in order to  link them with the main content of this paper. In essence, we need this section in order to collect together the identities  for some discrete polynomials. It should be noted that some identities presented in this section are new. 

\subsection{Definitions}

For better understanding, it is better to assume in this section  that $(u_n)$ is a free bi-infinite sequence, that is, it need not satisfy any equation. Also, let us not assume in this section that $u_n$ is defined by the formula (\ref{8899}).

It may already be clear that $k$,  is the degree of the homogeneous polynomial $S^k_s(n)$. Let us assume, for convenience, that $S^0_s(n)=1$. The polynomial $S^1_s(n)$, for any $s\geq 1$, is defined as
\begin{equation}
S^1_s(n)=\sum_{j=0}^{s-1} u_{n+j}.
\label{188885}
\end{equation}
In turn the formulas of type  (\ref{90843298}) and (\ref{789986}) suggests the idea to define discrete polynomials $S^2_s(n)$, for any $s\geq 1$, by
\begin{equation}
S^2_s(n)=\sum_{j=0}^{s-1}u_{n+j+1}S^1_{s-j}(n+j).
\label{1600008699085}
\end{equation}
Then, inspiring by (\ref{1600008699085}), we  write down the general formula for $S^k_s(n)$ in the form of  a recurrence 
\begin{equation}
S^k_s(n)=\sum_{j=0}^{s-1} u_{n+j+k-1} S^{k-1}_{s-j}(n+j).
\label{6000098652}
\end{equation}
This recurrence  is good from a practical point of view because it allows us to write out  successively polynomials $S^k_s(n)$ with increasingly large values of  $k$, starting, for example,  from $S^1_s(n)$ given by (\ref{188885}), however, let us note that one can also write these polynomials in the following explicit form:
\begin{equation}
S^k_s(n)=\sum_{0\leq \lambda_1\leq \cdots\leq  \lambda_k\leq s-1} \prod_{j=1}^k u_{n+\lambda_j+k-j}. 
\label{67500000000}
\end{equation}
This formula  tells us  that $S^k_s(n)$ is a homogeneous polynomial in variables $\left(u_n,\ldots, u_{n+k+s-2}\right)$. 

So we have two definitions of discrete polynomials $S^k_s(n)$, but another one will also be useful to us. Namely, let us consider a pseudo-difference operator 
\begin{equation}
\mathcal{T}=\Lambda^{-1} + u_{n-1} x \Lambda^{-2},
\label{67888}
\end{equation}
where $\Lambda$ stands for the shift operator acting as  $\Lambda(f_n)=f_{n+1}$. Then we  consider its negative powers
\begin{equation}
\mathcal{T}^{-s}=\Lambda^s+\sum_{j\geq 1} (-1)^j x^j S^j_s(n-j+1) \Lambda^{s-j},\;\; s\geq 1.
\label{678760}
\end{equation}
This expression is understood as a formal pseudo-differential operator or, if you like, a symbol. Note that in expression (\ref{678760}) there is a finite number of terms containing positive powers of  $\Lambda$ and an infinite number of terms containing negative powers of $\Lambda$. This circumstance allows us to correctly multiply such symbols by each other. Although it is not obvious, it turns out, as was shown  in \cite{Svinin3}, that the coefficients $S^j_s(n)$ in (\ref{678760}) are the very same polynomials defined by formulas (\ref{6000098652}) and (\ref{67500000000}). 

Let us also consider positive degrees
\[
\mathcal{T}^{s}=\Lambda^{-s}+\sum_{j= 1}^s  x^j T^j_s(n-s-j+1) \Lambda^{-s-j},\;\; s\geq 1.
\]
Here a new notation $T^j_s(n)$ appears for some  discrete polynomials to be defined. It was shown in \cite{Svinin3} that these discrete polynomials can be written explicitly as
\begin{equation}
T^k_s(n)=\sum_{0\leq \lambda_1< \cdots<  \lambda_k\leq s-1} \prod_{j=1}^k u_{n+\lambda_j+j-1}. 
\label{6897654000}
\end{equation}
It is convenient to assume that $T^k_s(n)=0$ for any $k\geq s+1$. 

Let us now define formal power series $\mathcal{S}_{s, n}(x)$ by the generating relation
\begin{equation}
\sum_{j=-\infty}^{\infty} \mathcal{T}^{j}= \sum_{j=-\infty}^{\infty} \mathcal{S}_{j, n}(x) \Lambda^j.
\label{68}
\end{equation}
In particular, by comparing the terms at different powers of $\Lambda$ in (\ref{68}), one can verify that 
\begin{equation}
\mathcal{S}_{s, n}(x)=1+\sum_{j\geq 1}(-1)^jS^j_{s+j}(n-j+1)x^j,\;\; \forall s\geq 0.
\label{680974530}
\end{equation}
These formal power series  play an important role in our further constructions.

\subsection{Technical lemmas}

Next, we are going to present  some identities that we will need later.    
\begin{lemma} \cite{Svinin3}
Polynomials $S^k_s(n)$ satisfy  two identities\footnote{When using this formulas, it is convenient to assume that  $S^k_0(n)=0$.} 
\begin{equation}
S^k_s(n)=S^k_{s-1}(n+1)+u_{n+k-1} S^{k-1}_{s}(n)
\label{6753098542}
\end{equation}
and
\begin{equation}
S^k_s(n)=S^k_{s-1}(n)+u_{n+s-1} S^{k-1}_{s}(n+1).
\label{675872}
\end{equation}
\end{lemma}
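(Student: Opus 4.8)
The plan is to prove both identities directly from the combinatorial formula (\ref{67500000000}) for $S^k_s(n)$ by splitting the index set
\[
I^k_s=\{(\lambda_1,\ldots,\lambda_k):0\leq\lambda_1\leq\cdots\leq\lambda_k\leq s-1\}
\]
according to the value of one extremal coordinate. For the identity (\ref{6753098542}), I would split $I^k_s$ by whether $\lambda_1=0$ or $\lambda_1\geq 1$. If $\lambda_1\geq 1$, then $(\lambda_1-1,\ldots,\lambda_k-1)$ ranges over $I^k_{s-1}$ after a shift of the base point by $1$, which contributes $S^k_{s-1}(n+1)$. If $\lambda_1=0$, then the factor $u_{n+\lambda_1+k-1}=u_{n+k-1}$ factors out, and $(\lambda_2,\ldots,\lambda_k)$ ranges over $\{0\leq\lambda_2\leq\cdots\leq\lambda_k\leq s-1\}=I^{k-1}_s$ with the product $\prod_{j=2}^k u_{n+\lambda_j+k-j}=\prod_{i=1}^{k-1}u_{n+\lambda'_i+(k-1)-i}$ (reindexing $i=j-1$, $\lambda'_i=\lambda_{i+1}$), which is exactly $S^{k-1}_s(n)$; this contributes $u_{n+k-1}S^{k-1}_s(n)$. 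Adding the two cases gives (\ref{6753098542}).

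For the identity (\ref{675872}), I would instead split $I^k_s$ by whether $\lambda_k=s-1$ or $\lambda_k\leq s-2$. If $\lambda_k\leq s-2$, the tuple already lies in $I^k_{s-1}$ with the same base point, giving $S^k_{s-1}(n)$. If $\lambda_k=s-1$, the factor $u_{n+\lambda_k+k-k}=u_{n+s-1}$ factors out, and $(\lambda_1,\ldots,\lambda_{k-1})$ ranges over $\{0\leq\lambda_1\leq\cdots\leq\lambda_{k-1}\leq s-1\}=I^{k-1}_s$ with remaining product $\prod_{j=1}^{k-1}u_{n+\lambda_j+k-j}=\prod_{j=1}^{k-1}u_{(n+1)+\lambda_j+(k-1)-j}$, which is $S^{k-1}_s(n+1)$; this contributes $u_{n+s-1}S^{k-1}_s(n+1)$. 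Summing yields (\ref{675872}).

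Alternatively, one could give an operator-theoretic proof starting from the factorization $\mathcal{T}^{-s}=\mathcal{T}^{-1}\cdot\mathcal{T}^{-(s-1)}=\mathcal{T}^{-(s-1)}\cdot\mathcal{T}^{-1}$ with $\mathcal{T}^{-1}=\Lambda+u_{n-1}x\Lambda^{-\,\cdot}$ replaced by the genuine inverse of (\ref{67888}); expanding both products using (\ref{678760}) and matching coefficients of $x^k\Lambda^{s-k}$ would produce recurrences in $s$ that, after the appropriate shifts of $n$, are precisely (\ref{6753098542}) and (\ref{675872}). I would, however, favor the combinatorial argument above since it is self-contained and transparent. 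The only mild subtlety — and the one place to be careful — is the bookkeeping of base-point shifts and reindexing of the products when a boundary factor is extracted (making sure the exponents $k-j$ versus $(k-1)-j$ line up correctly), together with the degenerate endpoint conventions $S^k_0(n)=0$ and $S^0_s(n)=1$ flagged in the footnote, which must be checked to make the two base cases $k=1$ and $s=1$ consistent.
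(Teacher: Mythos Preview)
Your combinatorial argument is correct: the splits on $\lambda_1=0$ versus $\lambda_1\geq 1$ (for (\ref{6753098542})) and on $\lambda_k=s-1$ versus $\lambda_k\leq s-2$ (for (\ref{675872})) are exactly the right decompositions of the index set, and your reindexing checks out line by line. The boundary conventions $S^0_s(n)=1$ and $S^k_0(n)=0$ indeed make the cases $k=1$ and $s=1$ consistent, as you note.

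As for comparison with the paper: the paper does not supply a proof of this lemma at all --- it is stated with a citation to \cite{Svinin3} and left at that. So your self-contained combinatorial derivation from the explicit formula (\ref{67500000000}) is more than what the present paper offers; it is a clean direct verification that makes the identities transparent. Your alternative operator-theoretic route via factoring $\mathcal{T}^{-s}=\mathcal{T}^{-1}\cdot\mathcal{T}^{-(s-1)}$ is closer in spirit to how such identities are sometimes obtained in the cited literature, but the combinatorial proof is preferable here for exactly the reason you give: it is elementary and requires no appeal to the equivalence between (\ref{67500000000}) and the operator definition (\ref{678760}).
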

It should be noted that the recurrent relation (\ref{6000098652}) can be derived from (\ref{6753098542}). From (\ref{6753098542}) and (\ref{675872}) the following statement easily follows.
\begin{proposition}
The relations
\begin{equation}
\mathcal{S}_{s, n+1}=\mathcal{S}_{s+1, n}+u_{n}x \mathcal{S}_{s+2, n-1}\;\; \mbox{and}\;\; \mathcal{S}_{s, n}=\mathcal{S}_{s+1, n}+u_{n+s+1}x \mathcal{S}_{s+2, n},\;\;\forall s\geq 0,
\label{8997776}
\end{equation}
where the formal power series $\mathcal{S}_{s, n}$ is defined by (\ref{680974530}).
\end{proposition}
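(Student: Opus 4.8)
The plan is to prove both identities by the most direct route: substitute the explicit expansion (\ref{680974530}) for each of the three formal power series involved, collect the coefficient of $x^j$ on the two sides, and check that the resulting scalar identity is nothing but one of the two relations (\ref{6753098542}), (\ref{675872}) of the preceding Lemma. No convergence issue arises, since everything in sight is a genuine formal power series in $x$ whose coefficients are polynomials in the $u_n$'s.

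For the first relation, $\mathcal{S}_{s, n+1}=\mathcal{S}_{s+1, n}+u_{n}x\,\mathcal{S}_{s+2, n-1}$, plugging in (\ref{680974530}) gives, on the left, the coefficient $(-1)^j S^j_{s+j}(n-j+2)$ of $x^j$; on the right it is $(-1)^j S^j_{s+1+j}(n-j+1)$, coming from $\mathcal{S}_{s+1, n}$, plus $(-1)^{j-1} u_n S^{j-1}_{s+1+j}(n-j+1)$, obtained by shifting the summation index by one so that the $x^{j+1}$-term of $u_n x\,\mathcal{S}_{s+2, n-1}$ is read as an $x^j$-term. With the conventions $S^0_s(n)=1$ and $S^k_0(n)=0$ this holds for all $j\geq 1$, the term $j=1$ included. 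Cancelling $(-1)^j$, what remains to be shown is $S^j_{s+j}(n-j+2)=S^j_{s+1+j}(n-j+1)-u_n S^{j-1}_{s+1+j}(n-j+1)$, and this is exactly (\ref{6753098542}) applied with $k=j$, lower index $s+1+j$ and base point $n-j+1$ (so that $u_{n+k-1}=u_n$).

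For the second relation, $\mathcal{S}_{s, n}=\mathcal{S}_{s+1, n}+u_{n+s+1}x\,\mathcal{S}_{s+2, n}$, the same computation reduces the $x^j$-coefficient identity to $S^j_{s+j}(n-j+1)=S^j_{s+1+j}(n-j+1)-u_{n+s+1}S^{j-1}_{s+1+j}(n-j+2)$, which is (\ref{675872}) with $k=j$, lower index $s+1+j$ and base point $n-j+1$, since $(n-j+1)+(s+1+j)-1=n+s+1$. That finishes the argument. The only place where care is needed is the index arithmetic inside the shifts, so the ``main obstacle'' here is purely bookkeeping and carries no real difficulty.

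It is worth recording that both relations can also be obtained at the operator level, which makes transparent why they are dual to each other. Re-indexing the bi-infinite sum in (\ref{68}) gives $\left(\sum_j\mathcal{T}^{j}\right)\mathcal{T}=\sum_j\mathcal{T}^{j}=\mathcal{T}\left(\sum_j\mathcal{T}^{j}\right)$, i.e.\ the symbol $\mathcal{A}:=\sum_j\mathcal{S}_{j, n}\Lambda^{j}$ satisfies $\mathcal{A}\mathcal{T}=\mathcal{A}=\mathcal{T}\mathcal{A}$. Substituting $\mathcal{T}=\Lambda^{-1}+u_{n-1}x\Lambda^{-2}$, commuting the powers of $\Lambda$ past the $n$-dependent coefficients, and reading off the coefficient of $\Lambda^{s}$ in $\mathcal{A}=\mathcal{A}\mathcal{T}$ produces the second relation, while doing the same in $\mathcal{A}=\mathcal{T}\mathcal{A}$ and then shifting $n\mapsto n+1$ produces the first. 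I would present the coefficient-comparison proof in the text and keep this operator derivation as a remark.
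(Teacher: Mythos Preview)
Your proof is correct and follows exactly the route the paper indicates: the paper simply states that the proposition ``easily follows'' from (\ref{6753098542}) and (\ref{675872}) ``by direct verification,'' and your coefficient comparison is precisely that verification, with the index bookkeeping carried out accurately. The operator-level derivation you add as a remark is a pleasant bonus that the paper does not include.
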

That is, this proposition can be proven by direct verification.
\begin{remark}
We assume that these two relations are also valid for any $s\leq -1$, but this already requires proof. 
\end{remark}
We will need a quite nontrivial property of the polynomials $S^k_s(n)$, formulated in the following lemma: 
\begin{lemma}
The relation
\begin{equation}
S^k_k(n)+S^k_k(n+1)=S^k_{k+1}(n),\;\; \forall  k\geq 1,
\label{8889554307}
\end{equation}
is an identity.
\end{lemma}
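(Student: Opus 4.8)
The plan is to prove the identity $S^k_k(n)+S^k_k(n+1)=S^k_{k+1}(n)$ by induction on $k$, using the two recurrences (\ref{6753098542}) and (\ref{675872}) from the previous lemma. The base case $k=1$ is immediate from the first identity in (\ref{7865887646}), namely $S^1_1(n)+S^1_1(n+1)=u_n+u_{n+1}=S^1_2(n)$, which is just the definition (\ref{188885}). For the inductive step I would assume $S^{k-1}_{k-1}(m)+S^{k-1}_{k-1}(m+1)=S^{k-1}_{k}(m)$ holds for all $m$ and try to deduce the statement for degree $k$.

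The key manipulation is to expand both $S^k_k(n)$ and $S^k_k(n+1)$ using the splitting relations, chosen so that the $(k-1)$-degree pieces line up. Concretely, I would apply (\ref{6753098542}) to write $S^k_k(n)=S^k_{k-1}(n+1)+u_{n+k-1}S^{k-1}_k(n)$, and apply (\ref{675872}) to write $S^k_k(n+1)=S^k_{k-1}(n+1)+u_{n+k}S^{k-1}_k(n+2)$. Wait — those two give the \emph{same} first term $S^k_{k-1}(n+1)$, which is not obviously what I want, so more care is needed: instead I would expand $S^k_{k+1}(n)$ on the right-hand side using (\ref{6753098542}) as $S^k_{k+1}(n)=S^k_k(n+1)+u_{n+k-1}S^{k-1}_{k+1}(n)$, reducing the claim to $S^k_k(n)=u_{n+k-1}S^{k-1}_{k+1}(n)$, which is false in general; so that naive route fails too. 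The correct approach is to telescope more carefully: use (\ref{675872}) repeatedly to peel $S^k_k(n)=S^k_{k-1}(n)+u_{n+k-1}S^{k-1}_k(n+1)$ and similarly $S^k_k(n+1)=S^k_k(n)+u_{n+k}S^{k-1}_{k+1}(n+1)-(\text{correction})$; the cleanest version is to write, via (\ref{6753098542}), $S^k_{k+1}(n)=S^k_k(n+1)+u_{n+k-1}S^{k-1}_{k+1}(n)$ and, via (\ref{675872}), $S^k_{k+1}(n)=S^k_k(n)+u_{n+k}S^{k-1}_{k+1}(n+1)$. Adding a suitable combination and invoking the degree-$(k-1)$ version of the identity on the piece $S^{k-1}_{k+1}$ (expanded via the $s$-recursions down to $S^{k-1}_k$) should collapse everything.

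Because the bookkeeping above is delicate, the alternative I would actually pursue is the operator proof via the symbol $\mathcal{T}$ in (\ref{67888})--(\ref{678760}). The identity (\ref{8889554307}) concerns the ``diagonal'' coefficients $S^k_k$, and these are precisely the leading nontrivial coefficients of $\mathcal{T}^{-k}$ in (\ref{678760}); the relation $S^k_k(n)+S^k_k(n+1)=S^k_{k+1}(n)$ should be read off from comparing $\mathcal{T}^{-k}$ with $\mathcal{T}^{-k}\cdot\mathcal{T}$ restricted to the appropriate power of $\Lambda$, using $\mathcal{T}=\Lambda^{-1}+u_{n-1}x\Lambda^{-2}$. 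Writing $\mathcal{T}^{-k-1}=\mathcal{T}^{-k}(\Lambda^{-1}+u_{n-1}x\Lambda^{-2})^{-1}$ is awkward, so better: from $\mathcal{T}^{-k}\mathcal{T}=\mathcal{T}^{-k+1}$, i.e. $\mathcal{T}^{-k}(\Lambda^{-1}+u_{n-1}x\Lambda^{-2})=\mathcal{T}^{-(k-1)}$, extracting the coefficient of $x^k\Lambda^{-1}$ (say) on both sides yields a linear relation among $S^k_{k}$, $S^k_{k+1}$ and lower objects; iterating or combining two such extractions at adjacent powers of $\Lambda$ produces exactly (\ref{8889554307}).

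The main obstacle I anticipate is the index bookkeeping — matching the shifted arguments $n$, $n+1$, $n-j+1$ correctly across the product $\mathcal{T}^{-k}\mathcal{T}$, since each factor of $\Lambda^{-1}$ or $\Lambda^{-2}$ shifts the coefficient arguments. I would handle this by fixing $k$, writing $\mathcal{T}^{-k}$ explicitly with its general-term arguments from (\ref{678760}), multiplying by the two-term operator $\mathcal{T}$, collecting the coefficient of a single monomial $x^k\Lambda^{s-k}$ for a convenient choice of $s$ (e.g. $s=k$), and comparing with the same coefficient in $\mathcal{T}^{-(k-1)}$; this reduces the identity to a finite algebraic statement that (\ref{6753098542}) and (\ref{675872}) already guarantee. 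If the operator route proves no cleaner than the direct induction, I would fall back on the induction on $k$ with both splitting identities, carrying along the inductive hypothesis on $S^{k-1}_{k-1}$ and using (\ref{675872}) to reduce $S^{k-1}_{k+1}$ down to $S^{k-1}_k$ and $S^{k-1}_{k-1}$-type terms.
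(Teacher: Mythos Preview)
Your proposal contains a genuine and somewhat ironic gap. You correctly expand $S^k_{k+1}(n)$ via (\ref{6753098542}) to reduce the claim to
\[
S^k_k(n)=u_{n+k-1}\,S^{k-1}_{k+1}(n),
\]
and then discard this as ``false in general''. In fact this relation \emph{is} true for every $k\geq 1$: it is exactly the factorization identity (\ref{887}) that the paper invokes, a special case of the general factorization $S^k_s(n)=S^{s-1}_{k+1}(n)\prod_{j=s-1}^{k-1}u_{n+j}$ for $1\leq s\leq k$, established in \cite{Svinin4}. You found the right reduction and threw it away. The paper's entire proof is: reduce via (\ref{6753098542}) to (\ref{887}), then cite the factorization identity.

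Your two fallback strategies do not repair this. The operator route---comparing coefficients in $\mathcal{T}^{-k}\mathcal{T}=\mathcal{T}^{-(k-1)}$---merely regenerates (\ref{6753098542}) and (\ref{675872}) themselves and gives no new relation. The induction on $k$ with hypothesis $S^{k-1}_{k-1}(m)+S^{k-1}_{k-1}(m+1)=S^{k-1}_k(m)$ is too weak: after the same reduction you need $S^k_k(n)=u_{n+k-1}S^{k-1}_{k+1}(n)$, and peeling this with (\ref{6753098542}) and (\ref{675872}) leads to $S^k_{k-1}(n+1)=u_{n+k-1}u_{n+k}S^{k-2}_{k+1}(n+1)$, which is the factorization identity at the pair $(k,k-1)$, not something your inductive hypothesis covers. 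To make induction work you would have to carry the full factorization identity as the hypothesis---at which point you are essentially redoing the proof from \cite{Svinin4} rather than proving the lemma from (\ref{6753098542}) and (\ref{675872}) alone.
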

\begin{proof}
As a special case of (\ref{6753098542}) we have the relation 
\[
S^k_{k+1}(n)-S^k_{k}(n+1)=u_{n+k-1}S^{k-1}_{k+1}(n+1)
\]
that gives us the right to rewrite the relation (\ref{8889554307}) which we need to prove in an equivalent form as
\begin{equation}
S^k_k(n)=u_{n+k-1}S^{k-1}_{k+1}(n),
\label{887}
\end{equation}
but the latter is a special case of the factorization identity \cite{Svinin4}
\[
S^k_s(n)=S^{s-1}_{k+1}(n)\prod_{j=s-1}^{k-1} u_{n+j},\;\; s=1,\ldots, k
\]
holds.
\end{proof}
Note that, for $k=1$ and  $k=2$, the formula (\ref{8889554307}) was already used  at the end of the previous section. The following lemma is in order.
\begin{lemma}
The relation
\begin{equation}
2S^k_k(n)=S^k_{k+1}(n)+u_{n+k-1}S^{k-1}_k(n)-u_{n+k}S^{k-1}_k(n+2),\;\; \forall k\geq 1
\label{8880907}
\end{equation}
is an identity. 
\end{lemma}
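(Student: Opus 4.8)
The plan is to derive the identity (\ref{8880907}) from the two identities (\ref{6753098542}) and (\ref{675872}) of the preceding lemma, together with the special-case relation (\ref{887}) established in the proof of Lemma for (\ref{8889554307}). First I would record what we already know: the identity (\ref{8889554307}), namely $2S^k_k(n) = S^k_k(n) + S^k_k(n+1) + \big(S^k_k(n) - S^k_k(n+1)\big)$, expresses $2S^k_k(n)$ as $S^k_{k+1}(n)$ plus the ``difference term'' $S^k_k(n) - S^k_k(n+1)$. So the whole content of (\ref{8880907}) reduces to proving the auxiliary identity
\begin{equation}
S^k_k(n) - S^k_k(n+1) = u_{n+k-1}S^{k-1}_k(n) - u_{n+k}S^{k-1}_k(n+2).
\label{plan-aux}
\end{equation}

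To prove (\ref{plan-aux}) I would express both sides using the two one-step identities. Applying (\ref{6753098542}) with $s=k$ gives $S^k_k(n) = S^k_{k-1}(n+1) + u_{n+k-1}S^{k-1}_k(n)$, while applying (\ref{675872}) with $n\to n+1$, $s=k$ gives $S^k_k(n+1) = S^k_{k-1}(n+1) + u_{n+k}S^{k-1}_k(n+2)$. Subtracting these two expressions, the common term $S^k_{k-1}(n+1)$ cancels and (\ref{plan-aux}) falls out immediately. Combining with the decomposition of $2S^k_k(n)$ above yields (\ref{8880907}). This is essentially a two-line computation once the right instances of (\ref{6753098542}) and (\ref{675872}) are chosen, and it needs the convention $S^k_0(n)=0$ only in the degenerate base case $k=1$, where both $S^{k-1}_k$ terms become $S^0_k = 1$ and the identity reads $2u_n = (u_n + u_{n+1}) + u_n - u_{n+1}$, which checks out.

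The main obstacle, such as it is, is purely bookkeeping: making sure the index shifts in (\ref{6753098542}) and (\ref{675872}) are applied at the correct base points so that the leading terms match and cancel, and verifying that no hidden use of (\ref{887}) or the factorization identity is actually needed (I expect it is not — only the two elementary shift identities are required). An alternative, slightly more conceptual route would be to observe that (\ref{plan-aux}) is the statement that the ``discrete derivative'' $S^k_k(n) - S^k_k(n+1)$ telescopes through the operator-level relations (\ref{8997776}) for the generating series $\mathcal{S}_{s,n}$; but for a single fixed $k$ the direct subtraction argument above is cleaner and self-contained, so that is the route I would take.
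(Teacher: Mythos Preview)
Your argument is correct and is in fact cleaner than the paper's own proof. Both proofs begin the same way, using (\ref{8889554307}) to reduce the claim to the auxiliary identity
\[
S^k_k(n) - S^k_k(n+1) = u_{n+k-1}S^{k-1}_k(n) - u_{n+k}S^{k-1}_k(n+2).
\]
At this point the paper invokes the factorization relation (\ref{887}), $S^k_k(n)=u_{n+k-1}S^{k-1}_{k+1}(n)$, to rewrite both sides and then applies further instances of (\ref{6753098542}) and (\ref{675872}) to reduce each side to the common expression $u_{n+k-1}u_{n+k}S^{k-2}_{k+1}(n+1)$. You bypass this entirely: by choosing the instances $S^k_k(n)=S^k_{k-1}(n+1)+u_{n+k-1}S^{k-1}_k(n)$ from (\ref{6753098542}) and $S^k_k(n+1)=S^k_{k-1}(n+1)+u_{n+k}S^{k-1}_k(n+2)$ from (\ref{675872}), the common term $S^k_{k-1}(n+1)$ cancels upon subtraction and the auxiliary identity drops out in one line. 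So your suspicion was right: neither (\ref{887}) nor the deeper factorization identity is needed here, and your route uses strictly less machinery than the paper's while remaining entirely within the two elementary shift identities.
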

\begin{proof}
By (\ref{8889554307}), we can rewrite (\ref{8880907}) as
\[
S^k_k(n)+u_{n+k}S^{k-1}_k(n+2)=S^k_{k}(n+1)+u_{n+k-1}S^{k-1}_k(n).
\]
By  (\ref{887}), we can rewrite the latter as 
\begin{equation}
u_{n+k-1}\left(S^{k-1}_{k+1}(n)-S^{k-1}_k(n)\right)=u_{n+k}\left(S^{k-1}_{k+1}(n+1)-S^{k-1}_k(n+2)\right).
\label{88007}
\end{equation}
Finally, using the identities 
\[
S^{k-1}_{k+1}(n)-S^{k-1}_k(n)=u_{n+k}S^{k-2}_{k+1}(n+1)
\]
and
\[
S^{k-1}_{k+1}(n+1)-S^{k-1}_k(n+2)=u_{n+k-1}S^{k-2}_{k+1}(n+1)
\]
that are particular cases of   (\ref{6753098542}) and (\ref{675872}), we see that (\ref{88007}) holds.
\end{proof}

\subsection{Stiltjes continued fraction}

Given an arbitrary  sequence $(u_n)$, let us consider  the corresponding Stiltjes continued fraction  
\begin{equation}
F_n(x)=1+\frac{u_n x}{1+\displaystyle{\frac{u_{n+ 1} x}{1+\displaystyle{\frac{u_{n+ 2} x}{1+\cdots}}}}}
\label{77643}
\end{equation}
and then define discrete polynomials $(h_{n, j})$ by the generating relation
\[
F_n(x)=1-\sum_{j\geq 1}(-1)^j h_{n, j} x^j.
\]
The first few coefficients $h_{n, j}$ look as
\[
h_{n, 1}=u_n,\;\; h_{n, 2}=u_nu_{n+ 1},\;\; h_{n, 3}=u_nu_{n+ 1}\left(u_{n+ 1}+u_{n+ 2}\right),\;\;
\]
\[
h_{n, 4}=u_nu_{n+ 1}\left((u_{n+ 1}+u_{n+ 2})^2+u_{n+2}u_{n+3}\right),\ldots
\]
The coefficient $h_{n, k}$, for any $k\geq 1$, is a polynomial in variables $\left(u_n,\ldots, u_{n+k-1}\right)$, but it is important to note that it depends linearly on the variable $u_{n+k-1}$.
This means that $u_{n+k-1}$, for any $k\geq 1$, can be expressed unambiguously as a rational function of the variables $\left(h_{n, 1},\ldots, h_{n, k}\right)$. 

By definition, the sequence $(F_n)$ satisfies the recurrence relation
\begin{equation}
F_nF_{n+ 1}=F_{n+ 1}+u_n x.
\label{85437}
\end{equation}
The following proposition relates the  polynomials $S^k_s(n)$  to the coefficients of the Stiltes continued fraction (\ref{77643}).
\begin{lemma}
The relation 
\begin{equation}
\mathcal{S}_{s, n}=F_{n+s+1}\mathcal{S}_{s+1, n},\;\; \forall s\geq 0
\label{80097}
\end{equation}
where the formal power series $\mathcal{S}_{s, n}$ is defined by (\ref{680974530}), holds.
\end{lemma}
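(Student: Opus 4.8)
The plan is to prove the identity $\mathcal{S}_{s,n} = F_{n+s+1}\,\mathcal{S}_{s+1,n}$ for all $s \geq 0$ by reducing it to the already-established recurrences. Recall from (\ref{680974530}) that $\mathcal{S}_{s,n} = 1 + \sum_{j\geq 1}(-1)^j S^j_{s+j}(n-j+1)x^j$, and from (\ref{85437}) that the continued-fraction series $(F_n)$ satisfies $F_n F_{n+1} = F_{n+1} + u_n x$, equivalently $F_n = 1 + u_n x / F_{n+1}$. The first relation in Proposition after (\ref{8997776}), namely $\mathcal{S}_{s,n+1} = \mathcal{S}_{s+1,n} + u_n x\,\mathcal{S}_{s+2,n-1}$, together with the second relation $\mathcal{S}_{s,n} = \mathcal{S}_{s+1,n} + u_{n+s+1}x\,\mathcal{S}_{s+2,n}$, are the structural input I expect to use.

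First I would rewrite the claimed identity as $\mathcal{S}_{s,n}/\mathcal{S}_{s+1,n} = F_{n+s+1}$ (legitimate since both series have constant term $1$, so $\mathcal{S}_{s+1,n}$ is invertible in the ring of formal power series in $x$). Then I would try to show that the quotient $G_{s,n} := \mathcal{S}_{s,n}/\mathcal{S}_{s+1,n}$ satisfies the same defining relation as $F_{n+s+1}$, i.e.
\[
G_{s,n} = 1 + \frac{u_{n+s+1}\,x}{G_{s+1,n}}.
\]
Using the second relation of (\ref{8997776}) in the form $\mathcal{S}_{s,n} = \mathcal{S}_{s+1,n} + u_{n+s+1}x\,\mathcal{S}_{s+2,n}$, divide through by $\mathcal{S}_{s+1,n}$ to get $G_{s,n} = 1 + u_{n+s+1}x\,(\mathcal{S}_{s+2,n}/\mathcal{S}_{s+1,n})$. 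So the whole identity comes down to showing $\mathcal{S}_{s+2,n}/\mathcal{S}_{s+1,n} = 1/G_{s+1,n} = \mathcal{S}_{s+2,n}/\mathcal{S}_{s+1,n}$ — which is a tautology once one checks that $G_{s+1,n} = \mathcal{S}_{s+1,n}/\mathcal{S}_{s+2,n}$, i.e. that the subscript shift $s \mapsto s+1$ in the definition of $G$ is consistent. Hence the identity $G_{s,n} = 1 + u_{n+s+1}x/G_{s+1,n}$ holds \emph{automatically} from the second relation of (\ref{8997776}); it is literally a restatement of that relation. What remains is therefore an initialization/uniqueness argument: both $(G_{s,n})_{s\geq 0}$ and $(F_{n+s+1})_{s\geq 0}$ satisfy the recursion $X_s = 1 + u_{n+s+1}x/X_{s+1}$, so I need to argue they coincide.

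The subtlety — and the step I expect to be the main obstacle — is that this recursion runs "upward" in $s$ and does not by itself pin down a unique solution among formal power series without a boundary/consistency condition at infinity; one must verify that the $\mathcal{S}$-quotients genuinely realize the continued fraction rather than some other branch. I would handle this by a degree/truncation argument: fix $N$ and work modulo $x^{N+1}$. One checks from (\ref{678760}) and (\ref{680974530}) that $\mathcal{S}_{s,n} \equiv 1 \pmod{x}$ uniformly in $s$, and more precisely that the coefficient of $x^j$ in $\mathcal{S}_{s,n}$ depends only on finitely many $u_m$; iterating $X_s = 1 + u_{n+s+1}x/X_{s+1}$ downward from level $s+N$ determines $X_s \pmod{x^{N+1}}$ completely in terms of the finitely many data $u_{n+s+1},\dots,u_{n+s+N}$ — this is exactly the finite-truncation stability of Stieltjes continued fractions. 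Since both $G_{s,n}$ and $F_{n+s+1}$ satisfy the same recursion and hence have the same truncation mod $x^{N+1}$ for every $N$, they are equal as formal power series. Alternatively, and perhaps more cleanly, I would compare the full bi-infinite operator identity: apply the generating relation (\ref{68}) together with the operator factorization underlying (\ref{77643}) to identify $F_{n+s+1}$ with the ratio of the $\Lambda^{s}$- and $\Lambda^{s+1}$-components appropriately, but I suspect the truncation argument is the shorter route.
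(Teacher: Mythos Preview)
Your proposal is correct and follows essentially the same route as the paper: define the quotient $G_{s,n}=\mathcal{S}_{s,n}/\mathcal{S}_{s+1,n}$, use the second relation in (\ref{8997776}) to obtain the continued-fraction recursion $G_{s,n}=1+u_{n+s+1}x/G_{s+1,n}$, and then argue uniqueness. The only difference is in the uniqueness step: the paper expands $G_{s,n}=1-\sum_{j\geq 1}(-1)^j h_{s,n,j}x^j$ and shows by induction on $j$ that each coefficient $h_{s,n,j}$ depends only on $n+s$, whence $G_{s,n}$ coincides with the series $F_{n+s+1}$ defined by (\ref{85437}); you instead run a direct truncation argument, observing that the recursion $X_s=1+u_{n+s+1}x/X_{s+1}$ together with $X_s\equiv 1\pmod{x}$ determines $X_s\bmod x^{N+1}$ from $N$ steps, so any two solutions agree to all orders. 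These are two phrasings of the same stability fact, and your version is arguably the more direct of the two.
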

\begin{proof}
Let us define a sequence  the formal power series $(F_{s, n})_{s\geq 0}$, by
\[
F_{s, n}=\frac{\mathcal{S}_{s, n}}{\mathcal{S}_{s+1, n}},\;\; \forall s\geq 0
\]
By virtue the second relation in (\ref{8997776}) we get the relation
\begin{equation}
F_{s, n}F_{s+1, n}=F_{s+1, n}+u_{n+s+1}x.
\label{80007}
\end{equation}
Let us define $h_{s, n, j}$ as coefficients in
\[
F_{s, n}(x)=1-\sum_{j\geq 1}(-1)^j h_{s, n, j} x^j.
\]
Substituting the latter into (\ref{80007}) we get, for example, $h_{s, n, 1}=u_{n+s+1}$. Thus $h_{s, n, 1}$ depends only on $n+s$. As is easy to see, the coefficients $h_{s, n, k}$ are determined using a recurrence relation
\begin{equation}
h_{s, n, k}=\sum_{j=1}^{k-1} h_{s, n, j} h_{s+1, n+1, k-j-1}
\label{8000987}
\end{equation}
which follows from (\ref{80007}). Using the relation (\ref{8000987}), by induction,  we obtain the fact that all coefficients $h_{s, n, k}$ depend only on  $n+s$. 
This in turn means that all members of the sequence of formal power series $(F_{s, n})_{s\geq 0}$ are actually expressed through a single series $F_n$ as $F_{s, n}=F_{n+s+1}$, that in turn is defined by (\ref{85437}). 
\end{proof}
\begin{remark}
Let us define `negative' Stiltjes continued fraction  
\[
F^{-}_n(x)=1+\frac{u_n x}{1+\displaystyle{\frac{u_{n- 1} x}{1+\displaystyle{\frac{u_{n- 2} x}{1+\cdots}}}}}
\]
It is  obvious that it satisfies the recurrence relation
\[
F^{-}_nF^{-}_{n-1}=F^{-}_{n-1}+u_n x.
\]
But what is not obvious is that the relation 
\[
\mathcal{S}_{s, n+1}=F^{-}_{n}\mathcal{S}_{s+1, n},\;\; \forall s\geq 0
\]
similar to the relation (\ref{80097})  holds. Note that we will not need the last relation further.
\end{remark}

\subsection{Triple of formal series} 

In the construction presented below we will need series of the form (\ref{680974530}). Namely, let us  define the triple of formal power series $(\mathcal{P}_n(x), \mathcal{Q}_n(x), \mathcal{R}_n(x))$ by
\begin{eqnarray}
\mathcal{P}_n&=&2 \mathcal{S}_{0, n-1}-1 \label{099991} \\
&=&1-2 u_{n-1}x \mathcal{S}_{2, n-2}, \label{0956421}
\end{eqnarray}
\begin{equation}
\mathcal{Q}_n=2 \mathcal{S}_{1, n-1}\;\; \mbox{and}\;\; \mathcal{R}_n=2 u_{n-1} x \mathcal{S}_{1, n-2}.
\label{90834521}
\end{equation}
Note that $\mathcal{P}_n$ is defined here in two ways due to the identity 
\begin{equation}
1-u_{n+1}x \mathcal{S}_{2, n}=\mathcal{S}_{0, n+1}
\label{9000}
\end{equation}
that is a consequence of the factorization property (\ref{887}).
\begin{lemma} \label{77765}
Three linear relations 
\begin{equation}
\mathcal{P}_n+\mathcal{P}_{n+1}=\mathcal{Q}_n,\;\; \mathcal{R}_{n+1}=u_n x \mathcal{Q}_n,\;\; 2 \mathcal{P}_n - \mathcal{Q}_n + \mathcal{R}_n=u_n x \mathcal{Q}_{n+1}
\label{98342}
\end{equation}
and quadratic  one
\begin{equation}
\mathcal{P}^2_n+\mathcal{Q}_n\mathcal{R}_n=1,
\label{98765200}
\end{equation}
for the triple $(\mathcal{P}_n, \mathcal{Q}_n, \mathcal{R}_n)$ given by (\ref{0956421}) and (\ref{90834521}), are identities.
\end{lemma}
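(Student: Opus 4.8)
The plan is to verify all four identities directly from the definitions (\ref{099991})–(\ref{90834521}) together with the shift identities (\ref{8997776}) for the series $\mathcal{S}_{s,n}$, the factorization-type identity (\ref{9000}) (equivalently (\ref{887})), and Lemma (\ref{8889554307}). Since $\mathcal{P}_n,\mathcal{Q}_n,\mathcal{R}_n$ are built out of the $\mathcal{S}_{s,n}$, each relation will reduce to a linear or quadratic identity among the $\mathcal{S}_{s,n}$ that can be read off from the two relations in (\ref{8997776}).

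First I would treat the three linear relations in (\ref{98342}). For $\mathcal{P}_n+\mathcal{P}_{n+1}=\mathcal{Q}_n$, substitute $\mathcal{P}_n=2\mathcal{S}_{0,n-1}-1$ and $\mathcal{P}_{n+1}=2\mathcal{S}_{0,n}-1$; then the claim is $2\mathcal{S}_{0,n-1}+2\mathcal{S}_{0,n}-2=2\mathcal{S}_{1,n-1}$, i.e. $\mathcal{S}_{0,n-1}+\mathcal{S}_{0,n}-1=\mathcal{S}_{1,n-1}$. Using the first relation in (\ref{8997776}) at $s=0$ we have $\mathcal{S}_{0,n}=\mathcal{S}_{1,n-1}+u_{n-1}x\,\mathcal{S}_{2,n-2}$, and by (\ref{9000}) (with $n\mapsto n-2$) $u_{n-1}x\,\mathcal{S}_{2,n-2}=1-\mathcal{S}_{0,n-1}$, so $\mathcal{S}_{0,n}=\mathcal{S}_{1,n-1}+1-\mathcal{S}_{0,n-1}$, which is exactly what is needed. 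The relation $\mathcal{R}_{n+1}=u_nx\,\mathcal{Q}_n$ is immediate from (\ref{90834521}): $\mathcal{R}_{n+1}=2u_nx\,\mathcal{S}_{1,n-1}=u_nx\cdot 2\mathcal{S}_{1,n-1}=u_nx\,\mathcal{Q}_n$. For the third, $2\mathcal{P}_n-\mathcal{Q}_n+\mathcal{R}_n=u_nx\,\mathcal{Q}_{n+1}$, I would rewrite the left side using $\mathcal{P}_n=1-2u_{n-1}x\,\mathcal{S}_{2,n-2}$ and $\mathcal{R}_n=2u_{n-1}x\,\mathcal{S}_{1,n-2}$, getting $2-4u_{n-1}x\,\mathcal{S}_{2,n-2}-2\mathcal{S}_{1,n-1}+2u_{n-1}x\,\mathcal{S}_{1,n-2}$; combining $-4u_{n-1}x\,\mathcal{S}_{2,n-2}+2u_{n-1}x\,\mathcal{S}_{1,n-2}$ via the first relation in (\ref{8997776}) at $s=0$ ($\mathcal{S}_{1,n-2}$ in terms of $\mathcal{S}_{1,n-1}$ and $\mathcal{S}_{2,n-2}$? — actually it is cleaner to use $\mathcal{S}_{0,n}=\mathcal{S}_{1,n-1}+u_{n-1}x\mathcal{S}_{2,n-2}$ and $\mathcal{S}_{1,n-1}=\mathcal{S}_{2,n-1}+u_{n-1}x\mathcal{S}_{3,n-2}$), one should arrive at $2u_nx\,\mathcal{S}_{1,n}=u_nx\,\mathcal{Q}_{n+1}$; the needed intermediate identity is again supplied by the second relation in (\ref{8997776}) shifting the level index. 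This step is routine but bookkeeping-heavy.

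The quadratic identity (\ref{98765200}), $\mathcal{P}_n^2+\mathcal{Q}_n\mathcal{R}_n=1$, is the main obstacle. I see two viable routes. The first is to use the already-established linear relations together with Lemma (\ref{80097}): from $\mathcal{S}_{s,n}=F_{n+s+1}\mathcal{S}_{s+1,n}$ one can try to express $\mathcal{P}_n,\mathcal{Q}_n,\mathcal{R}_n$ in terms of a single series $F_n$ and of $\mathcal{S}_{1,n-1}$, and the quadratic relation should collapse to the defining recurrence (\ref{85437}) for $F_n$; indeed the whole construction is designed so that $(\mathcal{P}_n,\mathcal{Q}_n,\mathcal{R}_n)$ mimics a Lax triple whose determinant $\mathcal{P}_n^2+\mathcal{Q}_n\mathcal{R}_n$ is the conserved "$f$" equal to $1$. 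The second route, more self-contained, is induction on the order in $x$: write $\mathcal{P}_n^2+\mathcal{Q}_n\mathcal{R}_n-1=\sum_{m\geq 1}A_m x^m$ with $A_m$ a polynomial in the $S^j_s$, and show $A_m=0$ by extracting coefficients and repeatedly applying (\ref{6753098542}), (\ref{675872}) and (\ref{8889554307}); the $m=1$ case is a short check, and the inductive step uses the shift identities to telescope. I expect the cleanest write-up to combine the two: establish the three linear relations first, then observe that $\Phi_n:=\mathcal{P}_n^2+\mathcal{Q}_n\mathcal{R}_n$ satisfies, as a consequence of (\ref{98342}), the invariance $\Phi_{n+1}=\Phi_n$ (the standard Lax-pair computation: $\tilde L=M^{-1}LM$-type conjugation forces $\det$ to be preserved), and finally evaluate $\Phi_n$ on its lowest-order term, where $\mathcal{P}_n\equiv 1$, $\mathcal{Q}_n\mathcal{R}_n\equiv 0\pmod{x}$ give $\Phi_n\equiv 1$; since $\Phi_n$ is $n$-independent and its constant term is $1$, and since the shift relations pin down all higher coefficients from the lower ones, one concludes $\Phi_n=1$ identically. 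The delicate point is justifying the conjugation/invariance argument purely formally at the level of these pseudo-difference symbols rather than genuine matrices, which is why I would, if pressed, fall back on the direct coefficient induction using (\ref{8997776}) and (\ref{887}).
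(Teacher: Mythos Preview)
Your treatment of the three linear relations is sound and in fact close to the paper's, though organized differently: for the first relation the paper invokes the coefficient identity $S^k_k(n)+S^k_k(n+1)=S^k_{k+1}(n)$ directly, whereas you work at the level of generating series via (\ref{8997776}) and (\ref{9000}); both amount to the same thing. For the third relation the paper appeals to the ready-made identity (\ref{8880907}), which is exactly what your ``routine but bookkeeping-heavy'' step becomes once carried out (use $\mathcal{S}_{0,n-1}=\mathcal{S}_{1,n-1}+u_nx\,\mathcal{S}_{2,n-1}$, then $\mathcal{S}_{1,n-2}-\mathcal{S}_{2,n-2}=u_nx\,\mathcal{S}_{3,n-2}$, then $\mathcal{S}_{2,n-1}+u_{n-1}x\,\mathcal{S}_{3,n-2}=\mathcal{S}_{1,n}$). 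So there is no real gap in the linear part, only an unfinished computation.

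For the quadratic identity your invariance route is genuinely different from the paper's and is almost complete, but the last step is not justified. The implication $\Phi_{n+1}=\Phi_n$ does follow algebraically from the three linear relations (the $u_nx$ cancels in $\mathcal{Q}_{n+1}\mathcal{R}_{n+1}$), but ``$\Phi_n$ is $n$-independent with constant term $1$'' does not by itself give $\Phi_n=1$; a priori the coefficient of $x^k$ could be a nonzero shift-invariant expression. What closes the argument is the observation that, because each $S^j_s$ is homogeneous of degree $j$ in the $u$'s, the coefficient of $x^k$ in $\Phi_n$ for $k\ge 1$ is a homogeneous polynomial of degree $k$ in finitely many free variables $u_{n-a},\ldots,u_{n+b}$; shift-invariance of such a polynomial forces it to be independent of all $u_j$, hence identically $0$. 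Your phrase ``the shift relations pin down all higher coefficients from the lower ones'' is not the right reason. By contrast, the paper proceeds by direct expansion of $\mathcal{P}_n^2+\mathcal{Q}_n\mathcal{R}_n$, reduces the claim to the single product identity $\mathcal{S}_{2,n}\mathcal{S}_{0,n+1}=\mathcal{S}_{1,n}\mathcal{S}_{1,n+1}$ (via (\ref{9000})), and then reads this off from the factorization $\mathcal{S}_{s,n}=F_{n+s+1}\mathcal{S}_{s+1,n}$ of Lemma~(\ref{80097}). Your Route~1 is essentially this; the paper's version is short and avoids any appeal to shift-invariance of free polynomials.
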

\begin{proof}
The first relation in (\ref{98342}) follows from the identity (\ref{8889554307}), while the second one follows from the definition of these formal power  series. Finally, the third relation  in   (\ref{98342}), perhaps the most non-trivial, can be obtained from the identity (\ref{8880907}). It remains to prove that the relation (\ref{98765200}) is an identity. By (\ref{0956421}) and (\ref{90834521}), we have
\begin{eqnarray*}
\mathcal{P}^2_n+\mathcal{Q}_n\mathcal{R}_n&=& \left(1-2u_{n-1}x \mathcal{S}_{2, n-2}\right)^2 +4u_{n-1}x \mathcal{S}_{1, n-2} \mathcal{S}_{1, n-1}  \\
&=&1-4u_{n-1}x \mathcal{S}_{2, n-2}+4u^2_{n-1}x^2 \mathcal{S}_{2, n-2}^2  +4u_{n-1}x \mathcal{S}_{1, n-2}\mathcal{S}_{1, n-1}.
\end{eqnarray*}
From this we see that  result follows provided the relation  $\mathcal{S}_{2, n}\left(1-u_{n+1}x \mathcal{S}_{2, n}\right)=\mathcal{S}_{1, n}\mathcal{S}_{1, n+1}$ is an identity. Using   (\ref{9000}), we can rewrite it  as  
\begin{equation}
\mathcal{S}_{2, n}\mathcal{S}_{0, n+1}=\mathcal{S}_{1, n}\mathcal{S}_{1, n+1}
\label{987453}
\end{equation}
but the latter can be  proved using  (\ref{80097}). Indeed, substituting $\mathcal{S}_{1, n}=F_{n+2} \mathcal{S}_{2, n}$ into the right-hand side of the relation (\ref{987453}) leads it to $\mathcal{S}_{0, n+1}=F_{n+2} \mathcal{S}_{1, n+1}$ which is  a special case of (\ref{80097}).
\end{proof}

It should be noted that the identities (\ref{98342})  can be written in the form of a matrix relation which formally looks like the Lax equation 
$\mathcal{L}_n\mathcal{M}_n=\mathcal{M}_n\mathcal{L}_{n+1}$, where
\[
\mathcal{L}_n=
\left(
\begin{array}{cc}
\mathcal{P}_n & \mathcal{R}_n \\
\mathcal{Q}_n & -\mathcal{P}_n 
\end{array}
\right)\;\; \mbox{and}\;\;
\mathcal{M}_n= 
\left(
\begin{array}{cc}
1 & u_n x \\
1 & 0 
\end{array}
\right),
\]
however, it is important to note that  this relation does not define any dynamical system since the sequence $(u_n)$ remains free in this framework. From  Lemma \ref{77765}, it follows that $\det \mathcal{L}_n=1$. Note also that due to (\ref{80097}) and  (\ref{099991}) we get
\begin{equation}
F_n=\frac{2 \mathcal{S}_{0, n-1}}{2 \mathcal{S}_{1, n-1}}=\frac{1+\mathcal{P}_n}{\mathcal{Q}_n}.
\label{9874430}
\end{equation}

Let $y(x)=1+\sum_{j\geq 1}H_jx^j$, where $H_j$'s are supposed to be arbitrary constants and
\begin{equation}
P_n=\mathcal{P}_n y,\;\; Q_n=\mathcal{Q}_n y,\;\; R_n=\mathcal{R}_n y.
\label{8965421098}
\end{equation}
It is obvious that the triple $(P_n, Q_n, R_n)$ satisfies the same linear homogeneous identities (\ref{98342}), while (\ref{98765200}) becomes
$y^2=P^2_n+Q_nR_n$. In turn the relation (\ref{9874430}) takes the form $F_n=\left(y+P_n\right)/Q_n$.

So we have a triple of formal series
\begin{equation}
P_n(x)=1+\sum_{j\geq 1}p_{n, j}x^j,\;\; Q_n(x)=2+\sum_{j\geq 1}q_{n, j}x^j,\;\;   R_n(x)=\sum_{j\geq 1} r_{n, j}x^j
\label{66543}
\end{equation}
with coefficients that have the following form:
\begin{eqnarray}
p_{n, k}&=&(-1)^k 2\sum_{j=0}^{k-1}(-1)^jS_{k-j}^{k-j}(n+j-k)H_j+H_k \label{98005} \\ 
&=&(-1)^k 2 u_{n-1} \hat{S}^{k-1}_{k+1}(n-k)+H_k,
\label{985325}
\end{eqnarray}
\begin{equation}
q_{n, k}=p_{n, k}+p_{n+1, k}=(-1)^k 2 \hat{S}^{k}_{k+1}(n-k),
\label{887453}
\end{equation}
\begin{equation}
r_{n, k}= u_{n-1} q_{n-1, k-1}= (-1)^{k-1} 2  u_{n-1} \hat{S}^{k-1}_{k}(n-k),\;\; \forall k\geq 1.
\label{88000}
\end{equation}
What is important for us and what we actually achieved is that these formal power series satisfy the relations (\ref{8877545}), but while the triple $(P_n, Q_n, R_n)$ is represented by formal power series, these relations are only identities and do not define any recurrences. In the next section, we show how to use a suitable truncation of a formal power series $P_n(x)$ to obtain a Lax representation for a recurrence of order $2g+1$. 
\begin{remark}
Let us say a few words to clarify some details. In the formulas (\ref{985325}), (\ref{887453}) and  (\ref{88000}), we have used the convenient notation
\begin{equation}
\hat{S}^k_s(n)=\sum_{j=0}^{k} (-1)^j  S^{k-j}_{s-j}(n+j)H_j
\label{8909876}
\end{equation}
that we will continue to use throughout the text. For convenience, it is assumed here that $H_0=1$. 
It is easy to verify that  expression (\ref{98005}) is equivalent to (\ref{985325}) due to the factorization  identity (\ref{887}). In turn, to obtain (\ref{887453}) we have used the identity (\ref{8889554307}). 
\end{remark}

\section{Recurrence of order $2g+1$}
\label{6752109654}

The goal of this section is to write explicit formulas that relate the system of equations (\ref{880099875})  to some  polynomial recurrence on the sequence $(u_n)$  of order $2g+1$.    In the previous section, we actually prepared the ground for solving this problem.

\subsection{Main theorem}

Let us consider a triple of infinite formal power series $(P_n, Q_n, R_n)$ defined by (\ref{8965421098}). Their coefficients are determined by  (\ref{985325}), (\ref{887453}) and (\ref{88000}). 
Let us suppose  that 
\begin{equation}
p_{n, k}=0,\;\; \forall k\geq g+1
\label{094390}
\end{equation}
for some $g\geq 1$.  We observe that, by virtue of (\ref{887453}) and (\ref{88000}), truncation only the series $P_n(x)$ in fact entails truncation  the series $Q_n(x)$ and $R_n(x)$ in such a way that they become polynomials in $x$ of the form (\ref{87324}).  As a result, the restriction (\ref{094390}) forces the sequence $(u_n)$ to  be a solution to some equation, which we will write out below. Indeed, from the second and third relation in (\ref{880099875}), it follows that 
\[
q_{n+1, g}=\frac{r_{n, g+1}}{u_n}\;\; \mbox{and}\;\; r_{n+1, g+1}=u_nq_{n, g}
\]
and as a consequence we get the relation
\begin{equation}
u_{n+g}q_{n+g, g}=u_{n+g+1}q_{n+g+2, g}.
\label{786532098}
\end{equation}
Finally, taking into account (\ref{887453}),  we can rewrite (\ref{786532098}) as the equation 
\begin{equation}
u_{n+g}\hat{S}^g_{g+1}(n)=u_{n+g+1}\hat{S}^g_{g+1}(n+2),
\label{88711111}
\end{equation}
where, let us remind ourselves, we use the notation (\ref{8909876}). This is actually the polynomial recursion (\ref{98430}) we wanted to derive.  Next we will show that (\ref{88711111}) represents a known equation and of course some facts are known for it.

If we look at the explicit expression (\ref{67500000000}) for the polynomials $S^k_s(n)$, we see that the relation (\ref{88711111}) is a discrete equation of order $2g+1$. Fortunately, the variable $u_{n+2g+1}$ enters the relation (\ref{88711111}) linearly and thus we can, if necessary, resolve this relation with respect to this variable to rewrite it as 
\[
u_{n+2g+1}=R(u_n,\ldots, u_{n+2g}; H_1,\ldots, H_g),
\]
where $R$ is some rational function of its arguments although it is clear that in this form this equation does not look so nice as (\ref{88711111}) does. Any sequence satisfying (\ref{88711111}) is uniquely determined by a finite set $(u_0,\ldots, u_{2g}, H_1,\ldots, H_g)\in\mathbb{C}^{3g+1}$, that is, the number of parameters required to uniquely determine the solution in this case is equal to $3g+1$, that is, this number is equal to the dimension of the phase space $M_g$ on which the Volterra map is defined. 
The difference between desired  recursion (\ref{98430})  and the relation (\ref{88711111}) is only that there are parameters $\left(c_1,\ldots, c_g\right)$ in (\ref{98430}), however in (\ref{88711111})  one sees the parameters  $\left(H_1,\ldots, H_g\right)$. But this is not a problem, because, as we already said, these parameters are related to each other by a birational transformation (\ref{00985463}).

Let us  formulate the main result of this paper in the form of the following theorem:
\begin{theorem} \label{8887}
Given any solution of the system of equations (\ref{880099875})  generated by the Volterra map $\mathcal{V}_g$, the sequence  $(u_n)$, where $u_n=p_{n, 1}-q_{n, 1}/2+r_{n, 1}/2$ satisfies  equation (\ref{88711111}) with  parameters $\left(H_1,\ldots, H_g\right)$ related birationally to the invariants $\left(c_1,\ldots, c_g\right)$ by the generating relation
\begin{equation}
\left(1+\sum_{j\geq 1} H_j x^j\right)^2=1+\sum_{j\geq 1}^{2g+1} c_j x^j.
\label{09342}
\end{equation}
\end{theorem}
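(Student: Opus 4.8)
The plan is to assemble the theorem from the infrastructure built in the previous sections rather than to compute anything by hand. The key observation is that Section~\ref{873289000} already produces, for a \emph{free} sequence $(u_n)$, a triple of formal power series $(P_n,Q_n,R_n)$ given by (\ref{8965421098}) whose coefficients (\ref{985325}), (\ref{887453}), (\ref{88000}) satisfy the relations (\ref{8877545}) identically, together with $y^2=P_n^2+Q_nR_n$. The only thing separating this formal identity from an actual dynamical system is that the series are not polynomials. So the first step is to impose the truncation hypothesis (\ref{094390}): assume there is a $g\ge1$ with $p_{n,k}=0$ for all $k\ge g+1$. I would then argue, as sketched in the text before the theorem statement, that (\ref{887453}) and (\ref{88000}) force $Q_n$ and $R_n$ to truncate as well, so that $(P_n,Q_n,R_n)$ becomes exactly a triple of the form (\ref{87324}), and the formal Lax identity $\mathcal{L}_n\mathcal{M}_n=\mathcal{M}_n\mathcal{L}_{n+1}$ becomes a genuine instance of (\ref{897653}), i.e.\ a solution of the Volterra dynamical system (\ref{880099875}).

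Second, I would run the implication in the other direction, which is what the theorem actually asserts: start from an \emph{arbitrary} solution $(\xi_n)$ of (\ref{880099875}) coming from iterating $\mathcal{V}_g$, set $u_n=p_{n,1}-q_{n,1}/2+r_{n,1}/2$ as in (\ref{8899}), and show this $(u_n)$ satisfies (\ref{88711111}). Here I would use the continued-fraction material: by the Lemma around (\ref{890976543})--(\ref{89863}), the quantities $u_n$ obtained from the Volterra iteration are precisely the partial numerators of the Stieltjes continued fraction expansion of $(y+P_0)/Q_0$, and by Lemma with (\ref{80097}) and (\ref{9874430}) the series $(\mathcal{P}_n,\mathcal{Q}_n,\mathcal{R}_n)$ reconstructed from this very sequence $(u_n)$ via (\ref{0956421})--(\ref{90834521}) reproduce, after multiplication by $y$, the triple $(P_n,Q_n,R_n)$ of the given orbit. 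In particular $p_{n,k}=0$ for $k\ge g+1$ holds by construction (the original $P_n(x)$ has degree $g$), so the truncation hypothesis is automatically met and the computation leading from the second and third relations of (\ref{880099875}) to (\ref{786532098}) and then, via (\ref{887453}), to (\ref{88711111}) goes through. The derivation of (\ref{88711111}) itself is the short chain $q_{n+1,g}=r_{n,g+1}/u_n$, $r_{n+1,g+1}=u_nq_{n,g}$, hence $u_{n+g}q_{n+g,g}=u_{n+g+1}q_{n+g+2,g}$, and substituting (\ref{887453}).

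Third, I would address the parameter correspondence. The invariants $c_1,\dots,c_{2g+1}$ are fixed along the orbit (Section on Invariants), and $y=\sqrt{f(x)}=1+\sum H_jx^j$ was defined in (\ref{89000000983})--(\ref{00985463}) exactly so that $y^2=f(x)$, which is the relation (\ref{09342}); this is manifestly a birational (indeed, triangular and polynomial-with-polynomial-inverse) change between $(H_1,\dots,H_g)$ and $(c_1,\dots,c_g)$ as already recorded in (\ref{00985463}). I would note that only the first $g$ of the $H_j$ appear as genuine parameters in (\ref{88711111}) because, by (\ref{094390}) together with (\ref{985325}), $u_{n-1}\hat S^{k-1}_{k+1}(n-k)$ must vanish for $k\ge g+1$, which recursively pins down $H_{g+1},H_{g+2},\dots$ as polynomials in $u_n$'s and $H_1,\dots,H_g$; this is the content of Theorem~\ref{897643098}'s recursion and need not be redone.

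The main obstacle, and the step I would be most careful about, is the claimed \emph{equivalence} (birational) between the Volterra system (\ref{880099875}) and the scalar recurrence (\ref{88711111}): the easy direction is ``orbit $\Rightarrow$ $(u_n)$ solves (\ref{88711111})'', but to call it a genuine correspondence one must check that a solution of (\ref{88711111}) with its $3g+1$ free data $(u_0,\dots,u_{2g},H_1,\dots,H_g)$ reconstructs a full Volterra orbit, i.e.\ that the formulas (\ref{985325})--(\ref{88000}) do land in $M_g$ and do satisfy \emph{all} of (\ref{880099875}), not merely the single consequence (\ref{88711111}). This is where Lemma~\ref{77765} does the heavy lifting — it guarantees the linear relations (\ref{98342}) and the quadratic one (\ref{98765200}) hold as identities in the free sequence — so the remaining point is purely bookkeeping: that the degree-$g$ truncation is consistent (the vanishing of the higher $p_{n,k}$ propagates in $n$), which follows because (\ref{88711111}) is precisely the compatibility condition of that truncation with the shift. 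I would present this consistency check explicitly, since it is the one place where the ``free sequence'' identities of Section~\ref{873289000} must be specialized correctly.
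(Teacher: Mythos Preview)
Your proposal is essentially correct and follows the same route as the paper: both arguments assemble the result from the formal-series machinery of Section~\ref{873289000} (Lemma~\ref{77765} and formulas (\ref{985325})--(\ref{88000})), impose the truncation (\ref{094390}), and then run the short chain of manipulations (\ref{786532098}) $\to$ (\ref{88711111}) using (\ref{887453}).

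You are, if anything, more explicit than the paper about one point the paper treats as understood: the identification of an \emph{arbitrary} Volterra orbit with a truncated instance of the formal triple $(\mathcal P_n y,\mathcal Q_n y,\mathcal R_n y)$ --- i.e., why formula (\ref{887453}) may be invoked for the orbit's $q_{n,g}$. Your continued-fraction idea is the right one, but note that the single relation $(y+P_n)/Q_n=(1+\mathcal P_n)/\mathcal Q_n$ obtained from (\ref{89863}) and (\ref{9874430}) does not by itself pin down $P_n$ and $Q_n$ individually; you must still feed in the quadratic constraint (\ref{98765200}) (equivalently $y^2=P_n^2+Q_nR_n$) or, as the paper does implicitly via Section~\ref{873289}, the inductive computation of $p_{n,k},q_{n,k},r_{n,k}$ from the invariants $c_k$, to complete the match. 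One small slip: from (\ref{094390}) and (\ref{985325}) it is $H_k$ that becomes determined for $k\ge g+1$, not that $u_{n-1}\hat S^{k-1}_{k+1}(n-k)$ vanishes --- though you draw the correct conclusion two lines later. Your final paragraph on full birational equivalence goes beyond what Theorem~\ref{8887} asserts; in the paper that discussion is deferred to the subsection on birational correspondence following the theorem.
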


\subsection{Birational correspondence}

In addition to  Theorem \ref{8887}, we note the following. Formulas (\ref{985325}), (\ref{887453}) and (\ref{88000}) provide us explicit expressions for the polynomial mapping $\left(u_{-g-1},\ldots, u_{g-1}; H_1,\ldots, H_g\right) \rightarrow \left(p_1,\ldots, p_g, q_1,\ldots, q_g, r_1,\ldots, r_{g+1}\right)$. Theoretically,  one can obtain formulas for rational mapping $\left(u_0,\ldots, u_{2g}; H_1,\ldots, H_g\right) \rightarrow \left(p_1,\ldots, p_g, q_1,\ldots, q_g, r_1,\ldots, r_{g+1}\right)$ but it looks more cumbersome.

To illustrate the above, let us  give two  examples.
\begin{example} $g=1$. In this case, equation (\ref{88711111}) is specified as the third-order equation
\begin{equation}
u_{n+1}\left(u_n+u_{n+1}-H_1\right)=u_{n+2}\left(u_{n+2}+u_{n+3}-H_1\right).
\label{16753090097438}
\end{equation}
In this case  the  polynomial mapping $\left(u_{-2}, u_{-1}, u_0, H_1\right)\rightarrow \left(p_1, q_1, r_1, r_2\right)$ is defined by
\begin{equation}
p_1=-2u_{-1}+H_1,\;\; q_1=-2\left(u_{-1}+u_0-H_1\right),\;\; r_1=2u_{-1},\;\; r_2=-2u_{-1}\left(u_{-2}+u_{-1}-H_1\right).
\label{90632453}
\end{equation}
The inverse mapping   looks like this:
\begin{equation}
u_{-2}=\frac{2p_1+r_1}{2}-\frac{r_2}{r_1},\;\; u_{-1}=\frac{r_1}{2},\;\; u_{0}=\frac{2 p_1-q_1+r_1}{2},\;\; H_1=p_1+r_1.
\label{8973209}
\end{equation}
\end{example}
\begin{example} $g=2$. 
In this example, to avoid cumbersomeness, we use the notation of discrete polynomials $S^k_s(n)$.
In this case, equation (\ref{88711111}) is specified as the fifth-order equation
\begin{equation}
u_{n+2}\left(S^2_{3}(n)-H_1 S^1_{2}(n+1)+H_2\right)=u_{n+3}\left(S^2_{3}(n+2)-H_1 S^1_{2}(n+3)+H_2\right).
\label{16888658}
\end{equation}
The  polynomial mapping $\left(u_{-3}, u_{-2}, u_{-1}, u_0, u_1, H_1, H_2\right)\rightarrow \left(p_1, p_2, q_1, q_2, r_1, r_2, r_3\right)$ is defined by (\ref{90632453}) together with
\[
p_2=2u_{-1}\left(S^1_3(-2)-H_1\right)+H_2,\;\; 
q_2=2\left(S^2_3(-2)-H_1 S^1_2(-1)+H_2\right),
\]
\[
r_3=2 u_{-1} \left(S^2_3(-3)-H_1 S^1_2(-2)+H_2\right).
\]
In turn, the inverse mapping is defined by  (\ref{8973209}) supplemented with 
\[
u_{-3}=\frac{4p_1r_1r_2+2p_1r_1^3+4r_1r_3+r_1^4-q_1r_1^3-4p_2r_1^2-4r_2^2}{2r_1\left(2p_1r_1+r_1^2-2r_2\right)},
\]
\[
u_1=\frac{q_1+r_1}{2}-\frac{2p_2+r_2-q_2}{2p_1+r_1-q_1}
\]
and
\[
H_2=p_2+r_2-p_1r_1+\frac{q_1 r_1}{2}-\frac{r_1^2}{2}.
\]
\end{example}
\begin{remark}
Note  that  (\ref{16888658}) was written out explicitly in \cite{Hone1} as the Equation (3.51), but it looks there  in a rather cumbersome form.  
\end{remark}

\subsection{Invariants for  equation (\ref{88711111})} 

Obviously, that by virtue of (\ref{985325}),   we can rewrite  (\ref{094390}) as 
\begin{equation}
H_k=(-1)^{k+1} 2 u_{n-1} \hat{S}^{k-1}_{k+1}(n-k),\;\; \forall k\geq g+1.
\label{1688}
\end{equation}
It is important to note that we have the right to impose   such restrictions only   if the right-hand side of (\ref{1688}), by virtue of the equation (\ref{88711111}) is independent of $n$ or in other words is an invariant for this equation. If this is indeed the case, then we may rewrite (\ref{1688}) as  
\begin{equation}
H_k=(-1)^{k+1} 2 u_{n+k-1} \hat{S}^{k-1}_{k+1}(n),\;\; \forall k\geq g+1.
\label{168899987}
\end{equation}
Let us take  $k=g+1$ in (\ref{168899987}) first. We denote
\begin{equation}
\mathcal{G}_{g, 0}=u_{n+g} \hat{S}^g_{g+2}(n) 
\label{987887} 
\end{equation}
Next we will prove that (\ref{987887}) is indeed an invariant for  the equation (\ref{88711111}) in a somewhat broader context using the ready results of the work \cite{Svinin4} but for now let us take it on faith. 
\begin{remark}
Observe that the right-hand side of (\ref{987887}) is some polynomial in a finite number of variables $\left(u_{n},\ldots, u_{n+2g}\right)$ and therefore we can consider this relation  as an equation of order $2g$. 
In particular, in the case $g=2$,  (\ref{987887}) coincides with the equation (\ref{983210987}) in the Introduction. 
\end{remark}

It turns out that is not at all obvious that to determine a larger number of invariants for  equation (\ref{88711111}), we need discrete polynomials
\begin{equation}
Q_s^k(n)=S_s^k(n)-u_{n+k-2}u_{n+s}S_{s+2}^{k-2}(n).
\label{00000932}
\end{equation}
For instance,  $Q_s^1(n)=S_s^1(n)\;\; \mbox{and}\;\; Q_s^2(n)=S_s^2(n)-u_{n}u_{n+s},\; \forall s\geq 1$. 

For the following proposition we will present further  a proof.
\begin{proposition} \label{983421098}
Given any $g\geq 1$, a discrete polynomial 
\[
\mathcal{G}_{g, r}=\sum_{j=0}^r Q^j_{2r-j}(n+g-r+1) G_{g, r-j}(n),\;\; \forall r=0,\ldots, g,
\]
where
\begin{equation}
G_{g, r}(n)=\prod_{j=g-r}^{g+r}u_{n+j} \hat{S}^{g-r}_{g+r+2}(n),
\label{9532}
\end{equation}
is an invariant for the equation (\ref{88711111}).
\end{proposition}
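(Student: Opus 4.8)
The plan is to establish that each $\mathcal{G}_{g,r}$ is an invariant of the recurrence (\ref{88711111}) by showing directly that $\mathcal{G}_{g,r}(n+1)=\mathcal{G}_{g,r}(n)$ as a consequence of (\ref{88711111}). The case $r=0$ is already singled out: $\mathcal{G}_{g,0}=u_{n+g}\hat S^g_{g+2}(n)$, and the claim is that (\ref{88711111}) forces this to be $n$-independent. I would first dispatch $r=0$ separately, deriving it from the structural identities of Section~\ref{873289000}: observe that (\ref{88711111}) says $u_{n+g}\hat S^g_{g+1}(n)=u_{n+g+1}\hat S^g_{g+1}(n+2)$, so it suffices to relate $\hat S^g_{g+2}(n)$ to $\hat S^g_{g+1}(n)$ and $\hat S^g_{g+1}(n+1)$ via the two fundamental recursions (\ref{6753098542}) and (\ref{675872}) applied termwise in $j$ inside the definition (\ref{8909876}) of $\hat S$. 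Concretely, I expect that $\hat S^k_{s}(n)+\hat S^k_{s}(n+1)$ telescopes against $\hat S^k_{s+1}(n)$ using the analogue of (\ref{8889554307}) for the weighted sums $\hat S$, and this together with (\ref{88711111}) yields $u_{n+g}\hat S^g_{g+2}(n)-u_{n+g+1}\hat S^g_{g+2}(n+1)=0$ after cancellation.

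For the general $r$, the natural strategy is induction on $r$, reducing the invariance of $\mathcal{G}_{g,r}$ to that of $\mathcal{G}_{g,r-1}$ (or on $\mathcal{G}_{g,r-j}$ for $j\ge1$), using the summation structure $\mathcal{G}_{g,r}=\sum_{j=0}^r Q^j_{2r-j}(n+g-r+1)G_{g,r-j}(n)$. Here the polynomials $Q^k_s(n)=S^k_s(n)-u_{n+k-2}u_{n+s}S^{k-2}_{s+2}(n)$ play the role of "conserved building blocks"; I would first prove a shift identity expressing $Q^j_{2r-j}(n+g-r+2)-Q^j_{2r-j}(n+g-r+1)$ in terms of lower polynomials and factors of $u$, so that the difference $\mathcal{G}_{g,r}(n+1)-\mathcal{G}_{g,r}(n)$ collapses into a sum of terms each of which is either manifestly zero by (\ref{88711111}) or is a multiple of $\mathcal{G}_{g,r-j}(n+1)-\mathcal{G}_{g,r-j}(n)$, which vanishes by the inductive hypothesis. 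The factorized form (\ref{9532}) of $G_{g,r}$, namely $G_{g,r}(n)=\big(\prod_{j=g-r}^{g+r}u_{n+j}\big)\hat S^{g-r}_{g+r+2}(n)$, should be the key to making these cancellations explicit, since the product of $u$'s shifts cleanly and $\hat S^{g-r}_{g+r+2}$ obeys the same telescoping identity used in the $r=0$ case.

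Since the excerpt promises that the proof uses "the ready results of the work \cite{Svinin4}," I would invoke whatever factorization/translation identities for the polynomials $S^k_s$ and $Q^k_s$ are established there — in particular the general factorization identity already quoted in the proof of the lemma giving (\ref{887}) — to avoid re-deriving the combinatorial identities from scratch. The cleanest route may be to reinterpret $\mathcal{G}_{g,r}$ via the pseudo-difference operator $\mathcal{T}$ of (\ref{67888}) and the formal series $\mathcal{S}_{s,n}$: if $\mathcal{G}_{g,r}$ can be read off as a coefficient in a product of the series $\mathcal{S}_{s,n}$ and the constant series $y(x)=1+\sum H_jx^j$, then invariance under (\ref{88711111}) becomes a statement about how that product transforms under $n\mapsto n+1$, which the relations (\ref{8997776}) and (\ref{80097}) control directly.

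The main obstacle I anticipate is the bookkeeping in the inductive step: verifying that the "error terms" produced by shifting the $Q^j_{2r-j}$ factors reorganize exactly into lower $\mathcal{G}_{g,r-j}$ differences, with the correct index shifts $n+g-r+1$, requires a precise shift lemma for $Q^k_s$ that is not stated in the excerpt and whose proof — likely another application of (\ref{6753098542}) and (\ref{675872}) combined with the definition (\ref{00000932}) — is where the real work lies. A secondary subtlety is confirming that the right-hand side of (\ref{168899987}) genuinely does not depend on $n$ for each $k\ge g+1$, not just for $k=g+1$; this should follow once the $r=g$ case of the proposition is in hand, since the higher $H_k$ are then expressible through the already-established invariants $\mathcal{G}_{g,0},\dots,\mathcal{G}_{g,g}$ and the parameters via (\ref{09342}).
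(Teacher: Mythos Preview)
Your proposal sketches an induction on $r$, expecting the difference $\mathcal{G}_{g,r}(n+1)-\mathcal{G}_{g,r}(n)$ to decompose into pieces proportional to the lower differences $\mathcal{G}_{g,r-j}(n+1)-\mathcal{G}_{g,r-j}(n)$ plus terms killed directly by (\ref{88711111}). The paper does something genuinely different and more direct. It first embeds (\ref{88711111}) in the two-parameter family $u_{n+k}\hat S^k_s(n)=u_{n+s}\hat S^k_s(n+2)$ and records (from \cite{Svinin4}) the equivalent form $u_{n+k}\hat S^k_{s+1}(n)=u_{n+s}\hat S^k_{s+1}(n+1)$; this already gives the $r=0$ case in one line. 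For general $r$ it quotes from \cite{Svinin4} a single integrating-factor identity
\[
G_r^{(k,s)}(n+1)-G_r^{(k,s)}(n)=\Lambda_r^{(k,s)}(n)\bigl(u_{n+s-r}\,\hat S^{k+r}_{s-r+1}(n+1)-u_{n+k+r}\,\hat S^{k+r}_{s-r+1}(n)\bigr),
\]
valid for all $r$ simultaneously, where $G_r^{(k,s)}$ is defined so that $\mathcal{G}_{g,r}=G_r^{(g-r,\,g+r+1)}$. Specializing $(k,s)=(g-r,g+r+1)$ turns the bracketed factor into $u_{n+g+1}\hat S^g_{g+2}(n+1)-u_{n+g}\hat S^g_{g+2}(n)$, which is precisely the equivalent form of (\ref{88711111}) and hence vanishes. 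No induction on $r$ is used: for every $r$ the difference factors as a single product, and the factor that dies is always the \emph{same} $(g,g+1)$ recurrence, never a lower-$r$ invariant.

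This matters for your plan because the structural expectation behind your induction step --- that shifting the $Q^j_{2r-j}$ factors produces error terms organized as lower $\mathcal{G}_{g,r'}$ differences --- is not the mechanism actually at work. The identity above shows the cancellation is multiplicative, not telescoping against lower invariants, so the ``shift lemma for $Q^k_s$'' you flag as the main obstacle is not the relevant ingredient. Your $r=0$ sketch is essentially correct (it amounts to the equivalent form just cited), and your instinct to lean on \cite{Svinin4} is right; but for $r\ge 1$ the paper's route via the integrating-factor identity is both shorter and structurally different from the inductive reduction you outline.
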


Now we need to prove that,  under these circumstances, the  $H_k$, for any $k\geq g+2$,  are also invariant for the equation (\ref{88711111}).  The following lemma will be useful for this.
\begin{lemma}
Any solution of  $g$-th equation (\ref{88711111}) is also the solution $(g+1)$-th equation (\ref{88711111}) with suitable parameters.
\end{lemma}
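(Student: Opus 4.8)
The statement to prove is: any solution of the $g$-th equation \eqref{88711111}, namely $u_{n+g}\hat{S}^g_{g+1}(n)=u_{n+g+1}\hat{S}^g_{g+1}(n+2)$, also satisfies the $(g+1)$-st equation \eqref{88711111}, namely $u_{n+g+1}\hat{S}^{g+1}_{g+2}(n)=u_{n+g+2}\hat{S}^{g+1}_{g+2}(n+2)$, with suitably chosen parameters $(H_1,\ldots,H_{g+1})$. The plan is to trace this back to the Lax/generating-series picture of Section \ref{873289000}. The key observation is that equation \eqref{88711111} for level $g$ is equivalent to the truncation condition \eqref{094390}, i.e.\ $p_{n,k}=0$ for all $k\geq g+1$; passing from level $g$ to level $g+1$ simply means relaxing this to $p_{n,k}=0$ for all $k\geq g+2$, which is a strictly weaker condition. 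So I would first record the logical structure: given a solution $(u_n)$ of \eqref{88711111} at level $g$, reconstruct the triple of formal series $(P_n,Q_n,R_n)$ via \eqref{985325}, \eqref{887453}, \eqref{88000} using the parameters $(H_1,\ldots,H_g)$ obtained from the invariants $c_j$ via \eqref{09342}; by Theorem \ref{8887} (and the discussion preceding \eqref{88711111}) the truncation $p_{n,k}=0$ for $k\ge g+1$ holds, hence a fortiori $p_{n,k}=0$ for $k\ge g+2$, which is precisely the level-$(g+1)$ truncation. Running the same derivation \eqref{786532098}--\eqref{88711111} one step later then yields the level-$(g+1)$ equation.

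More concretely, I would argue as follows. From \eqref{985325}, the condition $p_{n,g+1}=0$ for all $n$ reads $0 = (-1)^{g+1}2u_{n-1}\hat{S}^g_{g+2}(n-g-1)+H_{g+1}$, so setting $H_{g+1}:=(-1)^{g+2}2u_{n-1}\hat{S}^g_{g+2}(n-g-1)$ makes $p_{n,g+1}\equiv 0$ — but this is legitimate only if the right-hand side is $n$-independent, i.e.\ only if $\mathcal{G}_{g,0}=u_{n+g}\hat{S}^g_{g+2}(n)$ is an invariant of \eqref{88711111}. That invariance is exactly the content of the Remark after \eqref{987887} and of Proposition \ref{983421098} (case $r=0$), which we are entitled to assume here. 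With $H_{g+1}$ so defined, the series $P_n$ now has $p_{n,k}=0$ for all $k\geq g+2$: for $k=g+1$ by construction, and for $k\geq g+2$ because those coefficients already vanished at level $g$. Thus the truncated triple $(P_n,Q_n,R_n)$ is of the level-$(g+1)$ form \eqref{87324} (with $p_{g+1}=q_{g+1}$ possibly nonzero, $r_{g+2}$ possibly nonzero), and it still satisfies the Lax relations \eqref{8877545} identically by the last paragraph before Theorem \ref{8887}. Applying the derivation \eqref{786532098}--\eqref{88711111} with $g$ replaced by $g+1$ then gives $u_{n+g+1}\hat{S}^{g+1}_{g+2}(n)=u_{n+g+2}\hat{S}^{g+1}_{g+2}(n+2)$, with the new parameter set $(H_1,\ldots,H_{g+1})$, which is the claim.

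The one genuine obstacle is the well-definedness step: we must know that $\mathcal{G}_{g,0}=u_{n+g}\hat{S}^g_{g+2}(n)$ is truly constant along solutions of \eqref{88711111} before we are allowed to set $H_{g+1}$ equal to it. I would handle this by explicitly invoking the invariance of $\mathcal{G}_{g,0}$ (the $r=0$ instance of Proposition \ref{983421098}), which is asserted to be proved later in the text via the results of \cite{Svinin4}; for the purposes of this lemma we take it as given. Everything else is formal bookkeeping: the coefficient formulas \eqref{985325}, \eqref{887453}, \eqref{88000} are identities in the free sequence $(u_n)$, so once the single scalar $H_{g+1}$ is pinned down the entire level-$(g+1)$ construction goes through verbatim. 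A small point worth stating in the write-up is that the higher parameters $H_{g+2},H_{g+3},\ldots$ are then automatically determined (and consistent) by the same mechanism, so there is no hidden choice; but for the lemma as stated only $H_{g+1}$ needs to be exhibited.
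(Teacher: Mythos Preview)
Your argument is correct, but it takes a different route from the paper's proof. The paper argues entirely through the Lax system \eqref{880099875}: a solution of the $g$-th equation \eqref{88711111} corresponds, via the birational correspondence of Theorem \ref{8887} and Section 5.2, to a solution of the $g$-th Volterra system on $M_g$; this is trivially also a solution of the $(g+1)$-th system on $M_{g+1}$ by appending the zero coordinates $p_{n,g+1}=q_{n,g+1}=r_{n,g+2}=0$; applying Theorem \ref{8887} at level $g+1$ then yields the $(g+1)$-th equation. In that argument the new parameter $H_{g+1}$ is produced from the Lax invariants $c_1,\ldots,c_{g+1}$ via \eqref{09342}, so its constancy is automatic and no appeal to the invariance of $\mathcal{G}_{g,0}$ is needed.

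Your approach instead stays inside the formal-series picture of Section \ref{873289000}: you set $H_{g+1}$ so that $p_{n,g+1}\equiv 0$, which forces $q_{n,g+1}\equiv 0$ by \eqref{887453}, hence $\hat{S}^{g+1}_{g+2}\equiv 0$ and the $(g+1)$-th equation holds trivially. This is valid, but it requires knowing in advance that $\mathcal{G}_{g,0}$ is an invariant (the $r=0$ case of Proposition \ref{983421098}, proven later via \cite{Svinin4}). So you trade the paper's use of the Lax embedding for a forward reference. What you gain is an explicit identification of the ``suitable parameter'' $H_{g+1}$ with $(-1)^{g}2\mathcal{G}_{g,0}$; what you lose is self-containedness at this point in the text, since in the paper's logical order Lemma 5.6 is stated and used \emph{before} the invariance of $\mathcal{G}_{g,0}$ is established.

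Two minor slips to clean up in your write-up: the sentence ``$p_{n,k}=0$ for all $k\geq g+2$: for $k=g+1$ by construction'' is internally inconsistent (presumably you meant $k\geq g+1$), and the parenthetical ``with $p_{g+1}=q_{g+1}$ possibly nonzero'' contradicts your own construction, in which $p_{n,g+1}=0$ by design. Neither affects the substance of the argument, and in fact you do not need the full truncation $p_{n,k}=0$ for $k\geq g+2$ at all: once $p_{n,g+1}\equiv 0$ you get $q_{n,g+1}\equiv 0$ directly from the first identity in \eqref{98342}, and that alone suffices.
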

\begin{proof} 
To prove this proposition, we use the birational relationship of equation (\ref{88711111}) with the system (\ref{880099875}). 
Any solution of the $g$-th equation (\ref{88711111}), due to birational correspondence, is associated to some solution of the $g$-th system of equations (\ref{880099875}). But on the other hand, a corresponding solution of the $g$-th system is also a some solution of the $(g+1)$-th system of equations (\ref{880099875}), with obvious constraints: $p_{n, g+1}=q_{n, g+1}=r_{n, g+2}=0$. Next, with the help of a birational transformation, a solution of the $(g+1)$-th system of equations (\ref{880099875}) is mapped into a some solution of  the $(g+1)$-th equation (\ref{88711111}) and thus the lemma is proven.
\end{proof} 
As a consequence of this lemma, we obtain
\begin{proposition}
\label{095430}
Any solution of the $g$-th equation (\ref{88711111}) is also the solution $(g+r)$-th equation (\ref{88711111}), with any $r\geq 1$, with suitable parameters.
\end{proposition}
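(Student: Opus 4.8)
The plan is to deduce Proposition \ref{095430} from the preceding lemma by a straightforward induction on $r$. The case $r=1$ is exactly the lemma. Assume the claim holds for some $r\geq 1$: an arbitrary solution $(u_n)$ of the $g$-th equation (\ref{88711111}) with parameters $(H_1,\ldots,H_g)$ is then also a solution of the $(g+r)$-th equation (\ref{88711111}) for suitable parameters $(H_1,\ldots,H_{g+r})$. Applying the lemma with $g$ replaced by $g+r$ --- which is legitimate, since that lemma is stated for every $g\geq 1$ --- shows that $(u_n)$ is in addition a solution of the $(g+r+1)$-th equation (\ref{88711111}) for suitable parameters $(H_1,\ldots,H_{g+r+1})$. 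This completes the induction.

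The one point worth spelling out is the bookkeeping of the parameters along the chain of birational maps used inside the lemma. In the step from level $g+r$ to level $g+r+1$, one first transports the solution to a solution of the $(g+r)$-th system (\ref{880099875}) by the birational correspondence of Section \ref{6752109654}, then views it as a solution of the $(g+r+1)$-th system by imposing $p_{n,g+r+1}=q_{n,g+r+1}=r_{n,g+r+2}=0$, and finally transports back. Under this embedding the invariants $c_1,\ldots,c_{2(g+r)+1}$ of the system are unchanged, while the two new top invariants $c_{2(g+r)+2}$ and $c_{2(g+r)+3}$ both vanish as a direct consequence of the constraints just imposed --- one checks this at once from the formula (\ref{8973}) for $c_k$, since every admissible index pair is excluded. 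Hence, via the generating relation (\ref{09342}), the parameters $(H_1,\ldots,H_{g+r})$ are carried over verbatim and exactly one new parameter $H_{g+r+1}$ is produced, whose value on the given solution is the (genuinely constant, being an invariant of the $(g+r)$-th system) value of the appropriate transform of $c_{g+r+1}$. This is precisely what the phrase ``suitable parameters'' in the statement means, and it makes the parameter sets consistently nested, $(H_1,\ldots,H_g)\subset(H_1,\ldots,H_{g+r})\subset(H_1,\ldots,H_{g+r+1})$.

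Because each step is already furnished by the lemma together with this birational dictionary, there is no genuine obstacle: the proposition is a purely formal consequence, the only care needed being the parameter bookkeeping just described. Its role is in what follows --- since a solution of the $g$-th equation is simultaneously a solution of the $(g+k)$-th equation for every $k\geq 1$, any quantity already known to be conserved by the $(g+k)$-th equation is automatically conserved by the $g$-th one; applying this with the invariant $\mathcal{G}_{g+k,0}$ of the $(g+k)$-th equation (Proposition \ref{983421098} at level $g+k$, $r=0$) and recalling (\ref{168899987}) yields that $H_{g+k+1}$ is an invariant of the $g$-th equation. Iterating over $k$ then establishes that $H_k$ is an invariant of (\ref{88711111}) for every $k\geq g+2$.
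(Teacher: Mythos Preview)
Your proof is correct and follows exactly the approach the paper has in mind: the paper simply states the proposition ``as a consequence of this lemma,'' and the obvious way to unpack that is the induction on $r$ you carry out. Your additional paragraph on parameter bookkeeping and the concluding remarks on the role of the proposition are accurate elaborations that the paper leaves implicit.
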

Above we have announced a proof that $\mathcal{G}_{g, 0}$ given by (\ref{987887})  is an invariant for the  equation (\ref{88711111}) for any fixed $g\geq 1$. Assuming the validity of this statement and also taking into account  Proposition \ref{095430},  we obtain that $u_{n+g+r} \hat{S}^{g+r}_{g+r+2}(n),\;\; \forall r\geq 1$ is an invariant for the  equation (\ref{88711111}) and from this, in turn, it follows that  the constraints  (\ref{168899987}) are defined correctly. However, the following question remains unclear. The fact is that $u_{n+g+r} \hat{S}^{g+r}_{g+r+2}(n)$ depends on the variables $\left(u_n,\ldots, u_{n+2g+2r}\right)$ and therefore, generally speaking, $u_{n+g+r} \hat{S}^{g+r}_{g+r+2}(n)$ should be brought to such a form  it depends on the variables $\left(u_n,\ldots, u_{n+2g}\right)$, that is, it is necessary to express variables $\left(u_{n+2g+1},\ldots, u_{n+2g+2r}\right)$ using (\ref{88711111}). But what do we get as a result? It is clear that for this equation there cannot exist an infinite number of independent invariants. We will give examples below, but for now it is appropriate to substantiate the above statements.

\subsection{Two-parameter class of discrete equations. Invariants} 

First of all, it should be noted that  (\ref{88711111}) actually represent a special case of equations of the form 
\begin{equation}
u_{n+k}\hat{S}^k_s(n)=u_{n+s}\hat{S}^k_s(n+2),
\label{16753098}
\end{equation}
where, it is supposed that $k\geq 1$ and $s\geq k+1$. This two-parameter class of recurrences have been considered in  \cite{Svinin2}, \cite{Svinin3}, \cite{Svinin4}. Note that the variable $u_{n+s+k}$ enters the right-hand side of the relation (\ref{16753098}) linearly and therefore we can resolve (\ref{16753098}) with respect to this variable. As a result we obtain this equation in the form 
\[
u_{n+N}=R(u_n,\ldots, u_{n+N-1}; H_1,\ldots, H_k),
\]
where $N=s+k$, is obviously  the order of the equation (\ref{16753098}).

In fact, these equations arose in the above mentioned works as constraints compatible with the Volterra lattice equation and its hierarchy of generalized symmetries (\ref{977657}). 
Now, we would like to show how a certain set of invariants can be constructed for the equation (\ref{16753098}). It is natural that we are mainly interested in the case $(k, s)=(g, g+1)$, where $g\geq 1$. 

To begin with, we present the following lemma.
\begin{lemma} 
\cite{Svinin4}
Equation (\ref{16753098}) can be written in the following equivalent form:
\begin{equation}
u_{n+k}\hat{S}^k_{s+1}(n)=u_{n+s}\hat{S}^k_{s+1}(n+1).
\label{200000}
\end{equation}
\end{lemma}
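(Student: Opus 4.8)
The plan is to prove the stronger statement that, for a \emph{free} bi-infinite sequence $(u_n)$, the two ``defects''
\[
D(n):=u_{n+k}\hat{S}^k_s(n)-u_{n+s}\hat{S}^k_s(n+2),\qquad D'(n):=u_{n+k}\hat{S}^k_{s+1}(n)-u_{n+s}\hat{S}^k_{s+1}(n+1)
\]
are equal as polynomials in the $u$'s (and the parameters $H_j$). Once this polynomial identity $D\equiv D'$ is in hand, the lemma follows at once: a sequence solves (\ref{16753098}) precisely when $D\equiv 0$, hence precisely when $D'\equiv 0$, hence precisely when it solves (\ref{200000}).

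The first step is to lift the two basic recurrences (\ref{6753098542}) and (\ref{675872}) to the hatted polynomials. Apply (\ref{6753098542}) to each summand $S^{k-j}_{s+1-j}(n+j)$ appearing in the definition (\ref{8909876}) of $\hat{S}^k_{s+1}(n)$; the crucial observation is that the shift index in the linear term becomes $u_{(n+j)+(k-j)-1}=u_{n+k-1}$, \emph{independent of} $j$, and that the extra $j=k$ contribution carries a factor $S^{-1}_{s+1-k}(n+k)=0$, so it drops out. Collecting terms and using $H_0=1$ gives
\[
\hat{S}^k_{s+1}(n)=\hat{S}^k_s(n+1)+u_{n+k-1}\,\hat{S}^{k-1}_{s+1}(n),
\]
which I will call identity (A). In exactly the same way, applying (\ref{675872}) summand-by-summand (now the shift index is $u_{(n+j)+(s+1-j)-1}=u_{n+s}$, again $j$-independent) yields
\[
\hat{S}^k_{s+1}(n)=\hat{S}^k_s(n)+u_{n+s}\,\hat{S}^{k-1}_{s+1}(n+1),
\]
call this (B). The hypothesis $s\geq k+1$ guarantees $s+1-j\geq 2$ for all $0\leq j\leq k$, so (\ref{6753098542}) and (\ref{675872}) apply throughout (and even the boundary value $S^{k-j}_{s-j}$ with $s-j=0$ is covered by the convention $S^k_0(n)=0$).

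Now shift (A) by $n\mapsto n+1$ to obtain $\hat{S}^k_{s+1}(n+1)=\hat{S}^k_s(n+2)+u_{n+k}\,\hat{S}^{k-1}_{s+1}(n+1)$, and read this and (B) as two expressions for the same polynomial $\hat{S}^{k-1}_{s+1}(n+1)$:
\[
u_{n+s}\,\hat{S}^{k-1}_{s+1}(n+1)=\hat{S}^k_{s+1}(n)-\hat{S}^k_s(n),\qquad u_{n+k}\,\hat{S}^{k-1}_{s+1}(n+1)=\hat{S}^k_{s+1}(n+1)-\hat{S}^k_s(n+2).
\]
Multiplying the left relation by $u_{n+k}$ and the right one by $u_{n+s}$ and subtracting, the common term $u_{n+k}u_{n+s}\hat{S}^{k-1}_{s+1}(n+1)$ cancels and what remains is exactly $D'(n)=D(n)$, the desired identity. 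The whole argument is essentially routine; the only point that needs genuine care is the index bookkeeping in the first step, namely checking that after the substitution $(k,s,n)\mapsto(k-j,s+1-j,n+j)$ the shift indices in (\ref{6753098542}) and (\ref{675872}) really do lose their $j$-dependence and that the spurious $j=k$ term is annihilated by the convention on $S$ with negative upper index, so that (A) and (B) come out with the clean single-term right-hand sides used above. Everything after that is two lines of linear algebra.
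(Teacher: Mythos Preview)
Your argument is correct. The lifting of (\ref{6753098542}) and (\ref{675872}) to the hatted polynomials goes through exactly as you describe: after the substitution $(k,s,n)\mapsto(k-j,s+1-j,n+j)$ the index in the $u$-factor collapses to $u_{n+k-1}$ (resp.\ $u_{n+s}$) independently of $j$, and the $j=k$ term in the second sum dies by $S^{-1}=0$, yielding your identities (A) and (B). The final elimination of $\hat{S}^{k-1}_{s+1}(n+1)$ between (B) and the shift of (A) is clean and gives the polynomial identity $D(n)=D'(n)$, which is stronger than what the lemma asserts and immediately implies the equivalence. One small overstatement: your remark about the boundary case $s-j=0$ is moot, since $s\geq k+1$ forces $s-j\geq 1$ throughout; but this does no harm.

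As for comparison with the paper: there is nothing to compare against. The paper does not prove this lemma at all; it simply imports the statement from \cite{Svinin4}. Your proof therefore fills in a result the paper takes on authority, and does so using only the identities (\ref{6753098542}), (\ref{675872}) and the definition (\ref{8909876}) already available in the text, so it is entirely self-contained within the paper's framework.
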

From this lemma it clearly follows that the discrete polynomial 
\begin{equation}
G_0^{(k, s)}(n)=\hat{S}^k_{s+1}(n)\prod_{j=k}^{s-1}u_{n+j}
\label{90432983}
\end{equation}
is an invariant for the equation (\ref{16753098}). In particular, if $(k, s)=(g, g+1)$, then we obtain the fact that $G_0^{(g, g+1)}=\mathcal{G}_{g, 0}$ is an invariant for the equation (\ref{88711111}). 

Now we would like to show the proof of the Proposition \ref{983421098} in a broader context. Let us denote
\begin{equation}
G_r^{(k, s)}(n)=\sum_{j=0}^r Q^j_{s-k-j-1}(n+k+1) G_0^{(k+j, s-j)}(n),\;\; r=0,\ldots, k,
\label{2777000987}
\end{equation}
where the polynomials $Q^k_s(n)$ are defined by (\ref{00000932}).
\begin{lemma} \cite{Svinin4}
The relation
\begin{equation}
G_r^{(k, s)}(n+1)-G_r^{(k, s)}(n)=\Lambda_r^{(k, s)}(n)\left(u_{n+s-r}\hat{S}_{s-r+1}^{k+r}(n+1) -u_{n+k+r}\hat{S}_{s-r+1}^{k+r}(n)\right) 
\label{38976}
\end{equation}
is an identity, where the integrating factor looks as
\[
\Lambda_r^{(k, s)}(n)=S^r_{s-k-r}(n+k+1) \prod_{j=k+r+1}^{s-r-1}u_{n+j}.
\]
\end{lemma}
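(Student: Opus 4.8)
The plan is to prove (\ref{38976}) as a polynomial identity in the free sequence $(u_n)$ and the constants $H_j$, by induction on $r$ with $k,s$ held fixed. Two preliminary observations make this tractable. First, each of the basic identities (\ref{6753098542}), (\ref{675872}), (\ref{8889554307}) and the factorization (\ref{887}) lifts verbatim to the hatted polynomials $\hat S^k_s(n)$ of (\ref{8909876}): applying the $S$-identity termwise in $\sum_{j}(-1)^j S^{k-j}_{s-j}(n+j)H_j$ and resumming reproduces the same identity with every $S$ replaced by $\hat S$, so below I may manipulate $\hat S$ exactly as $S$. Second, the base case $r=0$ is immediate: since $S^0_{s-k}\equiv 1$ and $Q^0_m\equiv 1$, one has $\Lambda_0^{(k,s)}(n)=\prod_{j=k+1}^{s-1}u_{n+j}$, and with $G_0^{(k,s)}(n)=\hat S^k_{s+1}(n)\prod_{j=k}^{s-1}u_{n+j}$ from (\ref{90432983}) the difference $G_0^{(k,s)}(n+1)-G_0^{(k,s)}(n)$ factors at once into $\prod_{j=k+1}^{s-1}u_{n+j}\bigl(u_{n+s}\hat S^k_{s+1}(n+1)-u_{n+k}\hat S^k_{s+1}(n)\bigr)$, which is $\Lambda_0^{(k,s)}(n)D_0(n)$ in the notation $D_i(n):=u_{n+s-i}\hat S^{k+i}_{s-i+1}(n+1)-u_{n+k+i}\hat S^{k+i}_{s-i+1}(n)$.

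For the inductive step, the definition (\ref{2777000987}) gives the one-line recursion $G_r^{(k,s)}(n)=G_{r-1}^{(k,s)}(n)+Q^r_{s-k-r-1}(n+k+1)\,G_0^{(k+r,s-r)}(n)$. Hence, granting the lemma for $r-1$, it suffices to establish the single-term telescoping identity
\[
(\Lambda-1)\!\left[Q^r_{s-k-r-1}(n+k+1)\,G_0^{(k+r,s-r)}(n)\right]=\Lambda_r^{(k,s)}(n)\,D_r(n)-\Lambda_{r-1}^{(k,s)}(n)\,D_{r-1}(n),
\]
since adding it to $(\Lambda-1)G_{r-1}^{(k,s)}(n)=\Lambda_{r-1}^{(k,s)}(n)D_{r-1}(n)$ yields exactly $(\Lambda-1)G_r^{(k,s)}(n)=\Lambda_r^{(k,s)}(n)D_r(n)$. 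To prove this identity I would expand $Q^r_{s-k-r-1}$ by its definition (\ref{00000932}) and $G_0^{(k+r,s-r)}(n)=\hat S^{k+r}_{s-r+1}(n)\prod_{j=k+r}^{s-r-1}u_{n+j}$ by (\ref{90432983}), apply the rule $(\Lambda-1)(AB)=(\Lambda A)(\Lambda-1)B+((\Lambda-1)A)B$, and then use the two-term recursions (\ref{6753098542}) and (\ref{675872}) on the $\hat S^{k+r}_{s-r+1}$-factor to split off precisely the combination $D_r$, while the $u$-product together with (\ref{887}) and (\ref{8889554307}) produces the shifted factor $\hat S^{k+r-1}_{s-r+2}$ appearing in $D_{r-1}$. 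Matching the coefficients then reduces to the recursion $S^r_{s-k-r}(n+k+1)=S^r_{s-k-r-1}(n+k+2)+u_{n+k+r}S^{r-1}_{s-k-r}(n+k+1)$ --- a case of (\ref{6753098542}) --- together with the bookkeeping of the product ranges defining $\Lambda_{r-1}^{(k,s)}$ and $\Lambda_r^{(k,s)}$.

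The main obstacle is exactly this single-term identity: one must check that the factor $Q^r_{s-k-r-1}(n+k+1)$ and the two integrating factors conspire so that applying $(\Lambda-1)$ produces \emph{only} the clean difference $\Lambda_r D_r-\Lambda_{r-1}D_{r-1}$, with every other term cancelling; lining up the index shifts (in $n$, in the superscript, in the subscript, and in the limits of the $u$-products) is the delicate part, and some care is also needed with the edge cases in which $s$ is close to $k$ and a $u$-product degenerates to the empty product. Everything downstream of the correct auxiliary recursions for $Q^k_s$ and for $S^r_{s-k-r}$ is a mechanical expansion in (\ref{6753098542}), (\ref{675872}) and (\ref{887}). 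An alternative that might streamline the whole argument is a generating-function proof: encode $G_r^{(k,s)}(n)$ and $Q^k_s(n)$ as coefficients in suitable products of the series $\mathcal{S}_{s,n}$ built from the pseudo-difference operator $\mathcal{T}$ of (\ref{67888}), and reduce (\ref{38976}) to an identity between formal symbols; this would avoid most of the index juggling, but it requires first developing the series counterparts of $Q^k_s$ and of $G_r^{(k,s)}$, which the present exposition does not set up.
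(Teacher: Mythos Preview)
The paper does not prove this lemma at all: it is imported verbatim from \cite{Svinin4}, and the text moves on immediately to Proposition~\ref{908432098}. So there is no in-paper proof to compare against.

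On the merits of your outline: the base case $r=0$ is correct as you wrote it, the one-line recursion $G_r^{(k,s)}=G_{r-1}^{(k,s)}+Q^r_{s-k-r-1}(n+k+1)\,G_0^{(k+r,s-r)}$ follows directly from (\ref{2777000987}), and your remark that the hatted polynomials $\hat S^k_s$ inherit (\ref{6753098542}), (\ref{675872}), (\ref{8889554307}) and (\ref{887}) termwise is right and is what one needs. For $r=1$ the single-term telescoping identity indeed collapses, after factoring out the common $u$-product, to the difference of the hatted versions of (\ref{6753098542}) and (\ref{675872}) at the appropriate indices, so the mechanism you describe is the correct one.

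What is missing is the execution for general $r$: you identify the single-term identity as ``the main obstacle'' and then describe in words how the cancellation should go, but you do not carry it out. The matching of $Q^r_{s-k-r-1}$ against $\Lambda_r D_r-\Lambda_{r-1}D_{r-1}$ genuinely requires writing out both sides, expanding $Q^r$ via (\ref{00000932}), and tracking several simultaneous index shifts; it is not a one-line consequence of the single recursion (\ref{6753098542}) you quote at the end. As written this is a correct and well-oriented plan rather than a proof, and to upgrade it you would either need to complete that computation explicitly or consult \cite{Svinin4} for the original argument.
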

From this lemma we can easily draw the following conclusion.
\begin{proposition} \label{908432098}
Given a pair $(k, s)$ such $k\geq 1$ and $s\geq k+1$, by   (\ref{38976}), the discrete polynomial 
\begin{equation}
\mathcal{G}_r^{(k, s)}=G_r^{(k-r, s+r)},\;\; r=0,\ldots, k,
\label{3897600098}
\end{equation}
where $G_r^{(k, s)}(n)$ is calculated with the help of (\ref{2777000987}), is an invariant for the recurrence (\ref{16753098}).
\end{proposition}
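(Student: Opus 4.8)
The plan is to derive the proposition as a corollary of the identity (\ref{38976}) --- which does all the real work and is taken from \cite{Svinin4} --- via a reindexing of its parameters, combined with the equivalent form (\ref{200000}) of the recurrence (\ref{16753098}). No new computation should be required once the index bookkeeping is arranged.

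First I would specialize the identity (\ref{38976}) by replacing the pair $(k,s)$ there with $(k-r,s+r)$. Its left-hand side becomes $G_r^{(k-r,s+r)}(n+1)-G_r^{(k-r,s+r)}(n)$, which by the definition (\ref{3897600098}) is exactly $\mathcal{G}_r^{(k,s)}(n+1)-\mathcal{G}_r^{(k,s)}(n)$. On the right-hand side the index arithmetic collapses: the superscript $k+r$ becomes $(k-r)+r=k$, the subscript $s-r+1$ becomes $(s+r)-r+1=s+1$, and the two shifts $s-r$ and $k+r$ become $s$ and $k$, respectively. Hence (\ref{38976}) specializes to
\[
\mathcal{G}_r^{(k,s)}(n+1)-\mathcal{G}_r^{(k,s)}(n)=\Lambda_r^{(k-r,s+r)}(n)\left(u_{n+s}\hat{S}_{s+1}^{k}(n+1)-u_{n+k}\hat{S}_{s+1}^{k}(n)\right),
\]
an identity valid for a free sequence $(u_n)$. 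Now, for any sequence obeying (\ref{16753098}), the lemma asserting the equivalence of (\ref{16753098}) and (\ref{200000}) gives $u_{n+k}\hat{S}_{s+1}^{k}(n)=u_{n+s}\hat{S}_{s+1}^{k}(n+1)$, so the parenthesis on the right vanishes identically. Therefore $\mathcal{G}_r^{(k,s)}(n+1)=\mathcal{G}_r^{(k,s)}(n)$ holds along every solution, i.e. $\mathcal{G}_r^{(k,s)}$ is an invariant of (\ref{16753098}); specializing to $(k,s)=(g,g+1)$ recovers Proposition \ref{983421098}.

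The only genuinely delicate point --- and what I would treat as the main, though still essentially routine, obstacle --- is the admissibility of the reindexing: one must check that (\ref{38976}), in the form available from \cite{Svinin4}, really does apply to the shifted pair $(k-r,s+r)$ together with the subscript $r$ for the full range $r=0,\dots,k$, and not merely when $r\le k-r$. Concretely this amounts to verifying that all subscripts entering the polynomials $Q^j_{\bullet}$ of (\ref{00000932}), the $\hat{S}$-polynomials, and the integrating factor $\Lambda_r^{(k-r,s+r)}(n)$ remain non-negative, and that the degenerate instances (e.g. $k-r=0$, where $G_0^{(0,\bullet)}(n)=\prod_j u_{n+j}$ by (\ref{90432983}) because $\hat{S}^0_{\bullet}=S^0_{\bullet}=1$, or $s=k+1$, where the product in $\Lambda$ is empty) are governed by the standing conventions $S^0_s(n)=1$, $S^k_0(n)=0$, and $S^k_s(n)=0$ for $k<0$. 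Once this is in place nothing remains to prove, since the vanishing of the right-hand side above uses only the recurrence itself.
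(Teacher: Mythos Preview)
Your proposal is correct and is exactly the argument the paper has in mind: the paper does not spell out a proof at all beyond the phrase ``From this lemma we can easily draw the following conclusion,'' and your reindexing $(k,s)\mapsto(k-r,s+r)$ in (\ref{38976}) together with the equivalent form (\ref{200000}) of the recurrence is precisely that conclusion. Your final paragraph on admissibility of the range $r=0,\dots,k$ is a reasonable caution the paper simply glosses over.
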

In the particular case $(k, s)=(g, g+1)$, taking into account  (\ref{2777000987}) and (\ref{3897600098}), ultimately, we get a proof of the Proposition \ref{983421098}, where $\mathcal{G}_{g, r}=\mathcal{G}_r^{(g, g+1)}=G_r^{(g-r, g+r+1)}$.
\begin{remark}
It can be assumed, and calculations confirm this fact, that the set of invariants $(\mathcal{G}_0^{(k, s)},\ldots, \mathcal{G}_k^{(k, s)})$ given by (\ref{3897600098}) satisfies the triangular linear system of equations
\begin{equation}
\sum_{j=0}^r(-1)^j T^j_{s-k-j-1}(n+k+1) \mathcal{G}_j^{(k, s)}=G_0^{(k-r, s+r)},\;\; r=0,\ldots, k,
\label{300000000}
\end{equation}
where discrete polynomials $T^k_s(n)$ are defined by  (\ref{6897654000}), where the discrete polynomials $G_0^{(k, s)}$ are given by (\ref{90432983}).  It should be noted that, in principle, we could prove (\ref{300000000}) if we  prove the identity 
\[
\sum_{j=0}^k (-1)^j Q^{k-j}_{s-j}(n+j)T^j_{s+k-j}(n)=0
\]
for any $k\geq 1$ and $s\geq k$. For $k=1$ this identity is trivial, while for two cases $k=2$ and $k=3$ it has been proved in the work \cite{Svinin4}. However, it should be noted that the lack of proof of this statement does not prevent us from verifying this fact for any specific case.
\end{remark}

\subsection{Examples and conjecture} 

Let us consider now two examples of calculations of invariants $\mathcal{G}_{g, r}$ defined in the Proposition \ref{983421098}.
\begin{example} \label{9084532} $g=1$. We already know that, in this case, equation (\ref{88711111}) is specified as the third-order equation (\ref{16753090097438}). The Proposition \ref{983421098}  gives us  the following two invariants:
\begin{equation}
\mathcal{G}_{1, 0}=u_{1}\left(u_0+u_{1}+u_{2}-H_1\right)\;\; \mbox{and}\;\; \mathcal{G}_{1, 1}=u_0u_{1}u_{2}+u_{1}^2\left(u_0+u_{1}+u_{2}-H_1\right).
\label{43296665325}
\end{equation}
These  can be determined from the linear relations 
\[
\mathcal{G}_{1, 0}=u_{1}\hat{S}^1_3(0)\;\; \mbox{and}\;\; \mathcal{G}_{1, 1}-u_1\mathcal{G}_{1, 0}=u_0u_{1}u_{2}.
\]
Although these relations are obvious, nevertheless, it is important to note that that they are a special case of (\ref{300000000}). 
Actual calculations gives 
\[
H_2=-2\mathcal{G}_{1, 0},\;\; H_3=2\mathcal{G}_{1, 1},\;\; H_4=-2 H_1 \mathcal{G}_{1, 1}-2 \mathcal{G}_{1, 0}^2,
\]
\[
H_5=2 H_1^2 \mathcal{G}_{1, 1} + 2 H_1 \mathcal{G}_{1, 0}^2+ 4 \mathcal{G}_{1, 0} \mathcal{G}_{1, 1},
\]
\begin{equation}
H_6=-2H_1^3\mathcal{G}_{1, 1}-8 H_1 \mathcal{G}_{1, 0}  \mathcal{G}_{1, 1}-2H_1^2\mathcal{G}_{1, 0}^2 -4 \mathcal{G}_{1, 0}^3-2\mathcal{G}_{1, 1}^2,\ldots
\label{432986008545}
\end{equation}
In this case,  we have the following three nonzero $c_k$'s:
\[
c_1=2H_1,\;\; c_2=-4 \mathcal{G}_{1, 0} + H_1^2,\;\; c_3 = 4 \mathcal{G}_{1, 1} -4 H_1 \mathcal{G}_{1, 0}.
\]
\end{example} 
\begin{example} $g=2$. In this case, the equation (\ref{88711111}) is specified as the fifth-order equation (\ref{16888658}). The Proposition \ref{983421098}  gives  the following three invariants: 
\begin{equation}
\mathcal{G}_{2, 0}=u_{2}\left(S^2_4(0)-H_1S^1_3(1)+H_2\right),
\label{84328887435}
\end{equation}
\begin{equation}
\mathcal{G}_{2, 1}=u_{1}u_{2}u_{3}\left(S^1_5(0)-H_1\right)+u_{2}^2\left(S^2_4(0)-H_1S^1_3(1)+H_2\right)
\label{8432866543875}
\end{equation}
and
\begin{eqnarray}
\mathcal{G}_{2, 2}&=&u_{0}u_{1}u_{2}u_{3}u_{4}+\left(u_{1}+u_{2}+u_{3}\right)u_{1}u_{2}u_{3}\left(S^1_5(0)-H_1\right) \nonumber \\
&&+\left(u_{2}(u_{1}+u_{2})+u_{3}u_{2}-u_{1}u_{3}\right)u_{2}\left(S^2_4(0)-H_1S^1_3(1)+H_2\right).
\label{11215432876}
\end{eqnarray}
These invariants are determined from the linear relations
\[
\mathcal{G}_{2, 0}=u_2 \hat{S}^2_4(0),\;\; \mathcal{G}_{2, 1}-u_2 \mathcal{G}_{2, 0}=u_1u_2u_3 \hat{S}^1_5(0),\;\; 
\]
and
\[
\mathcal{G}_{2, 2}-\left(u_1+u_2+u_3\right) \mathcal{G}_{2, 1}+u_1u_3 \mathcal{G}_{2, 0} = u_0u_1u_2u_3u_4. 
\]

Actual calculations gives 
\[
H_3=2\mathcal{G}_{2, 0},\;\; H_4=-2\mathcal{G}_{2, 1},\;\;  H_5=2\mathcal{G}_{2, 2},\;\; H_6=2H_2\mathcal{G}_{2, 1}-2 H_1\mathcal{G}_{2, 2}-2\mathcal{G}_{2, 0}^2,
\]
\begin{equation}
H_7=-2H_1H_2\mathcal{G}_{2, 1}+2\left(H_1^2-H_2\right)\mathcal{G}_{2, 2}+2H_1 \mathcal{G}_{2, 0}^2+ 4\mathcal{G}_{2, 0}\mathcal{G}_{2, 1},\ldots
\label{4329865}
\end{equation}
In this case, we have the following five nonzero $c_k$'s:
\[
c_1=2 H_1,\;\; c_2=2 H_2 + H_1^2,\;\; c_3=4 \mathcal{G}_{2, 0} + 2 H_1 H_2,\;\; c_4= -4 \mathcal{G}_{2, 1} +4 H_1 \mathcal{G}_{2, 0} + H_2^2
\]
\[
c_5= 4 \mathcal{G}_{2, 2}-4 H_1 \mathcal{G}_{2, 1}+4 H_2 \mathcal{G}_{2, 0}.
\]
\end{example}
Inspired by these two examples based on actual calculations, we can assume the following statement could be valid.
\begin{conjecture}
Given any $g\geq 1$,   in virtue of equation (\ref{88711111}) we have
\begin{equation}
H_{g+k+1}=(-1)^{g+k} 2  \mathcal{G}_{g, k},\;\; k=0,\ldots, g,
\label{78632098}
\end{equation}
where $H_k,\;\; k\geq g+1 $ is given by (\ref{168899987}). For  the remaining  values $k\geq 2g+2$, the coefficients $H_k$  are some polynomials in the variables $\left(H_1,\ldots, H_g; \mathcal{G}_{g, 0},\ldots, \mathcal{G}_{g, g}\right)$. 
\end{conjecture}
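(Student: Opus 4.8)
The statement has two halves --- the closed form for $H_{g+1},\dots,H_{2g+1}$ in terms of the invariants $\mathcal{G}_{g,0},\dots,\mathcal{G}_{g,g}$, and the polynomial dependence of the tail $H_k$, $k\geq 2g+2$ --- and the plan is to dispose of the (essentially formal) tail first. Under the truncation (\ref{094390}) the polynomials $P_n,Q_n,R_n$ have degrees $g,g,g+1$, so $f=P_n^2+Q_nR_n$ has degree at most $2g+1$; hence $c_j=0$ for $j\geq 2g+2$ in (\ref{09342}), and comparing the coefficient of $x^m$ there for $m\geq 2g+2$ gives the recursion $2H_m=-\sum_{j=1}^{m-1}H_jH_{m-j}$. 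An immediate induction then makes every such $H_m$ a polynomial in $H_1,\dots,H_{2g+1}$; since $H_1,\dots,H_g$ are free parameters and the first half identifies $H_{g+1},\dots,H_{2g+1}$ with $\pm2\mathcal{G}_{g,0},\dots,\pm2\mathcal{G}_{g,g}$, the claimed dependence on $(H_1,\dots,H_g;\mathcal{G}_{g,0},\dots,\mathcal{G}_{g,g})$ follows.

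For the first half the starting point is the reformulation (\ref{1688})--(\ref{168899987}) of the truncation. For $k\geq g+1$ the condition $p_{n,k}=0$, read through (\ref{985325}), is exactly
\[
H_k=(-1)^{k+1}2\,u_{n+k-1}\hat{S}^{k-1}_{k+1}(n)=(-1)^{k+1}2\,G_0^{(k-1,k)}(n),
\]
with $G_0^{(k-1,k)}$ the basic invariant (\ref{90432983}) of the $(k-1,k)$-equation. For $k\geq g+1$, Proposition \ref{095430} (trivially so for $k=g+1$) says that every solution of the $g$-th equation (\ref{88711111}) solves the $(k-1)$-th equation, hence $G_0^{(k-1,k)}$ is in particular an invariant of (\ref{88711111}); this is exactly what makes the right-hand side $n$-independent and legitimizes (\ref{168899987}). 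Writing $k=g+m+1$ with $0\leq m\leq g$ and recalling $\mathcal{G}_{g,m}=G_m^{(g-m,g+m+1)}$ from Proposition \ref{908432098}, what is left to prove is the single identity
\[
G_0^{(g+m,\,g+m+1)}(n)\;\equiv\;G_m^{(g-m,\,g+m+1)}(n)\qquad(\text{mod }(\ref{88711111})),
\]
i.e.\ equality once $u_{n+2g+1},\dots,u_{n+2g+2m}$ have been eliminated from the left-hand side using (\ref{88711111}).

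I would prove this by induction on $m$. For $m=0$ both sides equal $u_{n+g}\hat{S}^g_{g+2}(n)$. For the inductive step the tool is the decomposition (\ref{2777000987}) together with the triangular system (\ref{300000000}): they express $G_0^{(g+m,g+m+1)}$ through the invariants $\mathcal{G}_{g,j}$ with $j\leq m$ and through strictly lower $G_0$'s, which is precisely the combination that collapses under reduction by the equation and the factorization identity (\ref{887}). The main obstacle is that, for general $g$, the system (\ref{300000000}) is itself not yet a theorem: as the Remark following Proposition \ref{908432098} notes, it would follow from the combinatorial identity $\sum_{j=0}^{k}(-1)^jQ^{k-j}_{s-j}(n+j)T^j_{s+k-j}(n)=0$, with the $Q$-polynomials of (\ref{00000932}), and this is currently known only for $k\leq 3$. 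So the argument sketched here is complete for $g\leq 3$, the general case being conditional on that identity. Closing that gap is where the real work lies; I would attempt it through the pseudo-difference operator $\mathcal{T}$ of (\ref{67888}), whose negative and positive powers carry the $S$- and $T$-polynomials as coefficients, expecting the identity to emerge from comparing coefficients in a suitable product of the symbols $\mathcal{T}^{\pm s}$.
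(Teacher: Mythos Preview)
First, note that in the paper this statement is explicitly a \emph{Conjecture}: the author does not prove it, but only verifies it for $g=1,2$ by direct calculation (the lists (\ref{432986008545}) and (\ref{4329865})). So there is no proof to compare against; your proposal is an attempt to upgrade the conjecture to a theorem.

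Your argument for the tail ($k\geq 2g+2$) is fine: once $c_j=0$ for $j\geq 2g+2$, the generating relation (\ref{09342}) indeed gives the quadratic recursion $2H_m=-\sum_{j=1}^{m-1}H_jH_{m-j}$, and the polynomial dependence follows. That part is a genuine (easy) addition to what the paper states.

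The first half is where the gap lies. Your setup is correct: the base $m=0$ is immediate, and the target identity is $G_0^{(g+m,g+m+1)}\equiv G_m^{(g-m,g+m+1)}=\mathcal{G}_{g,m}$ modulo (\ref{88711111}). But the induction step does not go through as you describe it. The decomposition (\ref{2777000987}) applied at $(k,s)=(g-m,g+m+1)$ reads
\[
\mathcal{G}_{g,m}=\sum_{j=0}^{m}Q^j_{2m-j}(n+g-m+1)\,G_0^{(g-m+j,\,g+m+1-j)}(n),
\]
so it expresses $\mathcal{G}_{g,m}$ through $G_0$'s with first index ranging over $g-m,\dots,g$, all \emph{at most} $g$. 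Likewise (\ref{300000000}) at $(k,s)=(g,g+1)$ gives $G_0^{(g-r,g+r+1)}$ in terms of the $\mathcal{G}_{g,j}$, again with first index $\leq g$. Neither formula produces $G_0^{(g+m,g+m+1)}$, whose first index is $g+m>g$; your claim that these two relations ``express $G_0^{(g+m,g+m+1)}$ through the invariants $\mathcal{G}_{g,j}$ and strictly lower $G_0$'s'' is therefore not substantiated. (Applying (\ref{300000000}) at $(k,s)=(g+m,g+m+1)$ does not help either: there $s-k-1=0$, every $T^j_{-j}$ with $j\geq1$ vanishes, and one only obtains $G_0^{(g+m,g+m+1)}=G_0^{(g+m-r,\,g+m+r+1)}$, still not $\mathcal{G}_{g,m}$.) A second point you do not address is that $\hat{S}^{g+m}_{g+m+2}$, appearing in $G_0^{(g+m,g+m+1)}$, already contains $H_{g+1},\dots,H_{g+m}$ via (\ref{8909876}), and those are themselves defined through earlier instances of (\ref{168899987}); the induction has to track this recursive substitution explicitly. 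In particular, the assertion that the argument is ``complete for $g\leq 3$'' is not justified by what you have written.
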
  

\section{Homogeneous recurrences. Laurent property. Integer sequences}
\label{99832098}

In essence, the goal of this section is to show with examples how some substitution reduces the  equation (\ref{88711111}) written above to homogeneous equations sharing the Laurent property.

\subsection{Homogeneous recurrences}

Let us consider a substitution
\begin{equation}
u_n=\frac{t_nt_{n+3}}{t_{n+1}t_{n+2}}.
\label{896432097}
\end{equation}
We formulate below three lemmas that give a homogeneous recursion relations of order $2g+3$ for $g=1, 2, 3$.  All these statements can be proven by direct verification.
\begin{lemma} $g=1$.
Provided (\ref{896432097}), the relation
\begin{equation}
t_nt_{n+5}+\mathcal{G}_{1, 0} t_{n+1}t_{n+4}-\mathcal{G}_{1, 1} t_{n+2}t_{n+3}=0,
\label{896432000096}
\end{equation}
where $\mathcal{G}_{1, 0}$ and  $\mathcal{G}_{1, 1}$ are given by  (\ref{43296665325}) is an identity.
\end{lemma}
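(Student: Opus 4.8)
The plan is to substitute (\ref{896432097}) into (\ref{896432000096}) and collapse everything to a polynomial relation in the $u$'s by factoring out $t_{n+2}t_{n+3}$. Two elementary monomial identities, both immediate from (\ref{896432097}), do all the work: first $t_{n+1}t_{n+4}/(t_{n+2}t_{n+3})=u_{n+1}$, and second, telescoping the product of three consecutive factors,
\[
\frac{t_nt_{n+5}}{t_{n+2}t_{n+3}}=u_nu_{n+1}u_{n+2}.
\]
Hence, after removing the common factor $t_{n+2}t_{n+3}$, the relation (\ref{896432000096}) is equivalent to
\[
u_nu_{n+1}u_{n+2}+\mathcal{G}_{1, 0}\,u_{n+1}-\mathcal{G}_{1, 1}=0 .
\]

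Next I would insert the explicit expressions (\ref{43296665325}) for $\mathcal{G}_{1, 0}$ and $\mathcal{G}_{1, 1}$. Since, by the $g=1$ case of Proposition \ref{983421098}, both are invariants of (\ref{16753090097438}), they may be read at the index $n$ rather than at $0$, and the displayed relation becomes
\[
u_nu_{n+1}u_{n+2}+u_{n+1}^2\left(u_n+u_{n+1}+u_{n+2}-H_1\right)-u_nu_{n+1}u_{n+2}-u_{n+1}^2\left(u_n+u_{n+1}+u_{n+2}-H_1\right)=0,
\]
which holds trivially. Equivalently, this is precisely the linear relation $\mathcal{G}_{1, 1}-u_{n+1}\mathcal{G}_{1, 0}=u_nu_{n+1}u_{n+2}$ of Example \ref{9084532} (a special instance of (\ref{300000000})), written at index $n$. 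This establishes (\ref{896432000096}) along every solution of the $g=1$ recurrence.

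The single point that carries real content — and hence the main, though mild, obstacle — is the invariance invoked above: it is what makes $\mathcal{G}_{1, 0}$ and $\mathcal{G}_{1, 1}$ honest constants along solutions, so that (\ref{896432000096}) is a genuine constant-coefficient bilinear recurrence rather than one with drifting coefficients. This is exactly Proposition \ref{983421098}; note also that invariance of $\mathcal{G}_{1, 0}$ is literally equivalent to (\ref{16753090097438}), since
\[
\mathcal{G}_{1, 0}(n+1)-\mathcal{G}_{1, 0}(n)=u_{n+2}\left(u_{n+1}+u_{n+2}+u_{n+3}-H_1\right)-u_{n+1}\left(u_n+u_{n+1}+u_{n+2}-H_1\right)
\]
vanishes precisely when (\ref{16753090097438}) does. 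Reading the steps above in reverse gives the converse: for any $(t_n)$ put through (\ref{896432097}), the relation (\ref{896432000096}) forces $\mathcal{G}_{1, 0}$ to be $n$-independent, i.e. (\ref{896432000096}) is nothing but the bilinearization of the $g=1$ Volterra recurrence. The same three moves — substitute (\ref{896432097}), factor out $t_{n+2}t_{n+3}$, telescope the surviving $t$-monomials into products of consecutive $u_n$, and match the outcome against the triangular relations (\ref{300000000}) defining the $\mathcal{G}_{g, r}$ — will also dispatch the $g=2$ and $g=3$ lemmas, at the price of heavier but entirely routine algebra.
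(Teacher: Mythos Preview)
Your proof is correct and is precisely the ``direct verification'' the paper alludes to: substitute (\ref{896432097}), divide through by $t_{n+2}t_{n+3}$, and collapse the $t$-monomials into $u_n u_{n+1} u_{n+2}$ and $u_{n+1}$, reducing (\ref{896432000096}) to the trivial relation $\mathcal{G}_{1,1}-u_{n+1}\mathcal{G}_{1,0}=u_nu_{n+1}u_{n+2}$.

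One minor remark: you do not actually need to invoke Proposition \ref{983421098}. By the paper's standing convention that $u_0,u_1,\ldots$ abbreviate $u_n,u_{n+1},\ldots$, the expressions (\ref{43296665325}) for $\mathcal{G}_{1,0}$ and $\mathcal{G}_{1,1}$ are already written at index $n$, so the cancellation you display is a \emph{free} identity valid for an arbitrary sequence $(u_n)$, with no recurrence assumed. Your appeal to invariance is what then turns this identity into a constant-coefficient Somos-5 recurrence along solutions of (\ref{16753090097438}); that is a correct and useful observation, but it is additional content beyond what the lemma itself asserts.
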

\begin{lemma} \label{6549098} $g=2$.
Provided (\ref{896432097}), the relation
\begin{equation}
t_{n+3}t_{n+4}\left(t_nt_{n+7}-\mathcal{G}_{2, 0} t_{n+1}t_{n+6}-\mathcal{G}_{2, 2} t_{n+2}t_{n+5}\right)=-\mathcal{G}_{2, 1} B_{n+1}, 
\label{87621112342}
\end{equation}
where $\mathcal{G}_{2, 0}$, $\mathcal{G}_{2, 1}$ and $\mathcal{G}_{2, 2}$ are given by  (\ref{84328887435}), (\ref{8432866543875}) and (\ref{11215432876}), respectively, while 
\begin{equation}
B_n=t_nt_{n+3}^2t_{n+4}+t_{n+1}t_{n+2}^2t_{n+5}+t_{n+1}^2t_{n+4}^2
\label{87766}
\end{equation}
is an identity.
\end{lemma}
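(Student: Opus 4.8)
The statement is an algebraic identity: after substituting $u_n = t_nt_{n+3}/(t_{n+1}t_{n+2})$ into the genus-$2$ recurrence~(\ref{16888658}) (equivalently~(\ref{88711111}) with $g=2$) and multiplying through by a suitable monomial in the $t$'s, the resulting expression equals $t_{n+3}t_{n+4}\bigl(t_nt_{n+7} - \mathcal{G}_{2,0}t_{n+1}t_{n+6} - \mathcal{G}_{2,2}t_{n+2}t_{n+5}\bigr) + \mathcal{G}_{2,1}B_{n+1}$, with $\mathcal{G}_{2,0},\mathcal{G}_{2,1},\mathcal{G}_{2,2}$ the invariants~(\ref{84328887435})--(\ref{11215432876}) and $B_n$ as in~(\ref{87766}). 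The key structural fact that makes this tractable is that $\mathcal{G}_{2,0},\mathcal{G}_{2,1},\mathcal{G}_{2,2}$ are themselves \emph{invariants} of~(\ref{88711111}), so when written in the $u$'s they may be evaluated at whichever shift is most convenient; one is free to express them in terms of $(u_0,\ldots,u_4)$ via the linear relations listed after~(\ref{11215432876}).

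\emph{Step 1.} First I would rewrite the genus-$2$ equation~(\ref{16888658}) in the cleared-denominator form
\[
u_{n+2}\bigl(S^2_3(n) - H_1 S^1_2(n+1) + H_2\bigr) - u_{n+3}\bigl(S^2_3(n+2) - H_1 S^1_2(n+3) + H_2\bigr) = 0,
\]
and substitute~(\ref{896432097}). Each $S^k_s(n)$ becomes a Laurent polynomial in the $t_j$'s; after multiplying by the least common monomial denominator, which by inspection of the supports of the involved $u$'s is $t_{n+1}t_{n+2}^2t_{n+3}^3t_{n+4}^3t_{n+5}^2t_{n+6}$ (up to normalization), one obtains a genuine polynomial identity among $t_n,\ldots,t_{n+7}$ to be verified.

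\emph{Step 2.} On the right-hand side, I would substitute~(\ref{896432097}) into $\mathcal{G}_{2,0},\mathcal{G}_{2,1},\mathcal{G}_{2,2}$ using their definitions~(\ref{84328887435})--(\ref{11215432876}); since these are invariants, I would pick the shift that aligns their variable ranges with the window $u_n,\ldots,u_{n+4}$ appearing on the left, i.e. use the shifted polynomials $\mathcal{G}_{2,r}(n+k)$ and exploit that they are $n$-independent modulo~(\ref{88711111}). Then $\mathcal{G}_{2,0}t_{n+1}t_{n+6}$, $\mathcal{G}_{2,2}t_{n+2}t_{n+5}$ and $\mathcal{G}_{2,1}B_{n+1}$ each become Laurent polynomials in the $t$'s; clearing the same monomial denominator as in Step~1 reduces everything to a single polynomial identity, which is then checked by expanding both sides. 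Because all the ingredients are homogeneous of the same multidegree (this is exactly the point of the substitution~(\ref{896432097}): it is designed so that~(\ref{88711111}) becomes homogeneous), the verification is a finite, mechanical comparison of coefficients, which is why the lemma is asserted to hold "by direct verification."

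\emph{The main obstacle.} The real subtlety — and the place where a naive computation would fail — is \emph{why the combination collapses into precisely this form}, namely with the leftover term being $-\mathcal{G}_{2,1}B_{n+1}$ with $B_n$ the specific cubic-times-linear expression~(\ref{87766}), rather than some less structured remainder. The honest way to see this is to recognize that the homogeneous recurrence should be the "tau-function" form of~(\ref{88711111}) analogous to how Somos-5 arises for $g=1$ (the $g=1$ case being the clean bilinear relation~(\ref{896432000096})); for $g=2$ the relation is no longer bilinear but only bilinear after multiplication by $t_{n+3}t_{n+4}$, with $B_{n+1}$ the obstruction to exact bilinearity. I would therefore organize the verification around the ansatz that the left-hand side of~(\ref{87621112342}), as a polynomial in $t_n,\ldots,t_{n+7}$, must be divisible by the correct denominator monomial and have the predicted Newton polytope; confirming this divisibility and then matching the finitely many remaining coefficients against $\mathcal{G}_{2,0},\mathcal{G}_{2,1},\mathcal{G}_{2,2}$ expressed in the $u$'s (via the triangular system after~(\ref{11215432876})) completes the proof. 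The bookkeeping of monomial denominators across the $S^k_s$'s and across the three invariants is the part most prone to error and is where I would be most careful.
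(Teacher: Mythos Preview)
Your computational plan in Step~2 and the final paragraph is essentially what the paper means by ``direct verification'': substitute the explicit expressions (\ref{84328887435})--(\ref{11215432876}) for $\mathcal{G}_{2,0},\mathcal{G}_{2,1},\mathcal{G}_{2,2}$ (shifted to base point $n$, so that they are polynomials in $u_n,\ldots,u_{n+4},H_1,H_2$), then substitute (\ref{896432097}), clear denominators, and check the resulting polynomial identity in $(t_n,\ldots,t_{n+7},H_1,H_2)$. That is exactly right and matches the paper.

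However, your Step~1 and your opening sentence are based on a misreading of the lemma. The relation (\ref{87621112342}) is \emph{not} the recurrence (\ref{16888658}) rewritten in the $t$-variables, and the recurrence plays no role whatsoever in this lemma. Observe the variable count: (\ref{16888658}) involves $u_n,\ldots,u_{n+5}$, hence after (\ref{896432097}) it involves $t_n,\ldots,t_{n+8}$; but (\ref{87621112342}) involves only $t_n,\ldots,t_{n+7}$. So the claim that the substituted and cleared recurrence ``equals'' the expression in the lemma cannot be correct. Likewise, your appeal to ``$n$-independence modulo (\ref{88711111})'' to freely shift the $\mathcal{G}_{2,r}$ is illegitimate here: the lemma asserts an \emph{unconditional} identity, valid for an arbitrary sequence $(t_n)$, so you may not assume (\ref{88711111}) holds and hence may not use invariance. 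The $\mathcal{G}_{2,r}$ in (\ref{87621112342}) must be read as the \emph{expressions} (\ref{84328887435})--(\ref{11215432876}) at the fixed base point $n$, not as constants. (The $g=1$ case makes this transparent: substituting $\mathcal{G}_{1,1}=u_nu_{n+1}u_{n+2}+u_{n+1}\mathcal{G}_{1,0}$ and $u_nu_{n+1}u_{n+2}=t_nt_{n+5}/(t_{n+2}t_{n+3})$, $u_{n+1}=t_{n+1}t_{n+4}/(t_{n+2}t_{n+3})$ into (\ref{896432000096}) gives $0=0$ identically, with no recurrence invoked.)

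So: drop Step~1 entirely, drop any use of invariance, and carry out only the direct substitution you describe in Step~2. That is the proof.
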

It is easy to see that the polynomial $B_n$ given by (\ref{87766}) has such a symmetry that  the relation (\ref{87621112342}) is invariant under the  reversing
$\left(t_0, t_{1}, t_{2}, t_{3}, t_{4}, t_{5}, t_{6}, t_{7}\right)\rightarrow \left(t_{7}, t_{6}, t_{5}, t_{4}, t_{3}, t_{2}, t_{1}, t_0\right)$.
\begin{lemma} 
\label{654309865}
$g=3$.
Provided (\ref{896432097}), the relation
\begin{equation}
t_{n+3}t_{n+4}t_{n+5}t_{n+6}\left(t_nt_{n+9}+\mathcal{G}_{3, 0} t_{n+1}t_{n+8}-\mathcal{G}_{3, 3} t_{n+2}t_{n+7}\right)=\mathcal{G}_{3, 1} B_{n+1, 1}  -  \mathcal{G}_{3, 2} B_{n+1, 2}
\label{00432000985436}
\end{equation}
is an identity, where
\begin{eqnarray}
B_{n, 1}&=&t_nt_{n+3}^3t_{n+6}^2+t_{n+1}^2t_{n+4}^3t_{n+7}+t_nt_{n+3}^2t_{n+4}^2t_{n+7}+t_{n+1}^2t_{n+3}t_{n+4}t_{n+6}^2 \nonumber\\
&&+t_nt_{n+2}t_{n+3}t_{n+5}^2t_{n+6}+t_{n+1}t_{n+2}^2t_{n+4}t_{n+5}t_{n+7}
\label{896432000985436}
\end{eqnarray}
and
\begin{eqnarray}
B_{n, 2}&=&t_nt_{n+3}^2t_{n+4}t_{n+5}t_{n+6}+t_{n+1}t_{n+2}t_{n+3}t_{n+4}^2t_{n+7} \nonumber\\
&&+t_{n+1}t_{n+2}t_{n+3}^2t_{n+6}^2+t_{n+1}t_{n+2}^2t_{n+5}^2t_{n+6}+t_{n+1}^2t_{n+4}^2t_{n+5}t_{n+6}.
\label{89677766}
\end{eqnarray}
\end{lemma}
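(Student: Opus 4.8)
The plan is to reduce the stated bilinear identity, via the substitution (\ref{896432097}), to a single relation among the invariants $\mathcal{G}_{3,0},\dots,\mathcal{G}_{3,3}$ and the free sequence $(u_n)$, and then to deduce that relation from the known structure of the $\mathcal{G}_{g,r}$. The basic tool is the telescoping identity: if $u_{n+j}=t_{n+j}t_{n+j+3}/(t_{n+j+1}t_{n+j+2})$, then for $a\le b$
\[
\prod_{j=a}^{b}u_{n+j}=\frac{t_{n+a}\,t_{n+b+3}}{t_{n+a+2}\,t_{n+b+1}},
\]
which follows by counting exponents. In particular, after the substitution (\ref{896432097}),
\[
t_nt_{n+9}=\Big(\prod_{j=0}^{6}u_{n+j}\Big)t_{n+2}t_{n+7},\qquad
t_{n+1}t_{n+8}=u_{n+1}u_{n+3}u_{n+5}\,t_{n+2}t_{n+7}=T^3_3(n+1)\,t_{n+2}t_{n+7},
\]
and, expanding each monomial of (\ref{896432000985436}) and (\ref{89677766}) over the common denominator $t_{n+2}t_{n+3}t_{n+4}t_{n+5}t_{n+6}t_{n+7}$ and using the same telescoping, one checks term by term that
\[
B_{n+1,1}=T^2_4(n+1)\,t_{n+2}t_{n+3}t_{n+4}t_{n+5}t_{n+6}t_{n+7},\qquad
B_{n+1,2}=T^1_5(n+1)\,t_{n+2}t_{n+3}t_{n+4}t_{n+5}t_{n+6}t_{n+7},
\]
where $T^k_s$ is the discrete polynomial (\ref{6897654000}); here $T^1_5(n+1)=u_{n+1}+\cdots+u_{n+5}$, $T^2_4(n+1)=\sum_{1\le i,\ i+2\le j\le 5}u_{n+i}u_{n+j}$ and $T^3_3(n+1)=u_{n+1}u_{n+3}u_{n+5}$. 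Substituting these four relations into (\ref{00432000985436}) and cancelling $t_{n+2}t_{n+3}t_{n+4}t_{n+5}t_{n+6}t_{n+7}$ reduces (\ref{00432000985436}) to the equivalent relation, which I will call $(\star)$:
\[
\mathcal{G}_{3,3}-T^1_5(n+1)\,\mathcal{G}_{3,2}+T^2_4(n+1)\,\mathcal{G}_{3,1}-T^3_3(n+1)\,\mathcal{G}_{3,0}=u_nu_{n+1}u_{n+2}u_{n+3}u_{n+4}u_{n+5}u_{n+6},
\]
in which the $t_n$ no longer appear.

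It remains to prove $(\star)$. I would start from the defining recursion (\ref{2777000987}) of Proposition \ref{983421098} for $g=3$, which gives $\mathcal{G}_{3,r}$ in terms of $\mathcal{G}_{3,0},\dots,\mathcal{G}_{3,r-1}$ and the seed $\prod_{j=3-r}^{3+r}u_{n+j}\,\hat{S}^{3-r}_{r+5}(n)$; for $r=1,2$ the $Q$- and $S$-coefficients simplify immediately by (\ref{00000932}) and the identity $S^2_2(n+2)=u_{n+3}(u_{n+2}+u_{n+3}+u_{n+4})$, and solving the resulting triangular system for $r=3$ expresses $\mathcal{G}_{3,3}$ as a linear combination of $\mathcal{G}_{3,0},\mathcal{G}_{3,1},\mathcal{G}_{3,2}$ and $\prod_{j=0}^{6}u_{n+j}$. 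That the coefficients of this combination are exactly $\mp T^1_5(n+1),\pm T^2_4(n+1),\mp T^3_3(n+1)$, so that the combination is precisely $(\star)$, is the content of the triangular system for the invariants (see the remark following Proposition \ref{908432098}); it ultimately reduces to the polynomial identity
\[
\sum_{j=0}^{k}(-1)^j Q^{k-j}_{s-j}(n+j)\,T^j_{s+k-j}(n)=0,
\]
which is verified in \cite{Svinin4} for $k=2$ and $k=3$, and the case $g=3$ uses exactly $k=3$. All of this is an identity in the free sequence $(u_n)$ with $H_1,H_2,H_3$ free parameters; in particular no use of the recurrence (\ref{88711111}) is needed, which is why, as the author remarks, one may equivalently just substitute (\ref{896432097}) into (\ref{00432000985436}) and expand.

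The step I expect to be the real obstacle is precisely this last one: the $Q$--$T$ cancellation identity with $k=3$, which performs the collapse of the coefficients produced by the triangular elimination onto the tidy polynomials $T^1_5,T^2_4,T^3_3$. This is exactly where $g=3$ goes beyond the easier cases: for $g=1$ the relation $(\star)$ is trivial and for $g=2$ only $k\le 2$ enters, whereas here the full $k=3$ case is required and, correspondingly, the non-telescoping remainder on the right of (\ref{00432000985436}) splits into the two distinct defect polynomials $B_{n+1,1},B_{n+1,2}$ rather than a single one. A convenient consistency check throughout is that (\ref{00432000985436}) must be invariant under the reversal $t_{n+j}\leftrightarrow t_{n+9-j}$, which induces $u_{n+j}\leftrightarrow u_{n+6-j}$ and is manifestly respected both by $(\star)$ and by the symmetric polynomials $B_{n,1},B_{n,2}$; this fixes the index bookkeeping in the elimination and catches sign slips.
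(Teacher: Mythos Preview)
Your argument is correct, and it is genuinely different from the paper's. The paper offers no structured proof of this lemma at all: immediately before the three lemmas for $g=1,2,3$ it states that ``all these statements can be proven by direct verification,'' and after this lemma it remarks only that the invariants $\mathcal{G}_{3,r}$ ``can be calculated to verify Lemma~\ref{654309865}.'' In other words, the paper's proof is brute-force expansion of both sides of (\ref{00432000985436}) after the substitution (\ref{896432097}).

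Your route is more conceptual. The telescoping reduction of each bilinear piece of (\ref{00432000985436}) to a monomial in the $u$'s times $t_{n+2}\cdots t_{n+7}$ is clean and correct; in particular your identifications $B_{n+1,1}=T^2_4(n+1)\,t_{n+2}\cdots t_{n+7}$ and $B_{n+1,2}=T^1_5(n+1)\,t_{n+2}\cdots t_{n+7}$ check term by term, and explain \emph{why} exactly these two defect polynomials appear. The resulting relation $(\star)$ is precisely the $r=g=3$ line of the triangular system displayed in the Remark after Proposition~\ref{908432098} (compare the $g=2$ analogue written out in Example~5.13), and your reduction of $(\star)$ to the $Q$--$T$ cancellation identity is accurate: eliminating $G_{3,3},G_{3,2},G_{3,1},G_{3,0}$ from the definitions in Proposition~\ref{983421098} requires that identity at $k=2$ (for the $G_{3,1}$-coefficient) and at $k=3$ (for the $G_{3,0}$-coefficient), both of which the paper records as proved in \cite{Svinin4}. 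So your ``obstacle'' is already handled by that reference; note only that $k=2$ is needed as well, not ``exactly $k=3$.''

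What each approach buys: the paper's direct verification is self-contained and does not invoke \cite{Svinin4}, at the price of being an uninformative computation. Your argument exposes the mechanism --- (\ref{00432000985436}) is nothing but the $t$-variable incarnation of the last row of the $(Q,T)$ triangular system --- and makes the reversal symmetry and the pattern of the $B$-polynomials transparent; the cost is reliance on the external $Q$--$T$ identity. Either way the lemma stands as an identity in the free sequence $(u_n)$ and free parameters $H_1,H_2,H_3$, with no appeal to the recurrence (\ref{88711111}).
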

\begin{remark}
We have not write out the invariants $\left(\mathcal{G}_{3, 0}, \mathcal{G}_{3, 1}, \mathcal{G}_{3, 2}, \mathcal{G}_{3, 3}\right)$ in the examples explicitly, due to their bulkiness, but be that as it may, they are defined above and therefore  they can be calculated to verify Lemma \ref{654309865}.
\end{remark}
\begin{remark}
A similar remark to that made about the symmetry of $B_n$  can be made regarding  polynomials $B_{n, 1}$ and $B_{n, 2}$.
\end{remark}

\subsection{$g=1$. The Somos-5 recurrence}

It should be noted that the relation (\ref{896432000096}) is a well-known and fairly well-studied Somos-5 equation. It represents one of the bilinear (quadratic) equations sharing the Laurent property \cite{Gale}. 
The usual notation of this equation in the literature is
\begin{equation}
t_nt_{n+5}=\alpha t_{n+1}t_{n+4}+\beta t_{n+2}t_{n+3},
\label{887854327}
\end{equation}
where  $\alpha$ and $\beta$ are supposed to be some parameters. 
\begin{lemma} \cite{Hone2} 
The rational function
\begin{equation}
J=\frac{t_nt_{n+3}}{t_{n+1}t_{n+2}}+\frac{t_{n+1}t_{n+4}}{t_{n+2}t_{n+3}}+\alpha\left(\frac{t_{n+1}t_{n+2}}{t_nt_{n+3}}+\frac{t_{n+2}t_{n+3}}{t_{n+1}t_{n+4}}\right)+\beta \frac{t_{n+2}^2}{t_nt_{n+4}} 
\label{0945321}
\end{equation}
is an invariant for the Somos-5 equation (\ref{887854327}).
\end{lemma}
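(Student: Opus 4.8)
The plan is to verify that $J$ is constant along orbits of the Somos-5 map by showing $J$ evaluated at $n+1$ equals $J$ evaluated at $n$, working entirely inside the field of rational functions in finitely many $t$'s and using (\ref{887854327}) to eliminate the highest-index variable. Concretely, I would treat $t_n,\ldots,t_{n+4}$ as free indeterminates and set $t_{n+5}=(\alpha t_{n+1}t_{n+4}+\beta t_{n+2}t_{n+3})/t_n$ from (\ref{887854327}). Then both $J$ (the $n$-version) and its shift $\Lambda J$ (the $n+1$-version, which involves $t_{n+1},\ldots,t_{n+5}$ and hence $t_{n+5}$) become explicit rational functions of $t_n,\ldots,t_{n+4}$ alone, and the claim is the identity $\Lambda J - J = 0$ in $\mathbb{Q}(\alpha,\beta)(t_n,\ldots,t_{n+4})$.

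The key steps, in order: (i) write $J$ as a single fraction over the common denominator $t_nt_{n+1}t_{n+3}t_{n+4}$ — note $t_{n+2}^2/(t_nt_{n+4})$ fits since the first two terms already carry $t_nt_{n+4}$ after clearing — so $J = N_0/(t_nt_{n+1}t_{n+3}t_{n+4})$ with $N_0$ a polynomial; (ii) similarly write $\Lambda J = N_1/(t_{n+1}t_{n+2}t_{n+4}t_{n+5})$ and substitute the Somos-5 expression for $t_{n+5}$, clearing the induced denominator $t_n$; (iii) cross-multiply and reduce $\Lambda J - J$ to a single polynomial identity in $t_n,\ldots,t_{n+4}$ with coefficients in $\mathbb{Q}(\alpha,\beta)$; (iv) check that this polynomial is identically zero, either by direct expansion or, more cleverly, by exploiting the palindromic symmetry $t_{n+j}\leftrightarrow t_{n+4-j}$ that fixes $J$ (visible term-by-term in (\ref{0945321})) together with the fact that (\ref{887854327}) is itself reversal-invariant, which halves the work. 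Alternatively, one can avoid the substitution by instead verifying the telescoping identity $\Lambda J - J = 0$ after multiplying through, using (\ref{887854327}) and its once-shifted version $t_{n+1}t_{n+6}=\alpha t_{n+2}t_{n+5}+\beta t_{n+3}t_{n+4}$ to rewrite mixed products.

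The main obstacle is purely the bookkeeping of step (iii)–(iv): after substitution the numerator of $\Lambda J - J$ is a polynomial of fairly high degree in five variables, and one must organize the cancellation carefully rather than expand blindly. The symmetry remark is the lever that makes this tractable — it guarantees the numerator, if nonzero, would be a reversal-antisymmetric polynomial of known degree and support, which sharply constrains it and lets one confirm vanishing by checking a small number of monomial coefficients or a few numerical specializations of $(t_n,\ldots,t_{n+4},\alpha,\beta)$. Since the lemma is quoted from \cite{Hone2}, I would present the verification in this streamlined symmetric form rather than reproving it from the theory of the associated elliptic curve.
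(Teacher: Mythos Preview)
Your direct-verification plan is sound and would establish the lemma: treating $t_n,\ldots,t_{n+4}$ as indeterminates, substituting $t_{n+5}=(\alpha t_{n+1}t_{n+4}+\beta t_{n+2}t_{n+3})/t_n$, and checking $\Lambda J-J=0$ is exactly the kind of finite rational-function identity that settles invariance, and your use of the palindromic symmetry $t_{n+j}\leftrightarrow t_{n+4-j}$ to organise the cancellation is the natural shortcut.

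As for comparison with the paper: there is nothing to compare against. The paper does not supply its own proof of this lemma; it is quoted from \cite{Hone2} and stated without argument, followed only by a remark on the $2$-invariant $\beta_n$ and the relation $J=(\beta_n\beta_{n+1}-\beta)/\alpha$. That remark, incidentally, points to an alternative and arguably cleaner route you might mention: one first checks directly (a shorter computation) that $\beta_n=(t_nt_{n+4}+\alpha t_{n+2}^2)/(t_{n+1}t_{n+3})$ satisfies $\beta_{n+2}=\beta_n$ under the Somos-5 recurrence, and then observes that $J=(\beta_n\beta_{n+1}-\beta)/\alpha$ is automatically a genuine invariant since it is a symmetric function of the $2$-periodic pair $(\beta_n,\beta_{n+1})$. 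This two-step argument localises the verification to a smaller identity than your all-at-once substitution, though both are ultimately mechanical.
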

\begin{remark}
It is a well-known fact that the invariant $J$ given by (\ref{0945321}) can be represented as $J=\left(\beta_n\beta_{n+1}-\beta\right)/\alpha$, where  
\[
\beta_n=\frac{t_nt_{n+4}+\alpha t_{n+2}^2}{t_{n+1}t_{n+3}}
\] 
turns out to be a 2-invariant for the Somos-5 recurrence, that is, it satisfies $\beta_{n+2}=\beta_n$ in virtue of (\ref{887854327}). In other words, the Somos-5 recurrence (\ref{887854327}) is equivalent to the equation
\[
t_nt_{n+4}=\beta_n t_{n+1}t_{n+3}-\alpha t_{n+2}^2,\;\; \beta_{n+2}=\beta_n,
\]
which is the Somos-4 recurrence with one 2-periodic coefficient.
\end{remark}

Let us rewrite the first relation in (\ref{43296665325}) from  Example \ref{9084532} as an equation
\[
u_{n+1}\left(u_n+u_{n+1}+u_{n+2}-H_1\right)+\alpha=0.
\]
Nothing prevents us to substitute (\ref{896432097}) into this relation. As a result, we obtain a homogeneous equation
\begin{equation}
t_{n+2}t_{n+3}\left(t_{n+1}t_{n+4} H_1-\alpha t_{n+2}t_{n+3}\right)=B_n,
\label{7675322}
\end{equation}
where the discrete polynomial $B_n$ is given by (\ref{87766}). Note that the appearance of the polynomial $B_n$ in (\ref{7675322}), generally speaking,  is unexpected since it arises in  Lemma  \ref{6549098}  corresponding to the case $g=2$ and not $g=1$. 

The following statement is proved by direct verification.
\begin{lemma}
\label{9043765}
The recurrence (\ref{7675322}) is equivalent to Somos-5 equation (\ref{887854327}).   The identification is determined by means of $H_1=J$, where $J$ is given by (\ref{0945321}). 
\end{lemma}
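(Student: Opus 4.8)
The plan is to recast both recurrences as statements about one and the same rational expression, and then to let the identification $H_1=J$ fall out. Dividing (\ref{7675322}) through by $t_{n+1}t_{n+2}t_{n+3}t_{n+4}$ and using the explicit form (\ref{87766}) of $B_n$, one sees at once that (\ref{7675322}) is equivalent to
\[
\tilde J_n:=\frac{t_nt_{n+3}}{t_{n+1}t_{n+2}}+\frac{t_{n+2}t_{n+5}}{t_{n+3}t_{n+4}}+\frac{t_{n+1}t_{n+4}}{t_{n+2}t_{n+3}}+\frac{\alpha\,t_{n+2}t_{n+3}}{t_{n+1}t_{n+4}}=H_1 ,
\]
i.e., in the variable $u_n=t_nt_{n+3}/(t_{n+1}t_{n+2})$ of (\ref{896432097}), to $u_n+u_{n+1}+u_{n+2}+\alpha/u_{n+1}=H_1$. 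So ``(\ref{7675322}) holds for all $n$'' means exactly ``$\tilde J_n$ is independent of $n$, with value $H_1$''.

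The easy half: if $(t_n)$ solves the Somos-5 recurrence (\ref{887854327}), substitute $t_{n+5}=(\alpha t_{n+1}t_{n+4}+\beta t_{n+2}t_{n+3})/t_n$ into $\tilde J_n$. The second summand splits as $\alpha t_{n+1}t_{n+2}/(t_nt_{n+3})+\beta t_{n+2}^2/(t_nt_{n+4})$, and $\tilde J_n$ collapses precisely to the expression $J$ of (\ref{0945321}). Since $J$ is an invariant of (\ref{887854327}) by the quoted result of \cite{Hone2}, it is constant along the solution, hence $\tilde J_n\equiv J$; thus every solution of (\ref{887854327}) satisfies (\ref{7675322}) with $H_1=J$.

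For the converse, suppose $(t_n)$ satisfies (\ref{7675322}), equivalently $\tilde J_n=\tilde J_{n-1}$ for all $n$. Forming $\tilde J_n-\tilde J_{n-1}$, the ratios $t_nt_{n+3}/(t_{n+1}t_{n+2})$ and $t_{n+1}t_{n+4}/(t_{n+2}t_{n+3})$ occur in both and cancel; dividing the remainder by $t_{n+2}$ and clearing denominators yields
\[
\frac{t_{n+1}t_{n+5}+\alpha t_{n+3}^2}{t_{n+4}}=\frac{t_{n-1}t_{n+3}+\alpha t_{n+1}^2}{t_n},
\]
which is exactly $\beta_{n+1}=\beta_{n-1}$ for $\beta_n:=(t_nt_{n+4}+\alpha t_{n+2}^2)/(t_{n+1}t_{n+3})$; that is, $\beta_n$ is $2$-periodic. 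By the equivalence recalled in the Remark above --- that the Somos-5 recurrence is the Somos-4 recurrence $t_nt_{n+4}=\beta_nt_{n+1}t_{n+3}-\alpha t_{n+2}^2$ carrying a single $2$-periodic coefficient --- this forces $(t_n)$ to satisfy (\ref{887854327}) for a suitable constant $\beta$, and then the easy half gives $H_1=J$. Combining the two halves shows that (\ref{7675322}) and (\ref{887854327}) have the same solutions for corresponding parameters and that $H_1$ coincides with the value of the Somos-5 invariant $J$ of (\ref{0945321}).

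I expect the only real work to be bookkeeping: checking, without sign slips, that under (\ref{887854327}) the middle term of $\tilde J_n$ splits off exactly the two $\alpha$- and $\beta$-terms of $J$, and that the subtraction $\tilde J_n-\tilde J_{n-1}$ simplifies to the stated relation $\beta_{n+1}=\beta_{n-1}$. All of this takes place in the field of rational functions in the $t_j$, so it is valid wherever the relevant $t_j$ are nonzero, i.e., on the natural domain of both birational recurrences; the only inputs beyond elementary algebra are the invariance of $J$ from \cite{Hone2} and the Somos-5 / $2$-periodic-Somos-4 equivalence. Alternatively one may simply insert (\ref{896432097}) into both sides and verify the resulting polynomial identity by hand, exactly as announced; the argument above is merely the version that makes the role of $J$ transparent, and it also recovers the reduction of (\ref{7675322}) to the third-order equation (\ref{16753090097438}).
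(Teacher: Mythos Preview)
Your argument is correct and carries out in detail what the paper merely announces as ``direct verification.'' The forward direction (Somos-5 $\Rightarrow$ (\ref{7675322}) with $H_1=J$) is exactly the substitution you describe, and your computation that the middle term of $\tilde J_n$ splits into the $\alpha$- and $\beta$-terms of $J$ under (\ref{887854327}) is accurate. For the converse, your reduction of $\tilde J_n-\tilde J_{n-1}=0$ to the $2$-periodicity of $\beta_n$ is clean --- indeed one checks that $\tilde J_n-\tilde J_{n-1}=(\beta_{n+1}-\beta_{n-1})\,t_{n+2}^2/(t_{n+1}t_{n+3})$ --- and then you correctly invoke the Somos-5/$2$-periodic-Somos-4 equivalence stated in the Remark preceding the lemma.

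The paper offers no argument beyond the phrase ``proved by direct verification,'' so there is no real comparison to make; your route is a natural and transparent way to perform that verification, with the added benefit that it explains the appearance of $J$ rather than discovering it after the fact. Your closing remark, that one could alternatively substitute (\ref{896432097}) and check a polynomial identity, is presumably what the author had in mind.
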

Next we will present a proof of the following statement.
\begin{proposition}
\label{98543209}
The recurrence (\ref{7675322}) has the Laurent property. More precisely, for
all $n\in\mathbb{Z}$, $t_n\in\mathbb{Z}[\alpha, \beta,  t_0, t_1^{\pm 1}, t_2^{\pm 1}, t_3^{\pm 1}, t_4]$.
\end{proposition}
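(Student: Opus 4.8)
The plan is to establish the statement by a Fomin--Zelevinsky-style induction on the recurrence (\ref{7675322}) itself, invoking its identification with Somos-5 (Lemma \ref{9043765}) only to supply the $2$-invariant that makes the induction close. Writing (\ref{7675322}) solved for its top index as $t_{n+5}=N_n/(t_{n+1}t_{n+2}^{2})$ with
\[
N_n:=H_1\,t_{n+1}t_{n+2}t_{n+3}t_{n+4}-\alpha\,t_{n+2}^{2}t_{n+3}^{2}-t_{n}t_{n+3}^{2}t_{n+4}-t_{n+1}^{2}t_{n+4}^{2},
\]
and noting that, solved for its bottom index, the denominator is $t_{n+3}^{2}t_{n+4}$, one sees at once that the forward iteration from $(t_0,\dots,t_4)$ never puts $t_0$ into a denominator and the backward iteration never puts $t_4$ into one. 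It therefore suffices to run two parallel inductions: assuming $t_n,\dots,t_{n+4}\in R:=\mathbb{Z}[\alpha,\beta,t_0,t_1^{\pm1},t_2^{\pm1},t_3^{\pm1},t_4]$, one must show that $N_n$ is divisible in $R$ by $t_{n+1}t_{n+2}^{2}$, and symmetrically for the backward step.

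For the factor $t_{n+1}$ I would use that $\beta_n:=(t_nt_{n+4}+\alpha t_{n+2}^{2})/(t_{n+1}t_{n+3})$ is the $2$-invariant of the Remark after Lemma \ref{9043765}; it lies in $R$ because it equals the Laurent polynomial in the five initial values, and the defining identity $\alpha t_{n+2}^{2}+t_nt_{n+4}=\beta_n t_{n+1}t_{n+3}$ then exhibits
\[
N_n=t_{n+1}\bigl(H_1t_{n+2}t_{n+3}t_{n+4}-t_{n+1}t_{n+4}^{2}-\beta_nt_{n+3}^{3}\bigr)\in t_{n+1}R.
\]
The factor $t_{n+2}^{2}$ is the genuinely new point, forced by the cubic polynomial $B_n$ of (\ref{87766}). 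One first feeds in the recurrence (\ref{7675322}) at the adjacent index $n-1$ --- already known and visibly a multiple of $t_{n+2}$ --- to get $t_nt_{n+3}^{2}+t_{n+1}^{2}t_{n+4}=t_{n+2}C_n$ with $C_n=H_1t_{n+1}t_{n+3}-\beta_{n-1}t_{n+2}^{2}\in R$; substituting this back collapses the quotient to
\[
\frac{N_n}{t_{n+2}}=t_{n+2}\bigl(\beta_{n-1}t_{n+2}t_{n+4}-\alpha t_{n+3}^{2}\bigr),
\]
so the second factor $t_{n+2}$ comes out as well, and since $t_{n+1}$ and $t_{n+2}$ are coprime in $R$ we conclude $t_{n+1}t_{n+2}^{2}\mid N_n$. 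The backward step is symmetric, with the recurrence at the adjacent index and the invariant $\beta_{n+1}$ in the roles of their index-$(n-1)$ counterparts.

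I expect the real obstacle to be the coprimality input used repeatedly above: that consecutive terms $t_k,t_{k+1}$, and likewise $t_k,t_{k+2}$ and $t_{k+1},t_{k+3}$, are pairwise coprime as elements of the Laurent ring $R$. As is standard in Laurent-phenomenon proofs this is not automatic and must be carried along in the induction --- one tracks the reduction of each newly produced term modulo the neighbouring ones and checks that it stays a non-zero-divisor coprime to them --- and verifying all of these reductions is where the bulk of the work lies. An alternative that encapsulates this bookkeeping is to present (\ref{7675322}), via Lemma \ref{9043765}, as a chain of cluster-type mutations and to invoke the Caterpillar Lemma of Fomin and Zelevinsky; even then the squared denominator $t_{n+2}^{2}$, the single departure from the Somos-4 and Somos-5 patterns, has to be handled by the explicit factorisation described here.
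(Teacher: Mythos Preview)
Your algebraic manipulations are sound: the identity
\[
N_n=t_{n+2}^{2}\bigl(\beta_{n-1}\,t_{n+2}t_{n+4}-\alpha\,t_{n+3}^{2}\bigr)
\]
does hold (and in fact extracts the factor $t_{n+2}^{2}$ with no coprimality assumption at all), and the factorisation $N_n=t_{n+1}(\cdots)$ via $\beta_n$ is correct. You are also right that the 2-invariants $\beta_0,\beta_1$ lie in $R$, so the scheme is internally consistent. What remains --- the coprimality of consecutive $t_k$ in the ring $R$, needed to pass from ``$t_{n+1}\mid N_n$ and $t_{n+2}^{2}\mid N_n$'' to ``$t_{n+1}t_{n+2}^{2}\mid N_n$'' --- you have correctly flagged as the crux, but you have not carried it out. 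This is not a mere formality here: because the statement concerns the \emph{refined} ring $R=\mathbb{Z}[\alpha,H_1,t_0,t_1^{\pm1},t_2^{\pm1},t_3^{\pm1},t_4]$ in which $t_0$ and $t_4$ are \emph{not} inverted, the standard Fomin--Zelevinsky coprimality bookkeeping (normally done over the ring with all seeds inverted) must be redone in a smaller ring, and that is where any genuine difficulty would surface. As written, the proposal is therefore a reasonable plan with the essential verification left open.

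The paper proceeds by an entirely different route that sidesteps coprimality altogether. It passes to the Stieltjes coefficients $h_j$ of the continued fraction (so that $h_1=t_0t_3/(t_1t_2)$, $h_2=t_0t_4/t_2^{2}$), invokes Theorem~\ref{897643098} to produce a recursion for $h_n$ with coefficients in $R$ and $R/t_0$, and checks by induction that $h_n\in t_0R$ for all $n\geq1$. The Hankel-determinant identity of Lemma~\ref{66654320987},
\[
t_{2k}=t_0\Bigl(\tfrac{t_2}{t_0}\Bigr)^{k}\Delta_{2k-2},\qquad
t_{2k+1}=\tfrac{t_0t_1}{t_2}\Bigl(\tfrac{t_2}{t_0}\Bigr)^{k+1}\Delta_{2k-1},
\]
then converts $h_n\in t_0R$ into $\Delta_{2k-1},\Delta_{2k}\in t_0^{k}R$, which exactly cancels the negative powers of $t_0$ and yields $t_n\in R$ for $n\geq0$; the reversing symmetry handles $n<0$. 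The advantage of this Hankel/continued-fraction approach is that the only ``division'' ever performed is by powers of $t_0$, and those are controlled by a single degree count --- no gcd tracking is needed. Your approach, by contrast, is more elementary and stays close to the recurrence itself, but at the price of the coprimality induction that you have left undone.
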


\subsection{Proof of the Proposition \ref{98543209}}

For further discussion it is convenient to introduce the short notation $\mathcal{R}=\mathbb{Z}[\alpha, H_1,  t_0, t_1^{\pm 1}, t_2^{\pm 1}, t_3^{\pm 1}, t_4]$. To be fair, we note that the idea of the proof of Proposition \ref{98543209} presented below is not original and can be found in  \cite{Hone1}.  To prove this proposition, one uses the birational transformation $\left(u_0, u_1,\ldots\right)\leftrightarrow \left(h_1, h_2,\ldots\right)$ defined by the relation
\[
F_0=1+\frac{u_0 x}{1+\displaystyle{\frac{u_1 x}{1+\displaystyle{\frac{u_2 x}{1+\cdots}}}}}=1-\sum_{j\geq 1} (-1)^j h_j.
\]
Provided (\ref{896432097}), we have, for example, 
\begin{equation}
h_1=\frac{t_0t_3}{t_1t_2}\;\; \mbox{and}\;\; h_2=\frac{t_0t_4}{t_2^2}.
\label{900098}
\end{equation}
Let us define  the sequence $\left(\Delta_k\right)_{k\geq 1}$ by $\Delta_{2k-1}=\det \left(h_{n+m-1}\right)_{n, m=1,\ldots, k}$ and $\Delta_{2k}=\det \left(h_{n+m}\right)_{n, m=1,\ldots, k},\; \forall k\geq 1$. 
Additionally, for convenience, let $\Delta_{-2}=\Delta_{-1}=\Delta_0=1$.

The following technical statement will be helpful.
\begin{lemma}
\label{66654320987}
\cite{Hone1}
The relations
\begin{equation}
t_{2k}=t_0\left(\frac{t_2}{t_0}\right)^k\Delta_{2k-2}\;\; \mbox{and}\;\; t_{2k+1}=\frac{t_0t_1}{t_2}\left(\frac{t_2}{t_0}\right)^{k+1}\Delta_{2k-1},\;\; \forall k\geq 0
\label{986453}
\end{equation}
are identities.
\end{lemma}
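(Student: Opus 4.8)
The plan is to derive the whole lemma from a single classical identity: the partial quotients of the Stieltjes fraction $F_0$ are ratios of consecutive Hankel determinants of its coefficient sequence $(h_j)$, namely
\[
u_n=\frac{\Delta_{n-2}\Delta_{n+1}}{\Delta_{n-1}\Delta_n},\qquad \forall n\ge 0,
\]
with the conventions $\Delta_{-2}=\Delta_{-1}=\Delta_0=1$ already in force. Granting this, introduce the gauge factors $\gamma_{2k}=t_0\bigl(t_2/t_0\bigr)^k$ and $\gamma_{2k+1}=(t_0t_1/t_2)\bigl(t_2/t_0\bigr)^{k+1}$, so that the right-hand sides of (\ref{986453}) are precisely $\tau_n:=\gamma_n\Delta_{n-2}$. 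A one-line computation shows $\gamma_n\gamma_{n+3}=\gamma_{n+1}\gamma_{n+2}$ for every $n$, whence
\[
\frac{\tau_n\tau_{n+3}}{\tau_{n+1}\tau_{n+2}}=\frac{\gamma_n\gamma_{n+3}}{\gamma_{n+1}\gamma_{n+2}}\cdot\frac{\Delta_{n-2}\Delta_{n+1}}{\Delta_{n-1}\Delta_n}=\frac{\Delta_{n-2}\Delta_{n+1}}{\Delta_{n-1}\Delta_n}=u_n,
\]
so $(\tau_n)$ obeys the same multiplicative three-term relation (\ref{896432097}) as $(t_n)$. Since $\Delta_{-2}=\Delta_{-1}=\Delta_0=1$ forces $\tau_0=t_0$, $\tau_1=t_1$, $\tau_2=t_2$, and since (\ref{896432097}) determines $t_{n+3}$ from $t_n,t_{n+1},t_{n+2}$ and $u_n$, induction on $n$ gives $\tau_n=t_n$ for all $n\ge 0$, which is exactly (\ref{986453}).

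It remains to establish the Hankel-determinant formula for $u_n$. This is the Stieltjes--Jacobi (Heilermann) correspondence. Writing $G(x):=(F_0(x)-1)/x=u_0/(1+u_1x/(1+\cdots))=\sum_{j\ge 0}(-1)^jh_{j+1}x^j$, one sees that $G$ is the generating function of the ``moments'' $\mu_j=(-1)^jh_{j+1}$; the Hankel determinants $\det(\mu_{i+j})_{0\le i,j\le k-1}$ and $\det(\mu_{i+j+1})_{0\le i,j\le k-1}$ agree, up to powers of $-1$ that cancel in every ratio occurring below, with $\Delta_{2k-1}$ and $\Delta_{2k}$ respectively. Feeding these into the classical formulas for the partial quotients of the $S$-fraction of $G$ yields $u_n=\Delta_{n-2}\Delta_{n+1}/(\Delta_{n-1}\Delta_n)$ for all $n\ge 0$; the base cases $n=0,1,2$ may be read off directly from $h_1=u_0$, $h_2=u_0u_1$, $h_3=u_0u_1(u_1+u_2)$. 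For a self-contained argument one can instead induct on $n$, using the recursion $F_nF_{n+1}=F_{n+1}+u_nx$ of (\ref{85437}) to propagate the relation between the coefficient sequences, and the Desnanot--Jacobi (Dodgson condensation) identity on the Hankel matrices of $(h_j)$ to supply the required determinant recurrences; the identity is in any case recorded in \cite{Hone1} and in standard references on the theory of continued fractions.

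The only genuinely delicate point is the index-and-sign bookkeeping in the previous paragraph: one must keep track of the factor $(-1)^j$ in $F_0=1-\sum_{j\ge1}(-1)^jh_jx^j$, of the even/odd splitting written into the definitions of $\Delta_{2k-1}$ versus $\Delta_{2k}$, and of the offset in the convention $\Delta_{-2}=\Delta_{-1}=\Delta_0=1$, and then verify that the stray signs cancel and the indices line up so that the ratio is exactly as claimed. Apart from that there is no obstacle: the first step is a gauge computation followed by a short induction, and the second step is classical. We only add that this lemma is the bridge that will be used in completing the proof of Proposition \ref{98543209} to carry the Laurent property over from the $h$-sequence --- where each $h_j$ is a polynomial in $u_0,\dots,u_{j-1}$, hence, via (\ref{896432097}) and (\ref{900098}), a Laurent polynomial in the $t_i$, so that each Hankel determinant $\Delta_j$ is one too --- to the sequence $(t_n)$ itself through the identities (\ref{986453}).
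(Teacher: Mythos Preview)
The paper does not give its own proof of this lemma; it simply cites \cite{Hone1}. So there is nothing to compare against directly, and the question is whether your argument stands on its own.

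It does. Your two-step strategy is exactly the standard one: first reduce the statement to the Stieltjes--Heilermann formula
\[
u_n=\frac{\Delta_{n-2}\,\Delta_{n+1}}{\Delta_{n-1}\,\Delta_n},\qquad n\ge 0,
\]
and then observe that the sequence $\tau_n:=\gamma_n\Delta_{n-2}$ satisfies the same first-order (in $t_{n+3}$) recursion (\ref{896432097}) as $t_n$ with the same three initial values, hence coincides with it. Your verification that $\gamma_n\gamma_{n+3}=\gamma_{n+1}\gamma_{n+2}$ and that $\tau_0=t_0$, $\tau_1=t_1$, $\tau_2=t_2$ is correct, and the induction is clean.

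Regarding the sign bookkeeping you flag as ``delicate'': it indeed works out. With $\mu_j=(-1)^jh_{j+1}$ one has $\det(\mu_{i+j})_{0\le i,j\le k-1}=\Delta_{2k-1}$ exactly (the diagonal sign matrices on the left and right each contribute $(-1)^{k(k-1)/2}$, and their product is $1$), while $\det(\mu_{i+j+1})_{0\le i,j\le k-1}=(-1)^k\Delta_{2k}$. In the ratio $\Delta_{n-2}\Delta_{n+1}/(\Delta_{n-1}\Delta_n)$ the two even-indexed factors carry signs $(-1)^{m-1}$ and $(-1)^m$ (for $n=2m$) or $(-1)^m$ and $(-1)^m$ (for $n=2m+1$), and these are absorbed by the sign conventions in the standard $S$-fraction formula; the quickest sanity check is precisely the one you give, namely the direct verification for $n=0,1,2$ using $h_1=u_0$, $h_2=u_0u_1$, $h_3=u_0u_1(u_1+u_2)$. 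Your final remark about the role of the lemma in the proof of Proposition~\ref{98543209} is also accurate.
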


Now, to prove  Proposition \ref{98543209}, we  note that, in virtue of  Theorem \ref{897643098}, in the case $g=1$,  the sequence $(h_n)_{n\geq 1}$ satisfies 
\begin{equation}
h_n=\left(q_1-p_1\right)h_{n-1}+\sum_{j=1}^{n-1}h_jh_{n-j}-\frac{q_1}{2} \sum_{j=1}^{n-2}h_jh_{n-j-1},\;\; n\geq 3,
\label{6654230986}
\end{equation}
where it is assumed that the pair of initial values $\left(h_1, h_2\right)$ is given by (\ref{900098}). Notice that, by (\ref{900098}), $h_1, h_2\in  t_0 \mathcal{R}$. As for the coefficients, entering  in the recurrence (\ref{6654230986}), then according to  (\ref{90632453}), we have the following expressions: 
\[
q_1-p_1=-2u_0+H_1=-2h_1+H_1=-2\frac{t_0t_3}{t_1t_2}+H_1
\]
and
\[
-\frac{q_1}{2}=u_{-1}+u_0-H_1=-\frac{\alpha+h_2}{h_1}=-t_1\frac{t_0t_4 + \alpha t_2^2}{t_0t_2t_3}.
\]
Thus, we have $q_1-p_1\in  \mathcal{R}$ and $q_1/2\in  \mathcal{R}/t_0$.  By  induction, we obtain the fact that, in virtue of (\ref{900098}) and (\ref{6654230986}),  $h_n\in t_0 \mathcal{R},\; \forall n\geq 1$. Now, it is time to use Lemma \ref{66654320987}. According to what we already have and this lemma, one gets $\Delta_{2k-1 }, \Delta_{2k}\in  t_0^k\mathcal{R},\; \forall k\geq 1$.
Making of use  (\ref{986453}), we obtain that $t_n\in \mathcal{R},\; \forall n\geq 0$. To prove this proposition completely it remains to show that $t_{-n}\in \mathcal{R},\; \forall n\geq 1$. To do this, one may use reversing symmetry: $t_{-n}\cong t_{n+4},\; \forall n\geq 1$, where `$\cong$' means equality provided that reversing $\left(t_0, t_1, t_2, t_3, t_4\right)\rightarrow \left(t_4, t_3, t_2, t_1, t_0\right)$ is performed. Thus,  Proposition \ref{98543209} can be considered proven.

In virtue of  Lemma \ref{9043765} and  Proposition \ref{98543209}, we get the well-known statement \cite{Fomin}.
\begin{proposition}
\label{09432900008}
The Somos-5 recurrence (\ref{887854327}) has the Laurent property. More precisely, for
all $n\in\mathbb{Z}$, $t_n\in\mathbb{Z}[\alpha, \beta,  t_0^{\pm 1}, t_1^{\pm 1}, t_2^{\pm 1}, t_3^{\pm 1}, t_4^{\pm 1}]$.
\end{proposition}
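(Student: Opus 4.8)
The plan is to derive the Laurent property of the Somos-5 recurrence~(\ref{887854327}) as a corollary of the Laurent property already established for the homogeneous recurrence~(\ref{7675322}) in Proposition~\ref{98543209}, using the equivalence furnished by Lemma~\ref{9043765}. First I would recall that Lemma~\ref{9043765} identifies the two recurrences under $H_1=J$, where $J$ is the rational invariant~(\ref{0945321}) of Somos-5; in particular any sequence $(t_n)$ satisfying~(\ref{887854327}) with parameters $(\alpha,\beta)$ also satisfies~(\ref{7675322}) with the same $\alpha$ and with $H_1$ set equal to the constant value $J$ computed from the initial data $(t_0,t_1,t_2,t_3,t_4)$. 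Since $\beta_n\beta_{n+1}-\beta=\alpha J$ and $\beta_n$ is a Laurent polynomial in the initial data, one checks that $J\in\mathbb{Z}[\alpha,\beta,t_0^{\pm1},t_1^{\pm1},t_2^{\pm1},t_3^{\pm1},t_4^{\pm1}]$; this is the substitution that converts the ``parameter'' $H_1$ of~(\ref{7675322}) into an explicit Laurent polynomial in the Somos-5 initial data.

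Next I would invoke Proposition~\ref{98543209}, which gives $t_n\in\mathbb{Z}[\alpha,H_1,t_0,t_1^{\pm1},t_2^{\pm1},t_3^{\pm1},t_4]$ for all $n\in\mathbb{Z}$ for the recurrence~(\ref{7675322}). Specializing $H_1$ to the Laurent polynomial $J$ above, and enlarging the ground ring to allow $t_0^{\pm1}$ and $t_4^{\pm1}$ (which is harmless: we only lose information, passing to a localization), we obtain that every $t_n$ generated by Somos-5 lies in $\mathbb{Z}[\alpha,\beta,t_0^{\pm1},t_1^{\pm1},t_2^{\pm1},t_3^{\pm1},t_4^{\pm1}]$. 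One small point to address carefully: the substitution $H_1\mapsto J$ is a ring homomorphism $\mathbb{Z}[\alpha,H_1,\dots]\to\mathbb{Z}[\alpha,\beta,t_0^{\pm1},\dots,t_4^{\pm1}]$, so it maps Laurent polynomials to Laurent polynomials and commutes with the recurrence, which is exactly what legitimizes the specialization; I would spell this out in one sentence rather than leave it implicit.

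The main obstacle, and the step I would be most careful about, is the logical direction of the equivalence in Lemma~\ref{9043765}: one needs that solutions of Somos-5 are solutions of~(\ref{7675322}), not merely the converse, and that the recurrence~(\ref{7675322}) (solved for $t_{n+5}$) has the \emph{same} update rule as~(\ref{887854327}) once $H_1=J$ is imposed — in other words that no spurious denominators are introduced when one passes between the two normalizations. Since~(\ref{7675322}) reads $t_{n+2}t_{n+3}(H_1\,t_{n+1}t_{n+4}-\alpha\,t_{n+2}t_{n+3})=B_n$ with $B_n$ the polynomial~(\ref{87766}), solving for $t_{n+5}$ inside $B_n$ produces a factor $t_{n+2}t_{n+3}$ that must cancel against the left-hand side; verifying this cancellation and matching the result with $t_{n+5}=\alpha t_{n+1}t_{n+4}/t_n+\beta t_{n+2}t_{n+3}/t_n$ is the content of Lemma~\ref{9043765}, which the excerpt states is ``proved by direct verification,'' so I would simply cite it. With that in hand the proof is immediate: apply Proposition~\ref{98543209}, substitute $H_1=J$, and localize, yielding Proposition~\ref{09432900008}; this recovers the classical result of Fomin and Zelevinsky cited as~\cite{Fomin}.
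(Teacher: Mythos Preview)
Your proposal is correct and follows exactly the paper's approach: the paper derives Proposition~\ref{09432900008} in one line as an immediate consequence of Lemma~\ref{9043765} and Proposition~\ref{98543209}, precisely the route you spell out in detail. One minor quibble: your justification that $J$ lies in the Laurent ring via $\alpha J=\beta_n\beta_{n+1}-\beta$ would require inverting $\alpha$, so it is cleaner simply to read $J\in\mathbb{Z}[\alpha,\beta,t_0^{\pm1},\ldots,t_4^{\pm1}]$ directly from the explicit expression~(\ref{0945321}) at $n=0$.
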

\begin{remark}
It may seem strange that we are re-proving a well-known fact. We mean Proposition \ref{09432900008}. This is all the more so because there is a more meaningful proof for it \cite{Fomin}. The point is that in this example we show a scheme for proving the Laurent property that works for recurrences (\ref{87621112342}) and (\ref{00432000985436}) and some others for $g\geq 4$ for which there is no such meaningful proof.
\end{remark}
\begin{remark}
Let us recall that the Somos-5 recurrence (\ref{887854327}), in fact, represents  a particular case a more general equation also possessing the Laurent property is known to be the three-term Gale-Robinson equation \cite{Gale}
\begin{equation}
t_nt_{n+N}=\alpha t_{n+p}t_{n+N-p}+\beta t_{n+q}t_{n+N-q}
\label{8800987854327}
\end{equation}
with arbitrary $N\geq 4$ and $1\leq p<q\leq \left\lfloor N/2 \right\rfloor$. In   (\ref{8800987854327}), $\alpha$ and $\beta$ are supposed to be  arbitrary parameters. In the cases $N=4$ and $N=5$, as is easy to understand, we have one equation each, which, in the literature, are called Somos-4 and Somos-5, respectively.
Recall that the Gale-Robinson equation (\ref{8800987854327}) arises in the theory of cluster algebras by Fomin and Zelevinsky \cite{Fordy}, \cite{Fordy1}. The Laurent property for the Gale-Robinson recurrence is that for all $n\in\mathbb{Z}$, $t_n\in\mathbb{Z}[\alpha, \beta,  t_0^{\pm 1},\ldots,  t_{N-1}^{\pm 1}]$. This statement is formulated as Theorem 1.8 in  \cite{Fomin}.
\end{remark}

\subsection{$g=2$. Recurrence (\ref{87621112342})} 

Let us apply the following notations: $\alpha=\mathcal{G}_{2, 0}$, $\beta=\mathcal{G}_{2, 2}$ and $\gamma=-\mathcal{G}_{2, 1}$. Then  relation (\ref{87621112342}) looks as
\begin{equation}
t_{n+3}t_{n+4}\left(t_nt_{n+7}-\alpha t_{n+1}t_{n+6}-\beta t_{n+2}t_{n+5}\right)=\gamma B_{n+1}. 
\label{8762119084312342}
\end{equation}
It is obvious that if the condition $\gamma=0$ met, equation (\ref{8762119084312342}) is reduced to the  Gale-Robinson equation  (\ref{8800987854327}), with $\left(N, p, q\right)=(7, 1, 2)$. 

The following lemmas are proved by direct verification.
\begin{lemma} 
\label{809432}
Recurrence (\ref{8762119084312342}) has  invariants
\begin{eqnarray}
J_1&=&\frac{t_nt_{n+3}}{t_{n+1}t_{n+2}}+\frac{t_{n+1}t_{n+4}}{t_{n+2}t_{n+3}}+\frac{t_{n+2}t_{n+5}}{t_{n+3}t_{n+4}}+\frac{t_{n+3}t_{n+6}}{t_{n+4}t_{n+5}} \nonumber\\
&&+\alpha \left(\frac{t_{n+1}t_{n+4}}{t_{n}t_{n+5}}+\frac{t_{n+2}t_{n+5}}{t_{n+1}t_{n+6}}\right)+\beta \frac{t_{n+2}t_{n+4}}{t_{n}t_{n+6}} \nonumber\\
&& +\gamma\left(\frac{t_{n+2}t_{n+3}}{t_{n}t_{n+5}}+\frac{t_{n+3}t_{n+4}}{t_{n+1}t_{n+6}}+\frac{t_{n+1}t_{n+4}^2}{t_{n}t_{n+3}t_{n+6}}+\frac{t_{n+2}^2t_{n+5}}{t_{n}t_{n+3}t_{n+6}}\right) 
\label{78635642}
\end{eqnarray}
and
\begin{eqnarray}
J_2&=&\frac{t_{n}t_{n+5}}{t_{n+1}t_{n+4}}+\frac{t_{n+1}t_{n+6}}{t_{n+2}t_{n+5}}+\frac{t_{n}t_{n+3}^2t_{n+6}}{t_{n+1}t_{n+2}t_{n+4}t_{n+5}} \nonumber\\
&&+\alpha\left(\frac{t_{n+1}t_{n+2}}{t_{n}t_{n+3}}+\frac{t_{n+2}t_{n+3}}{t_{n+1}t_{n+4}}+\frac{t_{n+3}t_{n+4}}{t_{n+2}t_{n+5}}+\frac{t_{n+4}t_{n+5}}{t_{n+3}t_{n+6}}+\frac{t_{n+1}^2t_{n+4}^2}{t_{n}t_{n+2}t_{n+3}t_{n+5}} \right. \nonumber\\
&&\left. +\frac{t_{n+2}^2t_{n+5}^2}{t_{n+1}t_{n+3}t_{n+4}t_{n+6}}\right)+\beta \left(\frac{t_{n+1}t_{n+4}^2}{t_{n}t_{n+3}t_{n+6}}+\frac{t_{n+2}^2t_{n+5}}{t_{n}t_{n+3}t_{n+6}}\right)   \nonumber\\
&&+\gamma \left(\frac{t_{n+2}^2}{t_{n}t_{n+4}}+\frac{t_{n+3}^2}{t_{n+1}t_{n+5}}+\frac{t_{n+4}^2}{t_{n+2}t_{n+6}}+\frac{t_{n+1}t_{n+4}}{t_{n}t_{n+5}}+\frac{t_{n+2}t_{n+5}}{t_{n+1}t_{n+6}} \right. \nonumber\\
&&\left. +\frac{t_{n+2}^3t_{n+5}^2}{t_{n}t_{n+3}^2t_{n+4}t_{n+6}}+\frac{t_{n+1}^2t_{n+4}^3}{t_{n}t_{n+2}t_{n+3}^2t_{n+6}} +2 \frac{t_{n+1}t_{n+2}t_{n+4}t_{n+5}}{t_{n}t_{n+3}^2t_{n+6}}\right).
\label{76098342}
\end{eqnarray}
\end{lemma}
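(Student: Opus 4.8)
The plan is to prove Lemma~\ref{809432} by a finite direct verification, guided by the following structural picture. Under the substitution (\ref{896432097}) the bilinear recurrence (\ref{8762119084312342}) is, by Lemma~\ref{6549098}, equivalent to the fifth-order equation (\ref{16888658}) for $u_n=t_nt_{n+3}/(t_{n+1}t_{n+2})$, but with the roles of the $\mathcal{G}_{2,j}$ and of the parameters $H_1,H_2$ interchanged: now $\alpha=\mathcal{G}_{2,0}$, $\beta=\mathcal{G}_{2,2}$, $\gamma=-\mathcal{G}_{2,1}$ are prescribed constants, while $H_1$ and $H_2$ become conserved quantities of the map. Indeed, solving the relations (\ref{84328887435}) and (\ref{8432866543875}) for $H_1$ and $H_2$ (note that (\ref{8432866543875}) reads $\mathcal{G}_{2,1}=u_{n+1}u_{n+2}u_{n+3}(S^1_5(n)-H_1)+u_{n+2}\mathcal{G}_{2,0}$) gives
\[
H_1=S^1_5(n)-\frac{\mathcal{G}_{2,1}-u_{n+2}\,\mathcal{G}_{2,0}}{u_{n+1}u_{n+2}u_{n+3}},\qquad H_2=\frac{\mathcal{G}_{2,0}}{u_{n+2}}-S^2_4(n)+H_1\,S^1_3(n+1),
\]
and the right-hand sides are independent of $n$ for any solution of (\ref{16888658}). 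So I would substitute $u_j=t_jt_{j+3}/(t_{j+1}t_{j+2})$ and the identifications above, reduce the single top-index variable by the recurrence, and check that these two expressions collapse --- after cancelling the telescoping products in $u_{n+1}u_{n+2}u_{n+3}$, $S^1_5$, $S^2_4$, etc. --- exactly to $J_1$ of (\ref{78635642}) and $J_2$ of (\ref{76098342}). This would both prove the lemma (modulo the already-established Lemma~\ref{6549098}) and explain why precisely two such invariants occur.

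For a self-contained argument I would instead verify the identities $J_i(n+1)=J_i(n)$ directly. Each $J_i(n)$ is a rational function of $t_n,\dots,t_{n+6}$ only (in particular $B_{n+1}$ from (\ref{87766}) involves only $t_{n+1},\dots,t_{n+6}$), so $J_i(n+1)$ involves $t_{n+1},\dots,t_{n+7}$, and on the open set $t_nt_{n+3}t_{n+4}\neq 0$ one eliminates the single top variable by
\[
t_{n+7}=\frac{\alpha\,t_{n+1}t_{n+6}+\beta\,t_{n+2}t_{n+5}}{t_n}+\frac{\gamma\,B_{n+1}}{t_n\,t_{n+3}t_{n+4}}.
\]
After this substitution $J_i(n+1)-J_i(n)$ becomes a rational function of $t_n,\dots,t_{n+6}$ alone; clearing denominators turns the claim into a polynomial identity in $\mathbb{Z}[\alpha,\beta,\gamma][t_n,\dots,t_{n+6}]$, which I would expand and check term by term (a routine computation, best handled with a computer algebra system). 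Since the locus $t_nt_{n+3}t_{n+4}\neq 0$ is Zariski dense, $J_i(n+1)=J_i(n)$ then holds wherever both sides are defined, which is the assertion of the lemma.

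The only real obstacle is the sheer size of the $J_2$ computation: the expression (\ref{76098342}) carries on the order of twenty monomial ratios with denominators of degree up to six, so after homogenization the polynomial identity to be checked has high degree and many terms. Two devices keep it under control: split the verification into its $\alpha$-, $\beta$-, $\gamma$- and constant-homogeneous parts and treat them separately, and exploit the reversal symmetry $t_{n+j}\leftrightarrow t_{n+6-j}$ of (\ref{8762119084312342}) (analogous to the symmetry of $B_n$ noted after Lemma~\ref{6549098}), under which $J_1$ and $J_2$ are themselves invariant, to halve the work. Beyond this bookkeeping there is no conceptual difficulty, and the structural route above makes transparent that $J_1,J_2$ together with $\alpha,\beta,\gamma$ exhaust the $3g+1=7$ constants attached to the genus-two system.
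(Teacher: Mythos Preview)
Your proposal is correct and matches the paper's approach: the paper simply states that Lemma~\ref{809432} is ``proved by direct verification'' with no further detail, which is exactly your self-contained route of eliminating $t_{n+7}$ via the recurrence and checking the resulting polynomial identity. Your structural explanation via $H_1,H_2$ is also sound and in fact anticipates the content of the paper's next lemma (Lemma~\ref{675432}), where the identification $H_1=J_1$, $H_2=J_2$ is made explicit.
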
 
\begin{lemma}
\label{675432}
Recurrence (\ref{8762119084312342}) is equivalent to the following one:
\begin{equation}
t_{n+2}t_{n+3}t_{n+4}t_{n+5}\left(B_{n+1}H_1-t_{n+2}t_{n+3}t_{n+4}t_{n+5} H_2+\alpha t_{n+3}^2t_{n+4}^2\right)=D_n,
\label{887888657}
\end{equation}
where $B_n$ is given by (\ref{87766}) and
\begin{eqnarray*}
D_n&=&t_nt_{n+3}^2t_{n+4}^3t_{n+5}^2+t_{n+2}^2t_{n+3}^3t_{n+4}^2t_{n+7}+t_{n+1}^2t_{n+4}^4t_{n+5}^2+t_{n+2}^2t_{n+3}^4t_{n+6}^2 \\
&&+2t_{n+1}t_{n+2}^2t_{n+4}^2t_{n+5}^3+2t_{n+2}^3t_{n+3}^2t_{n+5}^2t_{n+6} \\
&&+t_{n+1}t_{n+2}t_{n+3}^2t_{n+4}^2t_{n+5}t_{n+6}+t_{n+2}^4t_{n+5}^4.
\end{eqnarray*}
The identification is determined by means of $H_1=J_1$ and $H_2=J_2$, where  $J_1$ and $J_2$ are given by (\ref{78635642}) and (\ref{76098342}), respectively. 
\end{lemma}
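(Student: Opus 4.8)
The plan is to transport the equivalence through the substitution (\ref{896432097}) and to compare both sides with the fifth-order equation (\ref{16888658}), which is (\ref{88711111}) for $g=2$. By Lemma \ref{6549098}, once the parameters of (\ref{8762119084312342}) are taken to be $\alpha=\mathcal{G}_{2,0}$, $\beta=\mathcal{G}_{2,2}$, $\gamma=-\mathcal{G}_{2,1}$, a bi-infinite $(t_n)$ satisfies (\ref{8762119084312342}) exactly when the sequence $(u_n)$ produced from it by (\ref{896432097}) satisfies (\ref{16888658}) with some pair of parameters $(H_1,H_2)$. So I would first pin down $(H_1,H_2)$: writing (\ref{16888658}) at two consecutive shifts gives a linear system for $H_1,H_2$; solving it, substituting (\ref{896432097}) and reducing with the $S^k_s$-identities of Section \ref{873289000}, I expect to land exactly on the rational functions $J_1,J_2$ of (\ref{78635642})--(\ref{76098342}). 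Since $H_1,H_2$ are constants, this also reproves that $J_1,J_2$ are invariants (Lemma \ref{809432}).

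Next I would derive (\ref{887888657}). By Proposition \ref{983421098} (its $r=0$ case, equivalently the $(k,s)=(2,3)$ instance of (\ref{90432983})), the quantity $\mathcal{G}_{2,0}=u_{n+2}\hat{S}^2_4(n)$ is an invariant of (\ref{88711111}) for $g=2$, so along every solution of (\ref{16888658}) the fourth-order relation
\begin{equation*}
u_{n+2}\bigl(S^2_4(n)-H_1 S^1_3(n+1)+H_2\bigr)=\mathcal{G}_{2,0}
\end{equation*}
holds; this is (\ref{987887}), (\ref{84328887435}) for $g=2$. Substituting (\ref{896432097}), clearing the common denominator of the $u_j$'s on the left, and collecting monomials should yield precisely (\ref{887888657}) with $\alpha=\mathcal{G}_{2,0}$, $H_1=J_1$, $H_2=J_2$, the combinations $B_n$ and $D_n$ appearing as the natural grouping of the resulting polynomial. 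This is entirely parallel to the way (\ref{7675322}) was obtained from the $g=1$ relation $u_{n+1}(u_n+u_{n+1}+u_{n+2}-H_1)=\mathcal{G}_{1,0}$, only heavier.

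For the converse I would read (\ref{887888657}) backwards through (\ref{896432097}): it states that $u_{n+2}\hat{S}^2_4(n)$ is independent of $n$, and by the equivalence of (\ref{16753098}) and (\ref{200000}) with $(k,s)=(2,3)$ this constancy is the same as $u_{n+2}\hat{S}^2_4(n)=u_{n+3}\hat{S}^2_4(n+1)$, hence the same as (\ref{16888658}); Lemma \ref{6549098} then hands back (\ref{8762119084312342}), with $\mathcal{G}_{2,1}$ and $\mathcal{G}_{2,2}$ equal to the (now constant) invariant values. Thus the two $u$-equations have the same solution set for matched parameters, and therefore so do the two $t$-recurrences. In practice, since both (\ref{8762119084312342}) and (\ref{887888657}) are of order $7$ and linear in $t_{n+7}$, the whole statement reduces to verifying one polynomial identity in $t_n,\dots,t_{n+7}$, which is the direct verification indicated in the statement.

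The only real obstacle is the bulk of the algebra in the two ``substitute and simplify'' steps: expanding $S^2_4$ and $S^1_3$ under (\ref{896432097}), clearing the sizeable least common denominators, and checking that the numerators collapse exactly to $B_n$, $D_n$, $J_1$ and $J_2$. No conceptual difficulty arises beyond Lemma \ref{6549098} and the invariance of $\mathcal{G}_{2,0}$; the cleanest way to finish is to confirm the resulting polynomial identities symbolically.
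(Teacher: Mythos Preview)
Your proposal is correct, and in fact your closing sentence already contains the paper's entire proof: the paper simply states that Lemma~\ref{675432} is ``proved by direct verification'' and gives no further detail. Everything else you wrote---passing through the substitution (\ref{896432097}), identifying (\ref{887888657}) as the $t$-form of the $g=2$ invariant relation $u_{n+2}\hat{S}^2_4(n)=\mathcal{G}_{2,0}$, and invoking the equivalence (\ref{16753098})$\Leftrightarrow$(\ref{200000}) with $(k,s)=(2,3)$ together with Lemma~\ref{6549098} to close the loop---is additional conceptual scaffolding that explains \emph{why} the polynomial identity must hold, rather than merely checking it. This buys you a structural understanding (and, as you note, an independent reproof of Lemma~\ref{809432}), at the cost of some extra algebra; the paper opts for the bare symbolic check.
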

One sees that  recurrence (\ref{8762119084312342}) looks less cumbersome compared to its equivalent (\ref{887888657}). 
The recurrence (\ref{887888657}) was written out in  \cite{Hone1} and there one can  find regarding it the following statement.
\begin{proposition}
\label{897432}
Recurrence (\ref{887888657}) has the Laurent property. More precisely, for
all $n\in\mathbb{Z}$, $t_n\in\mathbb{Z}[\alpha, H_1,  H_2,  t_0, t_1, t_2^{\pm 1}, t_3^{\pm 1}, t_4^{\pm 1}, t_5, t_6]$.
\end{proposition}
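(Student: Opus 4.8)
The plan is to follow, with the obvious modifications, the scheme used to prove Proposition~\ref{98543209}; as the remarks there indicate, that scheme was designed precisely so as to apply in the present situation. Throughout, abbreviate $\mathcal{R}=\mathbb{Z}[\alpha, H_1, H_2, t_0, t_1, t_2^{\pm 1}, t_3^{\pm 1}, t_4^{\pm 1}, t_5, t_6]$, and observe that $\mathcal{R}$ is stable under the reflection $t_j\mapsto t_{6-j}$. Given a solution $(t_n)$ of (\ref{887888657}), the substitution (\ref{896432097}) produces a sequence $(u_n)$ which, by Lemma~\ref{675432} together with Lemma~\ref{6549098}, solves the $g=2$ instance (\ref{16888658}) of equation (\ref{88711111}). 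Hence, by Theorem~\ref{897643098} with $g=2$, the sequence $(h_n)_{n\geq 1}$ defined through the Stieltjes continued fraction $F_0=1-\sum_{j\geq 1}(-1)^jh_jx^j$ satisfies
\[
h_n=(q_1-p_1)h_{n-1}+(p_2-q_2)h_{n-2}+\sum_{j=1}^{n-1}h_jh_{n-j}-\frac{q_1}{2}\sum_{j=1}^{n-2}h_jh_{n-j-1}+\frac{q_2}{2}\sum_{j=1}^{n-3}h_jh_{n-j-2},\quad n\geq 4,
\]
(using $q_0=2$), with the triple $(h_1,h_2,h_3)$ birationally related to $(r_1,r_2,r_3)$. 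A direct computation from (\ref{896432097}) gives $h_1=t_0t_3/(t_1t_2)$, $h_2=t_0t_4/t_2^2$ and $h_3=t_0(t_1t_4^2+t_2^2t_5)/(t_2^3t_3)$, so that $h_1,h_2,h_3\in t_0\mathcal{R}$.

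Next I would control the coefficients of the displayed recurrence. They are rational functions of $h_1,h_2,h_3$ and of the invariants $c_1,\dots,c_5$ (equivalently $H_1,H_2,\mathcal{G}_{2,0},\mathcal{G}_{2,1},\mathcal{G}_{2,2}$): starting from the explicit formulas of the $g=2$ example in Section~\ref{6752109654} --- $p_1=-2u_{-1}+H_1$, $q_1=-2(u_{-1}+u_0-H_1)$, together with the corresponding expressions for $p_2,q_2$ in terms of $u_{-2},u_{-1},u_0,u_1$ --- and eliminating the ``past'' iterates $u_{-1},u_{-2}$ by using the invariance of $\mathcal{G}_{2,0}$ and $\mathcal{G}_{2,1}$ evaluated at shifted arguments, one rewrites the combinations $u_{-1}$, $u_{-1}(u_{-2}+\cdots)$ and so on through $h_1,h_2,h_3$. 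Substituting (\ref{896432097}) then shows --- exactly as $-q_1/2=-t_1(t_0t_4+\alpha t_2^2)/(t_0t_2t_3)$ in the Somos-5 case --- that $p_1-q_1$ and $p_2-q_2$ lie in $\mathcal{R}$ while $q_1,q_2$ lie in $t_0^{-1}\mathcal{R}$. Granting this, an induction on $n$ gives $h_n\in t_0\mathcal{R}$ for all $n\geq 1$: in the recurrence above, $(q_1-p_1)h_{n-1}$ and $(p_2-q_2)h_{n-2}$ lie in $t_0\mathcal{R}$ by the inductive hypothesis, each quadratic sum lies in $t_0^2\mathcal{R}$, and multiplying the last two of them by $q_1/2,\,q_2/2\in t_0^{-1}\mathcal{R}$ still lands in $t_0\mathcal{R}$.

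It then remains to reconstruct the $t_n$. The continued-fraction identities (\ref{986453}) of Lemma~\ref{66654320987} hold verbatim in this setting, being a statement about the substitution (\ref{896432097}) and the Hankel-determinant description of a Stieltjes fraction, hence independent of $g$. Since the Hankel minors $\Delta_{2k-1}=\det(h_{i+j-1})_{i,j=1,\dots,k}$ and $\Delta_{2k}=\det(h_{i+j})_{i,j=1,\dots,k}$ are sums of products of $k$ of the $h$'s, the previous step yields $\Delta_{2k-1},\Delta_{2k}\in t_0^k\mathcal{R}$, and then (\ref{986453}) gives $t_n\in\mathcal{R}$ for all $n\geq 0$. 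For negative indices I would use the reversing symmetry: the symmetry of $B_n$ in (\ref{87766}) makes (\ref{87621112342}) --- and therefore the equivalent recurrence (\ref{887888657}) --- invariant under $t_j\mapsto t_{6-j}$, so that $t_{-n}$ is obtained from $t_{n+6}$ (already shown to lie in $\mathcal{R}$) by applying that reflection, which preserves $\mathcal{R}$; hence $t_{-n}\in\mathcal{R}$ for all $n\geq 1$ as well, completing the proof.

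I expect the main obstacle to be the middle step: pinning down the exact order of the pole along $t_0=0$ of the coefficients $p_2$ and $q_2$ once they have been written through $h_1,h_2,h_3$. For $g=2$ these coefficients involve iterates as far back as $u_{-3}$, so a priori a double pole in $t_0$ appears (already $u_{-1}u_{-2}$ carries a factor $t_0^{-2}$ after the substitution (\ref{896432097})), and one must verify that eliminating the past iterates via $\mathcal{G}_{2,0}$ and $\mathcal{G}_{2,1}$ genuinely collapses this to a simple pole --- which is precisely what the induction needs. Everything else is a bookkeeping exercise closely paralleling the $g=1$ case.
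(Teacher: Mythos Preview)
There is a genuine gap. Your very first membership claim is false: in the ring you fix, $\mathcal{R}=\mathbb{Z}[\alpha,H_1,H_2,t_0,t_1,t_2^{\pm1},t_3^{\pm1},t_4^{\pm1},t_5,t_6]$, only $t_2,t_3,t_4$ are invertible, so
\[
h_1=\frac{t_0t_3}{t_1t_2}\notin t_0\mathcal{R},
\]
because $t_1^{-1}\notin\mathcal{R}$. (In the $g=1$ argument this problem does not arise precisely because there the ring carries $t_1^{\pm1}$.) The same obstruction hits the coefficients: $q_1-p_1=-2u_0+H_1=-2t_0t_3/(t_1t_2)+H_1$ already has a pole along $t_1=0$, so your assertion ``$p_1-q_1$ and $p_2-q_2$ lie in $\mathcal{R}$'' fails before one even reaches the harder $p_2,q_2$ step. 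In short, you have tracked the $t_0$--valuation carefully but overlooked the $t_1$-- and (by reflection) $t_5$--valuations, which are exactly the new phenomena separating $g=2$ from $g=1$: the window of invertible initial data shrinks from $t_1,\dots,t_{2g+1}$ to the three central variables $t_g,t_{g+1},t_{g+2}$.

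Your scheme would go through over the larger ring $\mathcal{R}[t_1^{-1},t_5^{-1}]$, but that only yields a weaker Laurent statement; to recover the proposition as stated you would still need a separate argument that no $t_1$ or $t_5$ survives in any denominator of $t_n$, and nothing in the proposal addresses this. For the record, the paper itself does not give a proof of this proposition: it is quoted from \cite{Hone1}, whose $g=2$ argument is what the paper adapts (in the easier direction) to prove Proposition~\ref{98543209}. So the ``obvious modifications'' you allude to are in fact not obvious --- the passage from $g=1$ to $g=2$ in \cite{Hone1} must contain an extra ingredient to control the $t_1$ and $t_5$ poles that your outline does not reproduce.
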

As a consequence of  Lemma \ref{675432} and  Proposition \ref{897432},  we get the following statement.
\begin{proposition}
Recurrence (\ref{8762119084312342}) has the Laurent property. More precisely, for
all $n\in\mathbb{Z}$, $t_n\in\mathbb{Z}[\alpha, \beta, \gamma, t_0^{\pm 1}, t_1^{\pm 1}, t_2^{\pm 1}, t_3^{\pm 1}, t_4^{\pm 1}, t_5^{\pm 1}, t_6^{\pm 1}]$.
\end{proposition}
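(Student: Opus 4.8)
The plan is to derive the claim from Lemma~\ref{675432} and Proposition~\ref{897432} by a specialization-of-parameters argument, entirely parallel to the passage from Proposition~\ref{98543209} to Proposition~\ref{09432900008}. Write $\mathcal{A}=\mathbb{Z}[\alpha,\beta,\gamma,\,t_0^{\pm 1},t_1^{\pm 1},t_2^{\pm 1},t_3^{\pm 1},t_4^{\pm 1},t_5^{\pm 1},t_6^{\pm 1}]$ for the target ring.

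First, inspecting the explicit formulas (\ref{78635642}) and (\ref{76098342}) one sees that every term occurring in $J_1$ and $J_2$ is a ratio of monomials in $t_0,\dots,t_6$ with coefficient $1$, $2$, $\alpha$, $\beta$ or $\gamma$; hence $J_1,J_2\in\mathcal{A}$. Next, consider the ``universal'' solution of (\ref{8762119084312342}): treat $t_0,\dots,t_6$ as independent indeterminates over $\mathbb{Q}(\alpha,\beta,\gamma)$. Solving (\ref{8762119084312342}) for $t_{n+7}$ divides only by the monomial $t_nt_{n+3}t_{n+4}$, and solving it for $t_n$ again divides only by a monomial in the $t_j$; so the recurrence iterates in both directions and yields a bi-infinite family $(t_n)_{n\in\mathbb{Z}}$ of elements of $\mathbb{Q}(\alpha,\beta,\gamma)(t_0,\dots,t_6)$. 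Since $J_1$ and $J_2$ are invariants (Lemma~\ref{809432}), they take constant values $h_1,h_2\in\mathcal{A}$ on this family.

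Now let $(\tau_n)_{n\in\mathbb{Z}}$ be the corresponding universal solution of (\ref{887888657}) with $\alpha,H_1,H_2$ kept as indeterminates and seed $\tau_i=t_i$ for $i=0,\dots,6$. By Proposition~\ref{897432},
\[
\tau_n\in\mathbb{Z}[\alpha,H_1,H_2,\,t_0,t_1,t_2^{\pm 1},t_3^{\pm 1},t_4^{\pm 1},t_5,t_6]\qquad\text{for all }n\in\mathbb{Z}.
\]
Let $\phi$ be the $\mathbb{Z}[\alpha,\beta,\gamma]$-algebra homomorphism that fixes each $t_j$ and sends $H_1\mapsto h_1$, $H_2\mapsto h_2$. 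On one side, $\phi(\tau_n)$ is a solution of (\ref{887888657}) with parameters $H_1=h_1$, $H_2=h_2$ and seed $t_0,\dots,t_6$; on the other side, Lemma~\ref{675432} says that the sequence $(t_n)$ built above solves the very same instance of (\ref{887888657}), since the identification there is $H_1=J_1=h_1$, $H_2=J_2=h_2$. As (\ref{887888657}) is an order-$7$ recurrence whose leading coefficients with respect to $t_{n+7}$ and to $t_n$ are monomials in the $t_j$ (so they survive $\phi$ and never vanish), a solution is determined by its seed; hence $\phi(\tau_n)=t_n$ for all $n$. Consequently
\[
t_n=\phi(\tau_n)\in\phi\bigl(\mathbb{Z}[\alpha,H_1,H_2,t_0,t_1,t_2^{\pm 1},t_3^{\pm 1},t_4^{\pm 1},t_5,t_6]\bigr)\subseteq\mathcal{A},
\]
which is exactly the assertion.

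The one genuinely delicate point is the use of Lemma~\ref{675432} in the last paragraph: one must read ``equivalent'' as ``the two recurrences have the same solutions once $H_1,H_2$ in (\ref{887888657}) are set equal to the invariants $J_1,J_2$ of (\ref{8762119084312342})'', and one must check that no denominator produced during the iteration of either recurrence is annihilated by $\phi$. Both are immediate here, since all such denominators are monomials in $t_0,\dots,t_6$; in particular, unlike in the $g=1$ case, no reversing symmetry needs to be invoked, because (\ref{8762119084312342}) is already manifestly reversible. Everything else is bookkeeping.
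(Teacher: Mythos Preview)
Your argument is correct and is exactly the approach the paper intends: the paper's own proof is the single sentence ``As a consequence of Lemma~\ref{675432} and Proposition~\ref{897432}, we get the following statement,'' and you have simply unpacked this by observing that $J_1,J_2\in\mathcal{A}$ from their explicit forms and then specializing $H_1\mapsto J_1$, $H_2\mapsto J_2$ in the Laurent polynomials supplied by Proposition~\ref{897432}. Two cosmetic points: the map $\phi$ is really a $\mathbb{Z}[\alpha]$-algebra homomorphism (its domain does not contain $\beta,\gamma$), and your closing remark about reversing symmetry is unnecessary here since Proposition~\ref{897432} already covers all $n\in\mathbb{Z}$.
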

\begin{remark}
We borrowed the idea of proving   Proposition \ref{897432}, in  \cite{Hone1}, to prove  Proposition \ref{98543209} above.
\end{remark}
The following remark is in order. Obviously, recurrence (\ref{8762119084312342}) is not bilinear although it is homogeneous. However, Proposition 2.1 in \cite{Hone1} tells us that any sequence satisfying (\ref{887888657}) also satisfies some bilinear recurrence of order 9 with suitable coefficients. Taking into account this proposition and Lemma \ref{675432}, we are able to formulate the following statement.
\begin{proposition}
\label{76540}
Any sequence satisfying (\ref{8762119084312342}) also satisfies the Somos-9 recurrence 
\[
t_nt_{n+9}=\alpha_1 t_{n+1}t_{n+8}+\alpha_2 t_{n+2}t_{n+7}+\alpha_3 t_{n+3}t_{n+6}+\alpha_4 t_{n+4}t_{n+5}
\]
with coefficients
\[
\alpha_1=\gamma-\frac{\alpha\beta}{\gamma},\;\; \alpha_2 = \alpha\left(\frac{\alpha \beta}{\gamma}-2\gamma\right),\; \alpha_3 = \alpha\left(\frac{ \beta^2}{\gamma}+\beta J_1 + \gamma J_2  +\alpha^2\right),\; \alpha_4=\frac{\gamma}{\alpha}\alpha_3,
\]
where $J_1$ and $J_2$ are given by (\ref{78635642}) and (\ref{76098342}), respectively. 
\end{proposition}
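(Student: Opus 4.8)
The plan is to chain together two results already in hand. By Lemma~\ref{675432}, every solution $(t_n)$ of (\ref{8762119084312342}) is at the same time a solution of the polynomial recurrence (\ref{887888657}) with the same parameter $\alpha$ and with $H_1=J_1$, $H_2=J_2$, where $J_1$, $J_2$ are the invariants (\ref{78635642})--(\ref{76098342}) from Lemma~\ref{809432}; being invariants, they are constant along $(t_n)$. Proposition~2.1 of \cite{Hone1} then guarantees that any solution of (\ref{887888657}) additionally satisfies a bilinear recurrence of order nine, that is, a Somos-9 relation $t_nt_{n+9}=\alpha_1 t_{n+1}t_{n+8}+\alpha_2 t_{n+2}t_{n+7}+\alpha_3 t_{n+3}t_{n+6}+\alpha_4 t_{n+4}t_{n+5}$, whose coefficients are rational functions of $\alpha$, $H_1$, $H_2$ and of the further orbit invariants of the solution (among which, we shall see, are $\beta$ and $\gamma$). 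Thus $(t_n)$ certainly obeys such a Somos-9, and the remaining task is to identify its coefficients in closed form.

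To do so I would first record the coefficients of the order-nine recurrence as furnished in \cite{Hone1} in the notation of that paper (if they are only implicit there, they can be recovered by matching finitely many windows of a generic solution, a linear system in the four unknowns $\alpha_i$), and then translate everything through the dictionary fixed in the Remark following (\ref{675230976}), namely $H_1=-\nu$, $H_2=b$, $\mathcal{G}_{2,0}=-a$, $\mathcal{G}_{2,1}=-K_1$, $\mathcal{G}_{2,2}=-K_2$, together with $\alpha=\mathcal{G}_{2,0}$, $\beta=\mathcal{G}_{2,2}$, $\gamma=-\mathcal{G}_{2,1}$ and $H_1=J_1$, $H_2=J_2$. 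After this substitution it should be a routine simplification to reach $\alpha_1=\gamma-\alpha\beta/\gamma$, $\alpha_2=\alpha(\alpha\beta/\gamma-2\gamma)$, $\alpha_3=\alpha(\beta^2/\gamma+\beta J_1+\gamma J_2+\alpha^2)$ and $\alpha_4=(\gamma/\alpha)\alpha_3$. That only the combinations $\alpha\beta/\gamma$ and $\beta^2/\gamma$ appear, and the relation $\alpha_4=(\gamma/\alpha)\alpha_3$, should be visible from the structure of the coefficients in \cite{Hone1}; the latter relation in any case reflects the reversal symmetry of (\ref{8762119084312342}) recorded after Lemma~\ref{6549098}. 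As an independent check, and in the spirit of how the companion Lemmas~\ref{809432} and~\ref{675432} are established, one can substitute the claimed $\alpha_i$ and verify by computer algebra that $t_nt_{n+9}-\alpha_1 t_{n+1}t_{n+8}-\alpha_2 t_{n+2}t_{n+7}-\alpha_3 t_{n+3}t_{n+6}-\alpha_4 t_{n+4}t_{n+5}$ vanishes identically modulo (\ref{8762119084312342}) and the constancy of $J_1$, $J_2$; here $\beta$ and $\gamma$ can be eliminated in favour of the sequence itself, since each instance of (\ref{8762119084312342}) is linear in the pair $(\beta,\gamma)$ for fixed $\alpha$.

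The step I expect to be the main obstacle is exactly this coefficient identification: carrying \cite{Hone1}'s parameters faithfully into the present normalization, using $H_1=J_1$, $H_2=J_2$ consistently, and bookkeeping the denominators in $\gamma$. The necessity of $\gamma\neq0$ is to be expected --- when $\gamma=0$ the recurrence (\ref{8762119084312342}) is already the Gale--Robinson equation (\ref{8800987854327}) with $(N,p,q)=(7,1,2)$, so no further reduction is needed --- and accordingly the displayed formulas are to be read as identities of rational functions on the locus $\gamma\neq0$. Once the dictionary is set up, the remaining manipulations are elementary.
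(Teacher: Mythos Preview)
Your approach coincides with the paper's: the paper also derives Proposition~\ref{76540} by combining Lemma~\ref{675432} with Proposition~2.1 of \cite{Hone1}, exactly as you do. The paper gives no further detail on the coefficient identification beyond stating the result, so your more explicit plan for extracting the $\alpha_i$ via the dictionary in the Remark after (\ref{675230976}) (and the suggested computer-algebra check) goes somewhat beyond what the paper records, but the underlying argument is the same.
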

\begin{remark}
Obviously, the converse, generally speaking, is not true. Note that a sequence satisfying (\ref{8762119084312342}) may also  satisfies  bilinear relation 
\[
t_{n}t_{n+N}=\sum_{j=1}^{\left\lfloor N/2 \right\rfloor} \alpha_j t_{n+j} t_{n+N-j}
\]
of higher order than 9, but the value 9 is apparently the lowest one.
\end{remark}

\subsection{$g=3$. Recurrence (\ref{00432000985436})}

Similar proposition concerning the Laurent property for  recurrence (\ref{00432000985436}) which we  rewrite in the form 
\begin{equation}
t_{n+3}t_{n+4}t_{n+5}t_{n+6}\left(t_nt_{n+9}-\alpha t_{n+1}t_{n+8}-\beta t_{n+2}t_{n+7}\right)=\gamma_1 B_{n+1, 1}+\gamma_2 B_{n+1, 2},
\label{8875409327}
\end{equation}
where $B_{n, 1}$ and $B_{n, 2}$ are given by (\ref{896432000985436}) and (\ref{89677766}), respectively. Here we have designated $\alpha=-\mathcal{G}_{3, 0}$, $\beta=\mathcal{G}_{3, 3}$, $\gamma_1=\mathcal{G}_{3, 1}$ and $\gamma_2=-\mathcal{G}_{3, 2}$. Again we see that  recurrence (\ref{8875409327}) is a generalization of the Gale-Robinson equation (\ref{8800987854327}) with $\left(N, p, q\right)=(9, 1, 2)$.  In this case, we are also able to write an equivalent equation that contains the parameters $(\alpha, H_1, H_2, H_3)$ and prove the  Laurent property using the established scheme, but the corresponding formulas look too cumbersome and  we would not like to clutter up the space with these formulas. Let us just formulate the final result.
\begin{proposition}
The recurrence (\ref{8875409327}) has the Laurent property. More precisely, for
all $n\in\mathbb{Z}$, $t_n\in\mathbb{Z}[\alpha, \beta, \gamma, \delta, t_0^{\pm 1}, t_1^{\pm 1}, t_2^{\pm 1}, t_3^{\pm 1}, t_4^{\pm 1}, t_5^{\pm 1}, t_6^{\pm 1}, t_7^{\pm 1}, t_8^{\pm 1}]$.
\end{proposition}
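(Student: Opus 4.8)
For stage (i), I would take the invariant form $u_{n+3}\hat{S}^{3}_{4}(n)=u_{n+4}\hat{S}^{3}_{4}(n+2)$ of equation (\ref{88711111}) with $g=3$, insert the substitution (\ref{896432097}), and clear denominators; this produces a homogeneous order-$9$ recurrence $(\star)$ in $(t_n)$ whose coefficients are one invariant $\alpha$ of (\ref{8875409327}) together with $H_1,H_2,H_3$. In parallel one writes out three invariants $J_1,J_2,J_3$ of (\ref{8875409327}), which are Laurent polynomials in $t_0,\dots,t_8$ over $\mathbb{Z}[\alpha,\beta,\gamma_1,\gamma_2]$, and checks by direct computation (the analogue of Lemmas \ref{9043765} and \ref{675432}) that (\ref{8875409327}) and $(\star)$ have the same solution set under the identification $H_i=J_i$. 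This is exactly the computation the text declines to display: it is finite, but the intermediate polynomials (analogues of $D_n,B_{n,1},B_{n,2}$) are large. Stages (ii) and (iii) then run the established scheme: first prove a restricted Laurent property for $(\star)$ by the Stieltjes-continued-fraction method together with a Hankel-determinant identity, then substitute back.

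For stage (ii) I would prove $t_n\in\mathcal{R}$ for all $n\in\mathbb{Z}$, for a suitable localisation $\mathcal{R}$ of $\mathbb{Z}[\alpha,H_1,H_2,H_3][t_0,\dots,t_8]$ in which the inner variables are inverted and the end ones are not, the precise set of units being forced by the computations below and, one checks, symmetric under $t_i\mapsto t_{8-i}$. Following Section \ref{99832098}, use the birational change of variables $(u_0,u_1,\dots)\leftrightarrow(h_1,h_2,\dots)$ coming from the Stieltjes continued fraction $F_0$, together with Theorem \ref{897643098} for $g=3$, which supplies the quadratic recurrence
\[
h_n=\sum_{j=1}^{3}(-1)^j\bigl(p_j-q_j\bigr)h_{n-j}+\frac12\sum_{k=0}^{3}(-1)^k q_k\sum_{j=1}^{n-k-1}h_jh_{n-j-k},\qquad n\ge 5,
\]
with seed $(h_1,h_2,h_3,h_4)$ birationally related to $(r_1,\dots,r_4)$. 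Using $u_n=t_nt_{n+3}/(t_{n+1}t_{n+2})$ one computes the seed from $h_{0,1}=u_0,\ h_{0,2}=u_0u_1,\dots$ and finds $h_1,\dots,h_4\in t_0\mathcal{R}$; one also re-expresses the coefficients $p_j-q_j$ and $q_j$ through the birational data, eliminating the future variables with the invariants $\mathcal{G}_{3,0},\dots,\mathcal{G}_{3,3}$ so that only $h_1,\dots,h_4$ and the parameters remain, and finds the linear coefficients in $\mathcal{R}$ and the quadratic ones $q_j/2$ in $t_0^{-1}\mathcal{R}$ (with $q_0=2$). Induction on $n$ then gives $h_n\in t_0\mathcal{R}$ for all $n\ge1$: each linear term lies in $\mathcal{R}\cdot t_0\mathcal{R}\subseteq t_0\mathcal{R}$, each quadratic term in $t_0^{-1}\mathcal{R}\cdot(t_0\mathcal{R})^2\subseteq t_0\mathcal{R}$. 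An analogue of Lemma \ref{66654320987} — in essence the same identity, since it depends only on (\ref{896432097}) and the continued fraction, not on the order of the recurrence — then writes $t_n$ as an explicit monomial in $t_0,t_1,t_2$ times a Hankel determinant $\Delta_k=\det(h_{i+j+c})_{i,j=1}^{k}$ of $(h_j)$; a $k\times k$ Hankel determinant of elements of $t_0\mathcal{R}$ lies in $t_0^{k}\mathcal{R}$ while the monomial carries $t_0^{-k}$, so $t_n\in\mathcal{R}$ for $n\ge0$, and the reversing symmetry $(t_0,\dots,t_8)\mapsto(t_8,\dots,t_0)$ shared by $(\star)$ and (\ref{8875409327}) identifies $t_{-n}$ with $t_{n+8}$, settling the negative indices.

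Stage (iii) is immediate: a generic solution of (\ref{8875409327}) with seed $t_0,\dots,t_8$ has $H_i=J_i(t_0,\dots,t_8)\in\mathbb{Z}[\alpha,\beta,\gamma_1,\gamma_2,t_0^{\pm1},\dots,t_8^{\pm1}]$, so by stage (ii) $t_n\in\mathcal{R}$, and specialising $H_i\mapsto J_i$ and inverting all of $t_0,\dots,t_8$ lands $t_n$ in $\mathbb{Z}[\alpha,\beta,\gamma_1,\gamma_2,t_0^{\pm1},\dots,t_8^{\pm1}]$, which is the claim. The main obstacle I expect is stage (i): producing $(\star)$ and, above all, the three invariants $J_1,J_2,J_3$ in a form concrete enough to compute with, and then verifying the equivalence. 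A secondary difficulty is the bookkeeping in stage (ii) — re-expressing the coefficients of the $h$-recurrence without future variables, and pinning down the correct reversing-symmetric localisation $\mathcal{R}$ so that the concluding specialisation of stage (iii) really does land in $\mathbb{Z}[\alpha,\beta,\gamma_1,\gamma_2,t_0^{\pm1},\dots,t_8^{\pm1}]$.
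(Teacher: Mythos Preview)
Your proposal is correct and follows essentially the same approach as the paper: the paper explicitly states that for $g=3$ one writes an equivalent recurrence in the parameters $(\alpha, H_1, H_2, H_3)$ and proves the Laurent property ``using the established scheme'' (i.e., the Stieltjes--Hankel argument of Proposition~\ref{98543209} and Proposition~\ref{897432}), then transfers back via the invariants, but suppresses the formulas as too cumbersome. Your three stages (i)--(iii) are exactly this scheme, spelled out in the detail the paper omits.
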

\begin{remark}
There must exist some smallest order $N$ such that a kind of Proposition \ref{76540} is true concerning the recurrence (\ref{8875409327}). Based on actual calculations, we assume that in this case $N=17$.
\end{remark}

\subsection{Examples of integer sequences}

The Laurent property guarantees the integerness of the members of sequence satisfying  (\ref{887854327}), (\ref{8762119084312342}) or (\ref{8875409327}) with a suitable choice of parameters included in these relations and initial conditions. Let us show some integer sequences satisfying these homogeneous equations.
\begin{example}
Let $\alpha=\beta=1$ and $t_n=1$ for $0\leq n\leq 4$. In this case, a sequence defined by the Somos-5 recurrence (\ref{887854327}), starting from $t_5$,  continues like this:
\[
2,\; 3,\; 5,\; 11,\; 37,\; 83,\; 274,\;  1217,\; 6161,\; 22833,\; 165713,\; 1249441,\ldots 
\]
This sequence is known as \textrm{A006721} in OEIS. It is important to note that it has a  variety of different applications (see, for exaple \cite{Buchholz}, \cite{Eager}, \cite{Fordy}).
\end{example}
\begin{example}
Let $\alpha=\beta=\gamma=1$ and $t_n=1$ for $0\leq n\leq 6$. The  sequence defined by the recurrence (\ref{8762119084312342}), starting from $t_7$,  continues like this:
\[
5,\; 13,\; 61,\; 185,\; 533,\; 3973,\; 56161,\; 581197,\; 10521521,\; 75458197,\ldots
\]
The first six prime numbers in this sequence are 
\[
t_7=5,\; t_8=13,\; t_9=61,\; t_{14}=581197,\; t_{16}=75458197,
\]
\[
t_{28}=471568036896759616693823298697.
\]
\end{example}
\begin{example}
Let $\alpha=\beta=\gamma_1=\gamma_2=1$ and $t_n=1$ for $0\leq n\leq 8$. The  sequence defined by the recurrence (\ref{8875409327}), starting from $t_9$,  continues like this:
\[
13,\; 73,\; 937,\; 5437,\; 63853,\; 458353,\; 2506657,\; 64527157,\ldots
\]
The first seven prime numbers in this sequence are
\[
t_9=13,\; t_{10}=73,\; t_{11}=937,\; t_{12}=5437,\; t_{13}=63853,\; t_{15}=2506657,
\]
\[
t_{39}=1319274733570152642028977015416314389776808557991387771172617237650793241.
\]
\end{example}

\section{Final remarks}

Let us spend  some lines slightly deviating from the main content of the article to make some remarks. 

\subsection{On discrete polynomials}

In this work we mainly use discrete polynomials $S^k_s(n)$, but their companions $T^k_s(n)$ can also be used in some situations.
It is known that polynomials $S^k_s(n)$ and $T^k_s(n)$  are connected to each other via the identities  \cite{Svinin1}
\[
S^k_s(n)+\sum_{j=1}^{k-1} (-1)^j S^{k-j}_s(n+j)T^j_s(n)+(-1)^kT^k_s(n)=0 
\]
and
\[
T^k_s(n)+\sum_{j=1}^{k-1} (-1)^j T^{k-j}_s(n+j)S^j_s(n)+(-1)^kS^k_s(n)=0,\;\; k, s\geq 1. 
\]
Perhaps discrete polynomials $S^k_s(n)$ and $T^k_s(n)$  should be understood as a kind of analogue of the complete and elementary symmetric polynomials $h_k$ and $e_k$, respectively, that satisfy an identity
\[
e_k+\sum_{j=1}^{k-1} (-1)^j e_{k-j}h_j+(-1)^kh_k=0,\;\; \forall  k\geq 1.
\]
It is known that symmetric polynomials satisfy many identities and it might be interesting to search for analogues of these identities for discrete polynomials $S^k_s(n)$ and $T^k_s(n)$. Well, as an example, in  \cite{Svinin3}  the identities 
\[
S^k_s(n)=(-1)^k\sum_{K} (-1)^p T^{k_1}_s(n) T^{k_2}_s(n+k_1)\cdots T^{k_p}_s(n+k_1+\cdots+k_{p-1})
\]
and
\[
T^k_s(n)=(-1)^k\sum_{K} (-1)^p S^{k_1}_s(n) S^{k_2}_s(n+k_1)\cdots S^{k_p}_s(n+k_1+\cdots+k_{p-1}),
\]
for all $s\geq 1,$ has been presented.  The summation in these two formulas is performed over all compositions $K= (k_1,\ldots, k_p)$ of number $k$ with $k_j\geq 1$. For instance, in the case $k=2$, we have the identities
\[
S^2_s(n)=-T^2_s(n)+T^1_s(n)T^1_s(n+1)\;\; \mbox{and}\;\; T^2_s(n)=-S^2_s(n)+S^1_s(n)S^1_s(n+1).
\]

We also note that in  \cite{Svinin1} broader classes of discrete polynomials were actually defined and used. For instance, given any natural number $q$, let
\[
S^k_s(n)=\sum_{0\leq \lambda_1\leq \cdots\leq  \lambda_k\leq s-1} \prod_{j=1}^k u_{n+\lambda_j+(k-j)q}. 
\]
These polynomials can  be defined again using a pseudo-difference formal operator a kind of  (\ref{67888}). Namely, let us write
\[
\mathcal{T}=\Lambda^{-1} + u_{n-1} x \Lambda^{-1-q}.
\]
The formula for determining $S^k_s(n)$, in this more general case, has changed slightly compared to (\ref{678760}) and looks like 
\[
\mathcal{T}^{-s}=\Lambda^s+\sum_{j\geq 1} (-1)^j x^j S^j_s(n-(j-1)q) \Lambda^{s-jq}.
\]
One of the results of the paper \cite{Svinin1} is that it has been shown  that the integrable hierarchy of evolution differential-difference equation
\begin{equation}
\frac{d u_n}{d t}=u_n\left(\sum_{j=1}^q u_{n+j}- \sum_{j=1}^q u_{n-j}\right)
\label{67587776542}
\end{equation}
can be written in explicit form as
\begin{equation}
\frac{d u_n}{d t_s}=u_n\left(S^s_{sq}(n-(s-1)q+1)-S^s_{sq}(n-sq)\right).
\label{6777772}
\end{equation}

Generally speaking, the goal of the work \cite{Svinin1} was that equation (\ref{67587776542}) and its hierarchy of generalized symmetries as a whole can be obtained as a result of a suitable reduction to a bi-infinite sequence of Kadomtsev-Petviashvili  hierarchies presented in a convenient form. Also within the framework of this approach, it turned out to be possible to obtain an infinite classes of constraints that are compatible with the  hierarchy of  symmetries (\ref{6777772}). Later in \cite{Svinin3} these ideas has been extended to many systems of evolution equations sharing the property of having a representation in the form of a Lax pair. 

As a result, explicit forms of their integrable hierarchies were obtained in terms of suitable discrete polynomials. One way or another, for each specific case, it is necessary to find a suitable pseudo-difference operator like (\ref{67888}). We will not describe everything in detail further and refer the reader to  \cite{Svinin3} for details. 

\subsection{On homogeneous equations}

Let us also say a few words about the homogeneous equations that arise in our work. A known  proof of the Laurent property for the Gale-Robinson recurrence (\ref{8800987854327}), including the Somos-5 equation (\ref{887854327}), is closely related to its application in the theory of cluster algebras \cite{Fomin}. The recurrences (\ref{8762119084312342}) and (\ref{8875409327}) are apparently not related to the theory of cluster algebras. It would be interesting to find applications for them and prove the Laurent property in a more meaningful way.

Inspired by the examples given above, it can be assumed that for any $g\geq 2$ the homogeneous recurrence on the sequence $\left(t_n\right)$ might look like
\begin{equation}
\prod_{j=3}^{2g} t_{n+j}\cdot \left(t_nt_{n+2g+3}-\alpha t_{n+1}t_{n+2g+2}-\beta t_{n+2}t_{n+2g+1}\right)=\sum_{j=1}^{g-1}\gamma_j B_{n+1, j},
\label{8878886500097}
\end{equation}
where $\left(B_{n, 1},\ldots, B_{n, g-1}\right)$ is supposed to be  a finite set of homogeneous polynomials in variables $\left(t_n,\ldots, t_{n+2g+1}\right)$ of degree $2g$. These polynomials must be symmetric in such a way  the hypothetical recurrence (\ref{8878886500097}) to be invariant under the reversing $\left(t_n, t_{n+1},\ldots, t_{n+2g+3}\right)\rightarrow \left(t_{n+2g+3}, t_{n+2g+2},\ldots, t_n\right)$. Thus we assume that these recurrences must represent a kind of generalization of the Gale-Robinson equation (\ref{8800987854327}) of type $\left(2g+3, 1, 2\right)$.

\section*{Acknowledgments}

The results were obtained within the framework of the state assignment of the Ministry of Education and Science of the Russian Federation on the project No. 121041300058-1.

\end{document}